\newcommand\drawsymbol[3]{\node at (#1,#2) {\tiny #3};}
\def\xspacing{0.3}
\def\xmetaspacing{2.4}
\newcommand\draweightsymbols[3]{
  \def\tempx{#1}
  \def\tempy{#2}
  \draweightsymbolscontinued#3
}
\def\draweightsymbolscontinued#1,#2,#3,#4,#5,#6,#7,#8{
    \drawsymbol{\tempx+\xspacing*0}{\tempy}{#1}
    \drawsymbol{\tempx+\xspacing*1}{\tempy}{#2}
    \drawsymbol{\tempx+\xspacing*2}{\tempy}{#3}
    \drawsymbol{\tempx+\xspacing*3}{\tempy}{#4}
    \drawsymbol{\tempx+\xspacing*4}{\tempy}{#5}
    \drawsymbol{\tempx+\xspacing*5}{\tempy}{#6}
    \drawsymbol{\tempx+\xspacing*6}{\tempy}{#7}
    \drawsymbol{\tempx+\xspacing*7}{\tempy}{#8}
}
\newcommand\drawmetasymbol[3]{
  \ifx1#3
   \draweightsymbols{#1*\xmetaspacing}{#2}{1,0,1,1,0,1,1,0}
  \fi
  \ifx2#3
    \draweightsymbols{#1*\xmetaspacing}{#2}{0,0,0,0,1,0,0,1}
  \fi
  \ifx3#3
    \draweightsymbols{#1*\xmetaspacing}{#2}{1,0,1,1,1,0,1,0}
  \fi
  \ifx4#3
   \draweightsymbols{#1*\xmetaspacing}{#2}{0,1,1,0,0,1,1,0}
  \fi
  \ifx5#3
    \draweightsymbols{#1*\xmetaspacing}{#2}{1,1,0,1,0,0,0,1}
  \fi
  \ifx6#3
    \draweightsymbols{#1*\xmetaspacing}{#2}{0,1,1,0,0,0,1,1}
  \fi
  \ifx7#3
   \draweightsymbols{#1*\xmetaspacing}{#2}{1,1,0,1,0,0,0,0}
  \fi
  \ifx8#3
   \draweightsymbols{#1*\xmetaspacing}{#2}{1,0,1,0,1,0,1,1}
  \fi
  \ifx0#2
    \node at (#1*\xmetaspacing+3.5*\xspacing,0) {};
  \fi
  \ifx1#2
    \node at (#1*\xmetaspacing+3.5*\xspacing,1.4) {};
  \fi
}
\def\parsepos#1#2,#3{\def #1{#2*\xmetaspacing + #3*\xspacing}}
\newcommand\drawmatch[2]{
   \parsepos\posA#1
   \parsepos\posB#2
   \draw (\posA,0.1) -- (\posB,0.9);
}
\newcommand\drawmatchlight[2]{
   \parsepos\posA#1
   \parsepos\posB#2
   \draw[dashed] (\posA,0.1) -- (\posB,0.9);
}
\newcommand\drawmatchgreen[2]{
   \parsepos\posA#1
   \parsepos\posB#2
   \draw[line width=2, color=black!60!green] (\posA,0.1) -- (\posB,0.9);
}
\def\ind{0.11}
\def\drawpart#1,#2--#3:#4,#5--#6{
 \fill[top color=blue!20, bottom color=red!20] 
    (#1*\xmetaspacing+#2*\xspacing-\ind,-\ind) -- 
    (#1*\xmetaspacing+#2*\xspacing-\ind,+\ind) --
    (#4*\xmetaspacing+#5*\xspacing-\ind,1-\ind) -- 
    (#4*\xmetaspacing+#5*\xspacing-\ind,1+\ind) --
    (#4*\xmetaspacing+#6*\xspacing+\ind,1+\ind) -- 
    (#4*\xmetaspacing+#6*\xspacing+\ind,1-\ind) --
    (#1*\xmetaspacing+#3*\xspacing+\ind,\ind) -- 
    (#1*\xmetaspacing+#3*\xspacing+\ind,-\ind) -- 
    cycle;
}
\def\ind{0.11}
\newcommand\drawpartg[8]{
 \fill[green!20] 
    (#1*\xmetaspacing+#2*\xspacing-\ind,-\ind) -- 
    (#1*\xmetaspacing+#2*\xspacing-\ind,+\ind) --
    (#5*\xmetaspacing+#6*\xspacing-\ind,1-\ind) -- 
    (#5*\xmetaspacing+#6*\xspacing-\ind,1+\ind) --
    (#7*\xmetaspacing+#8*\xspacing+\ind,1+\ind) -- 
    (#7*\xmetaspacing+#8*\xspacing+\ind,1-\ind) --
    (#3*\xmetaspacing+#4*\xspacing+\ind,\ind) -- 
    (#3*\xmetaspacing+#4*\xspacing+\ind,-\ind) -- 
    cycle;
}
\def\ind{0.11}
\newcommand\drawpartgg[8]{
 \fill[green] 
    (#1*\xmetaspacing+#2*\xspacing-\ind,-\ind) -- 
    (#1*\xmetaspacing+#2*\xspacing-\ind,+\ind) --
    (#5*\xmetaspacing+#6*\xspacing-\ind,1-\ind) -- 
    (#5*\xmetaspacing+#6*\xspacing-\ind,1+\ind) --
    (#7*\xmetaspacing+#8*\xspacing+\ind,1+\ind) -- 
    (#7*\xmetaspacing+#8*\xspacing+\ind,1-\ind) --
    (#3*\xmetaspacing+#4*\xspacing+\ind,\ind) -- 
    (#3*\xmetaspacing+#4*\xspacing+\ind,-\ind) -- 
    cycle;
}
\def\ind{0.11}
\newcommand\drawpartby[8]{
 \fill[top color=blue!30, bottom color=yellow!30] 
    (#1*\xmetaspacing+#2*\xspacing-\ind,-\ind) -- 
    (#1*\xmetaspacing+#2*\xspacing-\ind,+\ind) --
    (#5*\xmetaspacing+#6*\xspacing-\ind,1-\ind) -- 
    (#5*\xmetaspacing+#6*\xspacing-\ind,1+\ind) --
    (#7*\xmetaspacing+#8*\xspacing+\ind,1+\ind) -- 
    (#7*\xmetaspacing+#8*\xspacing+\ind,1-\ind) --
    (#3*\xmetaspacing+#4*\xspacing+\ind,\ind) -- 
    (#3*\xmetaspacing+#4*\xspacing+\ind,-\ind) -- 
    cycle;
}
\def\ind{0.11}
\newcommand\drawpartbyby[8]{
 \fill[top color=blue!60, bottom color=yellow!60] 
    (#1*\xmetaspacing+#2*\xspacing-\ind,-\ind) -- 
    (#1*\xmetaspacing+#2*\xspacing-\ind,+\ind) --
    (#5*\xmetaspacing+#6*\xspacing-\ind,1-\ind) -- 
    (#5*\xmetaspacing+#6*\xspacing-\ind,1+\ind) --
    (#7*\xmetaspacing+#8*\xspacing+\ind,1+\ind) -- 
    (#7*\xmetaspacing+#8*\xspacing+\ind,1-\ind) --
    (#3*\xmetaspacing+#4*\xspacing+\ind,\ind) -- 
    (#3*\xmetaspacing+#4*\xspacing+\ind,-\ind) -- 
    cycle;
}
\newcommand\drawpartbynoind[8]{
 \fill[top color=blue!30, bottom color=yellow!30] 
    (#1*\xmetaspacing+#2*\xspacing,0) -- 
    (#1*\xmetaspacing+#2*\xspacing,0) --
    (#5*\xmetaspacing+#6*\xspacing,1) -- 
    (#5*\xmetaspacing+#6*\xspacing,1) --
    (#7*\xmetaspacing+#8*\xspacing,1) -- 
    (#7*\xmetaspacing+#8*\xspacing,1) --
    (#3*\xmetaspacing+#4*\xspacing,0) -- 
    (#3*\xmetaspacing+#4*\xspacing,0) -- 
    cycle;
}
\def\ind{0.11}
\newcommand\drawpartyb[8]{
 \fill[top color=yellow!30, bottom color=blue!30] 
    (#1*\xmetaspacing+#2*\xspacing-\ind,-\ind) -- 
    (#1*\xmetaspacing+#2*\xspacing-\ind,+\ind) --
    (#5*\xmetaspacing+#6*\xspacing-\ind,1-\ind) -- 
    (#5*\xmetaspacing+#6*\xspacing-\ind,1+\ind) --
    (#7*\xmetaspacing+#8*\xspacing+\ind,1+\ind) -- 
    (#7*\xmetaspacing+#8*\xspacing+\ind,1-\ind) --
    (#3*\xmetaspacing+#4*\xspacing+\ind,\ind) -- 
    (#3*\xmetaspacing+#4*\xspacing+\ind,-\ind) -- 
    cycle;
}
\def\ind{0.11}
\newcommand\drawpartybyb[8]{
 \fill[top color=yellow!60, bottom color=blue!60] 
    (#1*\xmetaspacing+#2*\xspacing-\ind,-\ind) -- 
    (#1*\xmetaspacing+#2*\xspacing-\ind,+\ind) --
    (#5*\xmetaspacing+#6*\xspacing-\ind,1-\ind) -- 
    (#5*\xmetaspacing+#6*\xspacing-\ind,1+\ind) --
    (#7*\xmetaspacing+#8*\xspacing+\ind,1+\ind) -- 
    (#7*\xmetaspacing+#8*\xspacing+\ind,1-\ind) --
    (#3*\xmetaspacing+#4*\xspacing+\ind,\ind) -- 
    (#3*\xmetaspacing+#4*\xspacing+\ind,-\ind) -- 
    cycle;
}
\newcommand\drawpartybnoind[8]{
 \fill[top color=yellow!30, bottom color=blue!30] 
    (#1*\xmetaspacing+#2*\xspacing,0) -- 
    (#1*\xmetaspacing+#2*\xspacing,0) --
    (#5*\xmetaspacing+#6*\xspacing,1) -- 
    (#5*\xmetaspacing+#6*\xspacing,1) --
    (#7*\xmetaspacing+#8*\xspacing,1) -- 
    (#7*\xmetaspacing+#8*\xspacing,1) --
    (#3*\xmetaspacing+#4*\xspacing,0) -- 
    (#3*\xmetaspacing+#4*\xspacing,0) -- 
    cycle;
}
\newcommand\drawpartgraynoind[8]{
 \fill[top color=gray!30, bottom color=gray!30] 
    (#1*\xmetaspacing+#2*\xspacing,0) -- 
    (#1*\xmetaspacing+#2*\xspacing,0) --
    (#5*\xmetaspacing+#6*\xspacing,1) -- 
    (#5*\xmetaspacing+#6*\xspacing,1) --
    (#7*\xmetaspacing+#8*\xspacing,1) -- 
    (#7*\xmetaspacing+#8*\xspacing,1) --
    (#3*\xmetaspacing+#4*\xspacing,0) -- 
    (#3*\xmetaspacing+#4*\xspacing,0) -- 
    cycle;
}
\def\subseqpic{
\drawmetasymbol{0}{1}{1}
\drawmetasymbol{1}{1}{2}
\drawmetasymbol{2}{1}{3}
\drawmetasymbol{3}{1}{4}
\drawmetasymbol{0}{0}{5}
\drawmetasymbol{1}{0}{6}
\drawmetasymbol{2}{0}{7}
\drawmetasymbol{3}{0}{8}
}
\newcommand\toplabel[2]{
\node[inner sep=0,outer sep=0, label={[anchor=south]\tiny$#2$}] at (#1*\xspacing,1) {};
\draw (#1*\xspacing,1.1)--(#1*\xspacing,1.0);
}
\newcommand\bottomlabel[2]{
\node[inner sep=0,outer sep=0, label={[anchor=north]\tiny$#2$}] at (#1*\xspacing,0) {};
\draw (#1*\xspacing,-0.1)--(#1*\xspacing,0);
}
\newcommand\toplabelshift[2]{
\node[inner sep=0,outer sep=0, label={[anchor=south]\tiny$#2$}] at (#1,1) {};
\draw (#1,1.1)--(#1,1);
}
\newcommand\bottomlabelshift[2]{
\node[inner sep=0,outer sep=0, label={[anchor=north]\tiny$#2$}] at (#1,0) {};
\draw (#1,-0.1)--(#1,0);
}
\title{The zero-rate threshold for adversarial bit-deletions is less than $\tfrac{1}{2}$\footnote{An extended abstract of this work was presented at the 2021 Foundations of Computer Science (FOCS) conference. This is an expanded version including all the proofs.}}
\author{Venkatesan Guruswami\thanks{Computer Science Department, Carnegie Mellon University. venkatg@cs.cmu.edu. Research supported in part by NSF grants CCF-1814603 and a Simons Investigator Award.} \and Xiaoyu He\thanks{Department of Mathematics, Princeton University. xiaoyuh@princeton.edu. Research supported by NSF Grant DGE-1656518.} \and Ray Li\thanks{Department of Computer Science, Stanford University. rayyli@cs.stanford.edu. Research supported by NSF Grants DGE-1656518, CCF-1814629, and by Jacob Fox's Packard Fellowship.}}
\date{}
\global\long\def\I{\mathcal{I}}%
\global\long\def\rev{\textnormal{rev}}%
\global\long\def\floor#1{\lfloor#1\rfloor}%
\global\long\def\ceil#1{\lceil#1\rceil}%
\global\long\def\eps{\varepsilon}%
\global\long\def\Drop{\textnormal{Trim}}%
\global\long\def\ZZ{\mathbb{Z}}%
\global\long\def\indicator{\mathbbm{1}}%
\def\thr{\text{thr}}
\def\del{\text{del}}
\begin{document}

\maketitle
\vspace{-0.7cm}
\begin{abstract}
  We prove that there exists an absolute constant $\delta>0$ such that any binary code $C\subset\{0,1\}^N$ tolerating $(1/2-\delta)N$ adversarial deletions must satisfy $|C|\le 2^{\poly\log N}$ and thus have rate asymptotically approaching 0.
  This is the first constant fraction improvement over the trivial bound that codes tolerating $N/2$ adversarial deletions must have rate going to 0 asymptotically.
  Equivalently, we show that there exists absolute constants $A$ and $\delta>0$ such that any set $C\subset\{0,1\}^N$ of $2^{\log^A N}$ binary strings must contain two strings $c$ and $c'$ whose longest common subsequence has length at least $(1/2+\delta)N$.
  As an immediate corollary, we show that $q$-ary codes tolerating a fraction $1-(1+2\delta)/q$ of adversarial deletions must also have rate approaching 0.
  \smallskip
  
   Our techniques include string regularity arguments and a structural lemma that classifies binary strings by their oscillation patterns.  Leveraging these tools, we find in any large code two strings with similar oscillation patterns, which is exploited to find a long common subsequence.

\end{abstract}
\vspace{-0.3cm}
{\footnotesize
\tableofcontents}

\thispagestyle{empty}

\newpage

\section{Introduction}

This work considers the limits of reliable communication against an adversarial deletion channel. Suppose we are to transmit $N$ bits on a channel that can adversarially delete a fraction $p$ of the bits, leaving the receiver with a subsequence of length $(1-p)N$. Crucially, the receiver does not know the location of the deleted bits. We would like to achieve zero-error communication over such an adversarial deletion channel. 
To do so, we restrict the sequence of transmitted bits to a subset $C \subset \{0,1\}^N$ so that every $x \in C$ can be unambiguously identified from an arbitrary subsequence of $x$ of length $(1-p)N$. It is easy to see that this property is equivalent to the property that for every two distinct codewords $x,y \in C$, the length of their longest common subsequence, denoted $\LCS(x,y)$, is less than $(1-p)N$. 

Defining $\LCS(C)$ to be the largest value of $\LCS(x,y)$ over all distinct pairs $x,y \in C$, we therefore call a subset $C \subseteq \{0,1\}^N$ a \emph{$p$-deletion correcting code} if $\LCS(C) < (1-p)N$.
The rate of such a code, defined as $R(C) = (\log_2 |C|)/N$, measures the average number of information bits communicated per transmitted codeword bit. If there exists a family of such codes $C$ whose rates are bounded away from zero as $N \to \infty$, we say it is possible to achieve a non-vanishing rate of information communication. 

For any noise model of interest, a fundamental question is to understand its ``capacity,'' i.e., the information-theoretically optimal trade-off between rate and noise level. 
This is often very challenging and open in most cases of interest in zero-error information theory. A less demanding, and still very fundamental, goal is to understand the threshold noise level below which it is possible to communicate with non-vanishing rate.  For example, it is well-known that for a channel that flips an adversarially chosen set of at most $pN$ bits, the threshold value for the error-fraction $p$ equals $\tfrac14$. 
On the other hand, for the adversarial deletions channel, this fundamental question remains unsolved:
\begin{quote}
    \emph{What is the largest fraction of deletions $p\in(0,1)$ for which it is possible to achieve zero-error communication with non-vanishing information rate?}
\end{quote}
Formally, define the \emph{zero-rate threshold} of adversarial bit-deletions to be 
\begin{align}
    p^{\thr}_{\del} & := \sup \{ p \mid \text{$\exists \ \alpha_p > 0$ s.t for infinitely many $N$ there is a subset $C\subset \{0,1\}^N$ with} \nonumber \\ &  \qquad\qquad\quad \LCS(C) < (1-p)N \text{ and } |C|\ge 2^{\alpha_p N} \} \ . 
\label{eq:thr-del-defn}
\end{align}

The above question is then: What is the value of $p^{\thr}_{\del}$? We have a trivial upper bound $p^{\thr}_{\del} \le 1/2$. Indeed, among any three strings $x,y,z \in \{0,1\}^N$, there must be two with the same majority bit, and thus a common subsequence (of all $0$'s or all $1$'s) of length at least $N/2$. Thus any $\sfrac{1}{2}$-deletion correcting code $C \subset \{0,1\}^N$ satisfies $|C| \le 2$.

The value of $p^{\thr}_{\del}$ remains unknown. Even more starkly, as simplistic as the above argument is, it was not previously known if $p^{\thr}_{\del}$ is strictly bounded away from $\sfrac{1}{2}$, or whether there are in fact codes of non-vanishing rate for correcting a $(\sfrac12-\delta)$ fraction of deletions for any desired $\delta > 0$. This tantalizing question was implicit in early works on deletion codes, particularly in \cite{U67}, which gave bounds on the achievable tradeoffs between rate and deletion fraction, and was explicitly raised in \cite{KMTU11}. Since then, this question has been mentioned in several works, including the work of Bukh and Ma~\cite{bukh_ma} which showed that an upper bound of $\tfrac12-\tfrac{1}{\poly \log  N}$ on the correctable deletion fraction, and many recent works on deletion code constructions such as \cite{Wang15,GL-isit16, GW17,BGH17,GL-oblivious,GHS-stoc20}, other works on coding theory \cite{ZBJ20}, as well as the recent surveys~\cite{CR-survey,HS-survey}. 
In the other direction, the best known lower bound $p^{\thr}_{\del} > \sqrt{2} - 1$ is due to \cite{BGH17}, who constructed explicit binary codes of non-vanishing rate to correct a fraction of deletions approaching $\sqrt{2}-1$ (see \cite{SZ99, KMTU11} for prior constructions).

\subsection{Our results}
In this work, we prove the first nontrivial upper bound on $p^{\thr}_{\del}$.
\begin{theorem}
  \label{thm:main-intro}
  There exists an absolute constant $\delta_0 > 0$ such that $p^{\thr}_{\del} \le \tfrac{1}{2}-\delta_0$.  More concretely, there exist absolute constants $A, \delta_0>0$ such that for all large enough $N$, any binary code $C\subset\{0,1\}^N$ tolerating $(\tfrac12-\delta_0)N$ adversarial deletions must satisfy $|C|\le 2^{(\log N)^A}$. 
\end{theorem}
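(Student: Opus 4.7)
The plan is to prove the contrapositive: any code $C \subseteq \{0,1\}^N$ with $|C| > 2^{(\log N)^A}$ must contain two codewords $x, y$ satisfying $\textnormal{LCS}(x,y) \geq (\tfrac12+\delta_0) N$. The strategy is to assign each codeword a \emph{profile} such that (i) the number of distinct profiles is at most $2^{\textnormal{polylog}(N)}$, so that pigeonhole on $|C|$ produces $x, y$ with identical profiles, and (ii) two codewords sharing a profile admit a common subsequence of length $(\tfrac12+\delta_0)N$.

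\textbf{Step 1: Reduction to balanced codewords.} If two codewords have $1$-density $> \tfrac12+\delta_0$, then they share a common all-$1$ subsequence of length $> (\tfrac12+\delta_0)N$; symmetrically for $0$-density. So after discarding at most two codewords we may assume every $c \in C$ has $1$-density in $[\tfrac12-\delta_0, \tfrac12+\delta_0]$.

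\textbf{Step 2: Multi-scale oscillation profile.} Fix a geometric sequence of scales $N = \ell_0 > \ell_1 > \cdots > \ell_t = 1$ with $t = O(\log N)$. At each scale $\ell_i$ partition the string into blocks of length $\ell_i$, and record a quantized descriptor per block capturing the local $1$-density and number of runs. The concatenation across scales is the \emph{oscillation profile}; tuning the number of blocks per scale and the quantization granularity to be polylogarithmic in $N$ makes the profile only $\textnormal{polylog}(N)$ bits long. Taking $A$ larger than this exponent, pigeonhole produces $x, y \in C$ sharing a common profile.

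\textbf{Step 3: Matched profiles force long LCS.} Given matched $x, y$, we construct a common subsequence scale-by-scale. At the coarsest scale, aligned blocks of $x$ and $y$ have the same majority bit, so the majority-only subsequence contributes $\sim N/2$. The extra $\delta_0 N$ must come from the finer-scale matches: aligned sub-blocks share densities and run counts, which ought to force structural alignments permitting additional minority-bit matches at consistent positions in $x$ and $y$. Summing these contributions over all $O(\log N)$ scales should yield the target length $(\tfrac12+\delta_0)N$.

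\textbf{Main obstacle.} Step 3 is the heart of the argument. All straightforward density-imbalance gains have been exhausted in Step 1, so a constant-factor improvement over $N/2$ must arise from genuinely multi-scale structural rigidity. The tension is in calibrating the profile to be simultaneously coarse enough for a $2^{\textnormal{polylog}(N)}$ classification and fine enough to force $\Omega(N)$ extra matches in any pair of balanced strings sharing it: naive single-scale choices yield only a $1/\textnormal{polylog}(N)$ improvement, exactly matching the prior Bukh--Ma bound. I expect the ``structural lemma'' alluded to in the abstract is precisely the device that organizes this recursion---a classification of balanced strings by oscillation behavior under which two same-class strings provably admit a constant-factor LCS surplus over $N/2$.
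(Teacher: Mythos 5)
Your Step 3 is a placeholder, not a proof, and you say so yourself---but the specific recipe you propose in Steps 2--3 would also not work as stated, and the gap is worth naming concretely. First, the profile you suggest (quantized density and run counts per block at $O(\log N)$ scales) is not the right invariant: two balanced strings can agree on densities and run counts at every dyadic scale yet fail to admit any common subsequence longer than $(1/2 + o(1))N$, because run counts do not record \emph{where} the long runs sit relative to each other, and a greedy one-matching strategy gives nothing extra from knowing only local statistics. Second, and more fundamentally, your plan to ``sum contributions over all $O(\log N)$ scales'' cannot produce a constant-factor surplus: each scale, treated independently, contributes at most a $1/\mathrm{polylog}\,N$ gain (this is exactly the Bukh--Ma regime you mention), and summing $O(\log N)$ such gains still lands you at $o(N)$ unless the gains compound in a way you have not established. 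A constant surplus must come from a single scale at which the string genuinely oscillates, and you have no mechanism to detect or exploit such a scale.

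The paper's resolution is qualitatively different from the recursion you envision. Rather than aggregating gains across all scales, it proves a \emph{structure lemma} classifying each balanced substring into one of three cases: (i) some linear-size sub-interval is imbalanced; (ii) there is a \emph{single} oscillation period $\ell$ such that an $\Omega(1)$ fraction of one-bits are immediately followed by a zero-rich interval of length $\sim\ell$ (``Green flags''); or (iii) a rarer pattern of extreme zero-rich spikes (``Blue flags'') distributed across many scales, paired with a near-absence of zero-poor (``Red'') intervals. In case (ii) one runs a two-pointer scan matching ones, diverting to matching zeros whenever both pointers simultaneously hit Green flags---a random offset guarantees a constant fraction of simultaneous hits, and each hit yields a constant local surplus. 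Case (iii) is handled similarly but with Blue-flags-to-Yellow-flags matchings at multiple scales, which crucially requires an additional regularity (entropy-increment) argument to ensure the Blue flags are spread out so the two pointers stay synchronized, plus a reversal trick to balance the number of one-bits consumed on each side. None of this falls out of matching run counts. The statistics the paper pigeonholes on are the \emph{type} (Imbalanced / $\ell$-Green / $m$-Blue-Yellow) of each dyadic substring together with zero counts, not run counts, and the three different matching strategies---one per type---are the real content of the proof. Your proposal correctly identifies the shape of the problem but is missing the entire mechanism by which a constant surplus is extracted.
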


We show the above in the contrapositive form---in any code $C\subset \{0,1\}^N$ of quasi-polynomial size, we find two codewords $s,t \in C$ with $\LCS(s,t) > (\sfrac12+\delta_0)N$. Furthermore, our proofs obtain an explicit value $\delta_0 \ge 10^{-40}$; we made no attempts to optimize the value of $\delta_0$ but regardless it is very small. Despite the tiny improvement over the trivial $\tfrac12$ bound, the above result is qualitatively significant because it rules out correcting deletion fractions arbitrarily close to $\tfrac12$.

In Section~\ref{sec:overview}, we give an overview of the proof and an outline of this paper. In the remainder of this introduction we survey some generalizations of Theorem~\ref{thm:main-intro} and connections to other problems in coding theory.

\medskip\noindent\textbf{Non-binary alphabets.}
We generalize Theorem~\ref{thm:main-intro} to alphabets of larger size. Let us denote the quantity analogous to \eqref{eq:thr-del-defn} for any fixed alphabet size $q\ge 2$, namely the zero-rate threshold for $q$-ary deletion codes, by $p^{\thr}_{\del}(q)$. The trivial upper bound is $p^{\thr}_{\del}(q) \le 1-1/q$; this corresponds to finding a common sequence of at least $N/q$ repeated $i$'s between two strings that share the same most frequent symbol $i\in \{0,1,\dots,q-1\}$, in any code of size bigger than $q$. Just as in the binary case, no improvement over this trivial bound was previously known. 

For any code $C \subseteq \{0,1,\dots, q-1\}^N$ over an alphabet of size $q > 2$, we may pick some two symbols $i,j$ and a set $C_{i,j}\subseteq C$ of at least $|C|/q^2$ strings whose two most frequent symbols are $i$ and $j$. We can then obtain a binary code $C'\subseteq \{i,j\}^{2N/q}$ by restricting each element of $C_{i,j}$ to a substring of length\footnote{To be pedantic we should write $\lceil 2N/q\rceil$ here. Henceforth we omit floors and ceilings where they are not essential.} $2N/q$ consisting only of $i$'s and $j$'s. Applying the contrapositive form of Theorem~\ref{thm:main-intro} to $C'$, we see that as long as $|C'| > 2^{(\log n)^A}$, some two strings in $C'$ have a common subsequence with length at least $(\tfrac{1}{2}+\delta_0) \tfrac{2N}{q}$. We have thus shown the following theorem as an immediate corollary of Theorem~\ref{thm:main-intro}.
\begin{theorem}
 Fix an integer $q 
 \ge 2$. Then 
\[ p^{\thr}_{\del}(q) \le 1 - \tfrac{1+2\delta_0}{q}  < 1-\tfrac1q \ , \]
where $\delta_0>0$ is the positive constant promised in Theorem~\ref{thm:main-intro}.
\end{theorem}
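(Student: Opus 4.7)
I would reduce directly to the binary Theorem~\ref{thm:main-intro} by restricting each $q$-ary codeword to its two most frequent symbols. Suppose $C\subseteq\{0,1,\ldots,q-1\}^N$ has super-quasipolynomial size. For each $c\in C$, let $\sigma(c)$ denote the unordered pair of its two most frequent symbols (ties broken arbitrarily). By pigeonhole over the $\binom{q}{2}$ possible pairs, some fixed pair $\{i,j\}$ is attained by a subset $C_{i,j}\subseteq C$ of size at least $|C|/\binom{q}{2}$. The single combinatorial fact I need is that if $n_1\ge n_2\ge\cdots\ge n_q$ are the symbol counts of any length-$N$ string, then $n_1+n_2\ge 2N/q$. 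This follows because $n_1\ge N/q$ and $n_2\ge (N-n_1)/(q-1)$, so $n_1+n_2\ge n_1+(N-n_1)/(q-1)$, which is non-decreasing in $n_1$ on $[N/q,N]$ and equals $2N/q$ at the endpoint $n_1=N/q$.

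Applied to each codeword in $C_{i,j}$, this shows that at least $2N/q$ of its positions carry a symbol in $\{i,j\}$. I would then restrict each such codeword to its first $2N/q$ such positions to obtain a binary code $C'\subseteq\{i,j\}^{2N/q}$ with $|C'|\ge |C|/\binom{q}{2}$. Since $q$ is fixed and $|C|$ is super-quasipolynomial in $N$, the code $C'$ is still super-quasipolynomial in $N':=2N/q$, so Theorem~\ref{thm:main-intro} produces distinct $s',t'\in C'$ with $\lcs(s',t')\ge (\tfrac12+\delta_0)N' = (1+2\delta_0)N/q$. Since $s',t'$ were obtained from codewords $s,t\in C$ by deleting positions, any common subsequence of $s',t'$ is also a common subsequence of $s,t$, so $\lcs(s,t)\ge (1+2\delta_0)N/q = (1-p^{*})N$ with $p^{*}:=1-(1+2\delta_0)/q$. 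Hence $C$ fails to be a $p$-deletion correcting code for every $p>p^{*}$, giving $p^{\thr}_{\del}(q)\le p^{*}<1-1/q$.

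The only substantive steps are the pigeonhole and the averaging inequality $n_1+n_2\ge 2N/q$; the heavy lifting is done entirely by Theorem~\ref{thm:main-intro}. I anticipate no real obstacle, and the argument should fit in a single short paragraph.
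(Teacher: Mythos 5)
Your proposal is correct and follows essentially the same reduction as the paper: pigeonhole on the two most frequent symbols, restrict to a binary substring of length $2N/q$, and invoke Theorem~\ref{thm:main-intro}. The only cosmetic differences are that you use $\binom{q}{2}$ rather than the paper's (looser but equally sufficient) $q^2$ in the pigeonhole, and you spell out the averaging inequality $n_1+n_2\ge 2N/q$ that the paper treats as obvious.
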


We note that for the simpler model of erasures where the location of missing symbols \emph{are} known to the decoder, the zero-rate threshold equals $1-1/q$.\footnote{The erasure fraction correctable by a code is exactly governed by its relative (Hamming) distance. The Plotkin bound shows that the rate must be vanishing for relative Hamming distance $1-1/q$. We know the existence and even explicit constructions of codes of non-vanishing rate and relative Hamming distance $1-1/q-\eps$ for any $\eps > 0$.}
Thus our results also show a formal separation between the zero-rate threshold for the models of erasures and deletions, for any fixed alphabet.

 In the list-decoding model with list-size $L$ for deletion fraction $p$, there can be up to $L$ codewords that contain the decoder's (arbitrary) input sequence $y \in \{0,1\}^{(1-p)N}$. 
The zero-rate threshold for \emph{list-decoding} from deletions, as the list-size $L \to \infty$, is known to equal $1-1/q$~\cite{GW17}. Thus our result also demonstrates that list-decoding is provably more powerful in terms of the deletion fractions that can be handled with non-vanishing rate. 

\subsection{Related works} 

\smallskip\noindent\textbf{Performance of random codes.} An ubiquitous approach to establish strong, and in many cases the best known, \emph{possibility} results in coding theory is to analyze random codes of certain rates, i.e. codes generated by sampling i.i.d. random elements of $\{0,1\}^N$. These results typically also identify the precise performance threshold of random codes~\cite{GMRSW21}. For the case of binary deletion codes, however, the performance of random codes itself is hard to analyze, and we do not rigorously know a tight estimate of the expected length $\gamma N$ of the longest common subsequence of two random $N$-bit strings ($\gamma$ is called the Chv{\'a}tal-Sankoff constant \cite{CS75}). The known bounds on this expectation $\gamma N$ \cite{L09}, together with standard probabilistic arguments, imply that with high probability, random codes can tolerate a deletion fraction at least $0.17$, but also at most $0.22$ \cite{KMTU11}.  For codes over alphabet size $q$, random codes can correct a deletion fraction approaching $1-2/\sqrt{q}$ for large $q$~\cite{KLM}.

As mentioned earlier, we now have constructions of binary codes that can correct a deletion fraction $0.414$~\cite{BGH17}, which is substantially better than random codes. This raised the possibility that perhaps there might be binary codes of non-vanishing rate capable of correcting a deletion fraction all the way up to the trivial limit of $\sfrac12$, which we refute in this work.

\medskip\noindent\textbf{Trade-offs for correcting $N/2 - N^{1-\theta}$ deletions.}
In terms of previously known limitations of deletion codes,
Bukh and Ma~\cite{bukh_ma} showed that for each fixed $r$ and large enough $N$ (specifically, at least $r^{O(r)}$), every set $C \subseteq \{0,1\}^N$ of size $r+4$ satisfies 
\begin{equation}
    \label{eq:bukh_ma}
\LCS(C) \ge \tfrac{N}{2} + \Omega(r^{-9}) N^{1-1/r} \ . 
\end{equation}
Choosing $r$ appropriately, the result \eqref{eq:bukh_ma} implies that there exist absolute constants $b,c$ such that 
every code $C \subseteq \{0,1\}^N$ 
with $\LCS(C) < \tfrac{N}{2} + \tfrac{N}{(\log N)^b}$ has size $|C| \le c \tfrac{\log N}{\log \log N}$. 
Bukh and Ma demonstrated that this $N^{1-1/r}$ advantage in \eqref{eq:bukh_ma} is asymptotically sharp for each fixed $r$, by exhibiting a set $\mathcal{W}$ of $(r+4)$ $N$-bit strings with $\LCS(\mathcal{W}) \le \tfrac{N}{2} + O(N^{1-1/r})$. Interestingly, this set $\mathcal{W}$ played a crucial role in the developments on \emph{constructions} of codes to correct a large fraction of deletions in \cite{BGH17,GL-oblivious}, as well as codes achieving the zero-rate threshold for list decoding from insertions and deletions in \cite{GHS-stoc20}. A suitable modification of this Bukh-Ma code $\mathcal{W}$ also drives the best known $0.414$-deletion correcting codes of \cite{BGH17}.

\medskip\noindent\textbf{Twins and regularity techniques.} One of the ideas used in this work is a new regularity-type result about strings. Szemer\'edi's regularity lemma and its variants are ubiquitous in extremal and additive combinatorics, but applying these ideas to coding theory is a relatively recent development. The first example of such a result was proved by~\cite{APP13}, and their regularity lemma roughly shows that every long string can be partitioned into a constant number of consecutive substrings, each of which is regular (a regular string is one in which the one-density in any long consecutive substring is close to the one-density in the whole string). They used this regularity lemma to prove that every string of length $N$ contains two disjoint copies of some length $(1/2 - o(1))N$ subsequence (so-called ``twins''). Given the similarity between finding twins in a single string and finding long common sequences between different strings, it should not come as a surprise that these techniques are useful here as well. The main difference in our approach is that we require a stronger regularity condition, which is that every substring not only has the same one-density but has similar ``oscillation statistics'' at many scales with the parent string.

\subsection{Deletion correction in related models}

To offer some wider context, we now discuss some results related to the broader study of codes for deletions and synchronization errors under various channel assumptions.

\medskip\noindent\textbf{Non-adversarial models.} Our work focuses on the adversarial model, where an arbitrary subset of $p$ fraction of the codeword bits, can be deleted. There is a rich body of work on the binary deletion channel where each codeword bit is deleted i.i.d with probability $p$. In this case, it is known that one can have positive rate codes that ensure vanishing miscommunication probability even for $p$ approaching $1$ (so the zero-rate threshold equals $1$). 
The interested reader can find more information about codes for the deletion channel in the surveys~\cite{Mitzenmacher-survey,CR-survey}.

One can consider models that are intermediate in power between i.i.d random and adversarial channels. For instance, in the \emph{oblivious} model the deletion pattern can be chosen arbitrarily, but without knowledge of the codeword. In this case, too, the zero-rate threshold is $1$, as for any $p < 1$, Guruswami and Li~\cite{GL-oblivious} showed the existence of codes that ensured that for every pattern of $p$-fraction deletions \emph{most} codewords are communicated correctly.\footnote{This average-case criterion to achieve decoding success for {most}, as opposed to {all}, codewords is necessary, as otherwise tackling the oblivious model becomes as hard as tackling the adversarial model. Alternatively, one can allow a stochastic encoder, and ensure high probability of successful transmission of each message when averaged over the choice of its random encoding. See, for example, \cite[Appendix A]{GL-oblivious}.}
Their work also considered the \emph{online} model, where the decision to delete the $i$-th bit must be made based only on the first $i$ bits of the codeword. They showed that the zero-rate threshold for this model (again, for the average-error criterion of ensuring most codewords are communicated correctly for any deletion pattern) equals $\sfrac12$ if and only if $p^{\thr}_{\del} =\sfrac12$. By virtue of Theorem~\ref{thm:main-intro}, this implies that the zero-rate threshold for the online model is also bounded away from $\sfrac12$. 

\medskip\noindent\textbf{Large alphabets.} We focused on codes over the binary and fixed small alphabets in this work. This is in fact the most challenging setting for deletion codes. Indeed, if the code alphabet is allowed to grow with $N$, then one can include the index $i$ along with the $i$-th codeword symbol, effectively reducing the deletion model to the much simpler erasure model, where Reed-Solomon codes give a simple, optimal solution. For alphabets that are large, but still independent of $N$, a natural greedy strategy shows the existence of
codes of rate $(1-p-\eps)$ capable of correcting a fraction $p$ of deletions, over an alphabet of size $\exp(O(1/\eps))$~\cite{GW17}. In particular, the zero-rate threshold approaches $1$. Also, $1-p$ is a trivial upper bound on the possible rate, even for the simpler model of $p$ fraction of erasures. Explicit constructions of $p$-deletion correcting of rate approaching $1-p$ over an alphabet size independent of $N$ were given in \cite{HS17} based on synchronization strings, which is a very elegant tool that has since found several other applications (see the survey~\cite{HS-survey}).

\medskip\noindent\textbf{Insertions and deletions.} Another form of errors that affect the synchronization between sender and receiver are insertions of symbols. It is well known (since \cite{Lev66}) that a code $C$ with $\LCS(C) < (1-p)N$ can tolerate any combination of a total of $pN$ insertions and deletions (insdel errors). Thus allowing insertions as well does not change the combinatorial aspects of the underlying coding problem, as it is governed by the LCS. However, for efficient algorithms for insdel errors are not implied by deletion correction algorithms, and have to be reworked~\cite{GL-isit16}.

In the model of list-decoding, even the combinatorial aspects are more nuanced in the presence of insertions. The trade-off between the combinations of fractions of insertions and deletions that governs the zero-rate region exhibits an interesting piece-wise linear behavior~\cite{GHS-stoc20}. (As mentioned earlier, the zero-rate threshold for list-decoding $q$-ary codes from deletions alone equals $1-1/q$.)

\medskip\noindent\textbf{Low-deletions regime.} This work focused on the largest deletion fraction that can be corrected with non-vanishing rate. At the opposite end of the spectrum are codes to correct a deletion fraction $p \to 0$. In this case, the optimal rate behaves as $1-O(p \log (1/p))$ and we also know explicit codes with such rate $1-O(p \log^2 (1/p))$ and efficient deletion-correction algorithms~\cite{CJLW18,Haeupler19}.
There has also be an active line of recent code constructions, triggered by \cite{BGZ18}, for correction of a fixed \emph{number} $k$ of deletions with redundancy at most $c_k \log N$. We now have codes with the optimal (up to constant factors) redundancy of $O(k \log N)$~\cite{CJLW18,Haeupler19,SimaB21,SimaGB20}.

\section{Proof overview}
\label{sec:overview}
In this section, we give a high-level overview of the proof of Theorem~\ref{thm:main-intro}, as well as the organization of the rest of this paper. Let $C\subseteq \{0,1\}^N$ be a binary code of size $2^{(\log N)^A}$ for some large constant $A$, and our goal will be to find two elements $s,t\in C$ for which $\LCS(s,t) \ge (1/2 + \delta_0)N$. The proof breaks down into five conceptually independent parts that roughly correspond to the Sections~\ref{sec:structure} through~\ref{sec:wrapup}. It will be natural to explain these parts here in roughly reverse order, starting with Section~\ref{sec:wrapup}.

\medskip\noindent\textbf{Pigeonholing by ``statistics''.}
The only place where we use the size of $C$ is in the final Section~\ref{sec:wrapup}, which wraps up the proof of Theorem~\ref{thm:main-intro}. We need $C$ to be large enough to find by the pigeonhole principle two elements $s,t \in C$ with similar ``macroscopic statistics.'' That is, partition $[N]$ into long subintervals of length $N/\poly\log N$. We pick $s$ and $t$ to have the same number of ones in each of these intervals, and also to share some other statistics that characterize the ``frequencies at which they oscillate.'' As there are only $O(\poly\log N)$ long intervals and the statistics in question take on only $2^{\poly\log N}$ possible values on each interval, $C$ is large enough to guarantee the existence of two $s$ and $t$ sharing identical statistics on all such intervals. The remaining sections explain how to define the statistics we care about and show that if $s$ and $t$ have identical statistics, then they must have a long LCS.

\smallskip
We now describe the three different high-level strategies we use for finding long common subsequences between $s$ and $t$. The core idea of the matching algorithm is to start by greedily matching ones between $s$ and $t$, and switch opportunistically to matching zeros in certain subintervals where zeros are more common. Although this framework may seem simplistic, note that any common subsequence whatsoever of $s$ and $t$ can be obtained in this fashion, by scanning along the two strings and switching between matching for ones and zeros as convenient. Fairly simple switching strategies based on tracking local bit densities will be sufficient for our proofs.

\medskip\noindent\textbf{Strategy 1: Globally imbalanced strings.} The first strategy is extremely simple: match corresponding ones (or zeros if there are more zeros) in $s$ and $t$. If $s$ and $t$ are significantly imbalanced in the same direction, i.e. they both have more than $(1/2+\delta_0) N$ or fewer than $(1/2-\delta_0)N$ ones, then this strategy immediately finds an LCS of length at least $(1/2+\delta_0)N$. This naive strategy is illustrated in Figure~\ref{fig:imbalanced-case}. It may be helpful to visualize the strategy as sending two runners, one down the length of each string, who must advance simultaneously while holding hands and only step on the ones in their respective strings. 
\begin{figure}[h]
\centering
\vspace{-5mm}
\scalebox{1.5}{
\begin{tikzpicture}
\subseqpic
\drawmatch{0,0}{0,0}
\drawmatch{0,1}{0,2}
\drawmatch{0,3}{0,3}
\drawmatch{0,7}{0,5}
\drawmatch{1,1}{0,6}
\drawmatch{1,2}{1,4}
\drawmatch{1,6}{1,7}
\drawmatch{1,7}{2,0}
\drawmatch{2,0}{2,2}
\drawmatch{2,1}{2,3}
\drawmatch{2,3}{2,4}
\drawmatch{3,0}{2,6}
\drawmatch{3,2}{3,1}
\drawmatch{3,4}{3,2}
\drawmatch{3,6}{3,5}
\drawmatch{3,7}{3,6}
\end{tikzpicture}
}
\vspace{-5mm}
\caption{In the naive strategy, we match ones greedily between $s$ and $t$. In this example, the strings are balanced so the naive strategy finds a common subsequence of length exactly $16$ between $s$ and $t$ of length $32$.}
\label{fig:imbalanced-case}
\end{figure}
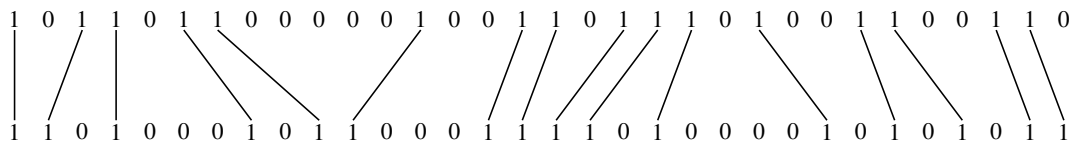

\medskip\noindent\textbf{Strategy 2: Green strategy.}
The other two strategies, which we call the Green strategy and the Blue-Yellow strategy, are both modifications of the naive strategy. We first describe the Green Strategy, which is carried out in Section~\ref{sec:green}, as it is simpler. The naive strategy above can be thought of as a scanning process, where two runners move along the ones in $s$ and $t$ simultaneously, matching bits together to find a common subsequence composed entirely of ones. In the Green strategy, we fix an ``oscillation period'' $\ell \ge 1$ and preprocess $s$ by counting, for every index $i$ of a one-bit in $s$, the number of zeros between the $i$-th one and the $(i+\ell)$-th one. For each such $i$, we plant a marker there, which we call a {\it Green $\ell$-flag}, if there are more than $(1+\eps)(\ell - 1)$ zeros in this interval. Since there are exactly $\ell-1$ ones in this same interval, the Green $\ell$-flags are meant to signal to the runners that they are entering a zero-rich patch within $s$. The same preprocessing is also done in $t$.

In the Green strategy, the two runners proceed in the same way as the naive strategy except that each time the runners reach Green flags simultaneously, they switch to stepping only on zeros for the duration of the flagged regions. As there are more zeros than ones in the flagged regions, they pick up an advantage over the naive strategy for every single pair of Green flags they simultaneously match. The Green strategy is pictured in Figure~\ref{fig:green-case}.

\begin{figure}[h]
\centering
\vspace{-5mm}
\scalebox{1.5}{
\begin{tikzpicture}
\drawpartg{0}{1}{1}{1}{0}{2}{1}{3}
\drawpartgg{0}{1}{0}{1}{0}{2}{0}{2}
\drawpartg{2}{3}{3}{5}{2}{4}{3}{4}
\drawpartgg{2}{3}{2}{3}{2}{4}{2}{4}
\subseqpic
\drawmatch{0,0}{0,0}
\drawmatch{0,1}{0,2}
\drawmatch{0,2}{0,4}
\drawmatch{0,4}{0,7}
\drawmatch{0,5}{1,0}
\drawmatch{0,6}{1,1}
\drawmatch{1,0}{1,2}
\drawmatch{1,2}{1,4}
\drawmatch{1,6}{1,7}
\drawmatch{1,7}{2,0}
\drawmatch{2,0}{2,2}
\drawmatch{2,1}{2,3}
\drawmatch{2,3}{2,4}
\drawmatch{2,4}{2,5}
\drawmatch{2,5}{2,7}
\drawmatch{2,6}{3,0}
\drawmatch{2,7}{3,3}
\drawmatch{3,1}{3,4}
\drawmatch{3,6}{3,5}
\drawmatch{3,7}{3,6}
\end{tikzpicture}
}
\vspace{-5mm}
\caption{In the Green strategy, we find a long common subsequence by matching zeros instead of ones in the zero-rich patches immediately following Green $\ell$-flags. In this example $\ell = 4$, the darker Green one-bits are Green $4$-flags, and the light Green substrings following them are zero-rich patches. Note that to keep the figure clean we did not mark all green $4$-flags in these strings. The Green strategy finds a common subsequence of length $20$ between the $s$ and $t$ pictured above.}
\label{fig:green-case}
\end{figure}
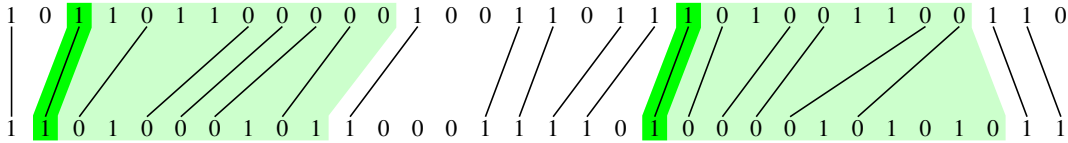

Our analysis of the success of the Green strategy is conditioned on the existence of a single oscillation period $\ell$ for which many of these zero-rich patches exist in both $s$ and $t$. Indeed, suppose there exists $\ell$ for which a constant $g_\ell = \Omega(1)$ fraction of the ones in both $s$ and $t$ are Green $\ell$-flags. Typically, we expect that the two runners hit Green flags simultaneously a constant $g_\ell^2$ fraction of the time (this can be made rigorous by randomly shifting the starting position of one of the runners slightly). Thus, using the Green strategy one can find a common subsequence of length $(1/2 + g_\ell^2 \eps)N$. The Green case finishes the proof of Theorem~\ref{thm:main-intro} if there exists any single oscillation period $\ell$ for which a constant fraction of ones in $s$ and $t$ are Green $\ell$-flags.

Unfortunately, it is not always the case that a string $s$ has a single oscillation period $\ell$ as above. Indeed, if
\[
s_i \defeq (1^{2^i} 0^{2^i})^{2^{k-i-1}},
\]
then each $s_i$ is a string of length $2^k$ which oscillates with period $2^i$. It is not hard to check that the concatenation $s\defeq s_0 s_1 \cdots s_{k-1}$ is a string of length $N=k\cdot 2^k$, such that there are at most $O(2^k)$ Green $\ell$-flags in $s$ for any given choice of $\ell$. Thus, $g_\ell = O((\log N)^{-1}) = o(1)$ for every single $\ell$, so the Green strategy is insufficient for this type of string (and there are many strings with this property). We remark that this is essentially the worst case: it is not difficult to show by the machinery in Section~\ref{sec:structure} that for any locally-balanced string $s$, $\sum g_\ell \ge \Omega(N)$ where the sum is over scales $\ell$ equal to a power of two. Thus, by the pigeonhole principle at least $\Omega(N/\log N)$ of these flags appear at the same scale $\ell$. One can always find two strings $s$ and $t$ in $C$ with $g_\ell = \Theta((\log N)^{-1})$ with the same $\ell$, proving $\LCS(s,t) \ge (1/2 + \Omega((\log N)^{-2})) N$ using the Green strategy alone. Already, this argument saves several factors of $\log N$ in the surplus term over the argument of Bukh and Ma \cite{bukh_ma}.

\medskip\noindent\textbf{Strategy 3: Blue-Yellow strategy.}
We give a third and final strategy which handles the cases in which $g_\ell = o(1)$ for all oscillation periods $\ell$, which we call the Blue-Yellow strategy. This strategy, handled in Section~\ref{sec:blue}, is the most involved of the three and we do not explain all of the technical complications here. However, the general picture is similar to the Green strategy: we send two runners along $s$ and $t$ matching ones, and find opportune moments to switch to matching zeros to gain an advantage.

\begin{figure}[h]
\centering
\vspace{-5mm}
\scalebox{1.5}{
\begin{tikzpicture}
\drawpartby{0}{7}{2}{0}{0}{5}{1}{3}
\drawpartbyby{0}{7}{0}{7}{0}{5}{0}{5}
\drawpartyb{2}{3}{3}{1}{1}{7}{3}{0}
\drawpartybyb{2}{3}{2}{3}{1}{7}{1}{7}
\subseqpic
\drawmatch{0,0}{0,0}
\drawmatch{0,1}{0,2}
\drawmatch{0,7}{0,5}
\drawmatch{0,3}{0,3}
\drawmatch{1,0}{0,7}
\drawmatch{1,3}{1,0}
\drawmatch{1,4}{1,1}
\drawmatch{1,5}{1,2}

\drawmatch{2,1}{1,4}
\drawmatch{2,3}{1,7}
\drawmatch{2,4}{2,1}
\drawmatch{2,5}{2,5}
\drawmatch{2,6}{2,7}
\drawmatch{2,7}{2,8}

\drawmatch{3,2}{3,1}
\drawmatch{3,4}{3,2}
\drawmatch{3,6}{3,5}
\drawmatch{3,7}{3,6}
\end{tikzpicture}
}
\vspace{-5mm}
\caption{In the Blue case, we find a long common subsequence by matching zeros from extremely zero-rich patches (signalled by Blue flags) to longer relatively balanced patches (signalled by Yellow flags) and vice-versa. In the figure, a Blue $2$-flag from each of the strings is matched with a Yellow $6$-flag from the other. The Blue-Yellow strategy finds an LCS of length $18$ between the $s$ and $t$ pictured above.}
\label{fig:blueyellow-case}
\end{figure}
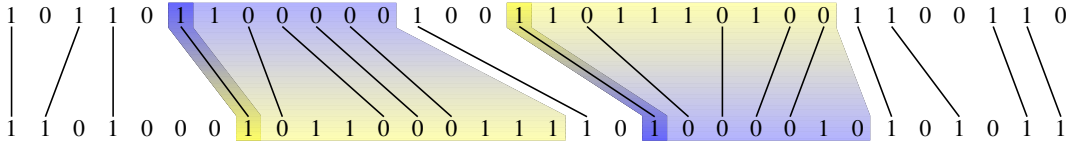

In the Blue-Yellow strategy, we also mark certain one-bits in $s$ and $t$ by flags, but we use flags of two different colors Blue and Yellow. A Blue $\ell$-flag is a relatively rare occurrence: it signals that there is an extremely zero-rich interval afterwards containing $\ell-1$ ones and more than $\eps^{-1}(\ell-1)$ zeros. A Yellow $\ell$-flag, on the other hand, is very common: it signals there is an interval afterwards containing $\ell-1$ ones and more than $0.9 (\ell-1)$ zeros. Also, since there is no single oscillation period that captures the behavior of $s$ and $t$ (or else we would apply the Green strategy), we must pay attention to flags at many different scales $\ell$ at the same time. The rough idea is then that the runners will switch to matching zeros when one of them reaches a Blue flag, and the other one simultaneously hits a Yellow flag of a similar scale, see Figure~\ref{fig:blueyellow-case}.  

In the above diagram, the top runner reaches a Blue flag first, and the bottom runner happens upon a Yellow flag at the same time. This signals both of them to switch to matching zeros, which allows the top runner to gain a great advantage (since the patch past a Blue flag is so zero-rich). The bottom runner may lose out slightly in efficiency because Yellow intervals can have slightly fewer zeros than ones, but on net we see that more bits are used from the two patches together than would have been otherwise. In the long run, we expect Blue flags to appear approximately equally frequently in $s$ and $t$, so the advantages and losses balance out to a net gain on both sides.

 Now we explain roughly how the Blue-Yellow strategy circumvents the obstacle that the Green strategy ran into. The Green strategy by itself cannot prove Theorem~\ref{thm:main-intro}, since there exists strings $s$ such as $s=s_0 \ldots s_{k-1}$ where $s_i$ oscillates with period $2^i$, so that $s$ has only $O(N/\log N)$ Green $\ell$-flags at any single scale $\ell$. Thus, it is insufficient to consider flags at only a single scale. To take a concrete example, supposing $t = s_{\sigma(0)}\ldots s_{\sigma(k-1)}$ for a typical permutation $\sigma$, the Green strategy fails to find an common subsequence of length more than $(1/2 + \Omega((\log N)^{-2}))N$ between $s$ and $t$. In this example, instead of focusing on a single $\ell$, we instead consider all one-bits in $s$ and $t$ that are the Blue $\ell$-flags for some $\ell \ge N^{1-\eps}$. This threshold $N^{1-\eps}$ is chosen so that there are $\Theta(\eps^2 N)$ such flags in each of $s$ and $t$. Because these are the flags at the largest $\eps$-fraction of scales, we see that for every single $\ell \ge N^{1-\eps}$, most $\ell$-intervals in $s$ and $t$ have density close to $1/2$, so most $\ell$-flags in $s$ and $t$ are Yellow. As a result, we expect that as the two runners scan through $s$ and $t$, most Blue $\ell$-flags in one will be matched with a Yellow flag in the other, resulting in favorable situations as in Figure~\ref{fig:blueyellow-case}. The Blue-Yellow strategy succeeds by accumulating all these advantages across the $\Omega(\eps N)$ Blue flags in each of $s$ and $t$, to find an LCS of length $(1/2 + \Omega(1))N$.

\medskip\noindent\textbf{String regularity.}
In order to guarantee that the two runners remain relatively synchronized as the Blue-Yellow strategy proceeds, we need Blue flags to be somewhat consistently distributed within $s$ and within $t$. 
We get this desired property by proving a string regularity lemma similar to that of~\cite{APP13} (see also \cite{BZ16,ACKPS19, HKP21}).
Our regularity lemma, proved in Section~\ref{sec:entropy} differs from previous versions for words in that we use an \emph{entropy increment} argument, rather than the usual density increment argument.

\medskip\noindent\textbf{The structure lemma.}
The remainder of the paper is designed to set up the strings $s$ and $t$ so that one of the above three strategies can succeed in finding a long common subsequence.
To do this, we prove a structure lemma in Section~\ref{sec:structure} about strings, which says each string falls in one of three cases (1) Globally imbalanced, (2) Green at some oscillation period $\ell$, or (3) Blue-Yellow at some oscillation period $\ell$ (for the formal definition of the Blue-Yellow case, see Definition~\cite{def:type}).
These types are defined such that, if two strings $s$ and $t$ have the same type and same oscillation period (if applicable), one can find a long common subsequence of $s$ and $t$ using the corresponding strategy.

With this structure lemma, we simply need to put all the pieces together (Section~\ref{sec:wrapup}).
We partition each codeword $s$ into $\poly\log N$ substrings $s_1,\dots,s_{\poly\log N}$ and apply the structure lemma to each substring $s_i$.
By pigeonhole, there exist two strings $s$ and $t$ such that, for each $i=1,\dots,\poly\log N$, the substrings $s_i$ and $t_i$ are the same ``type,'' meaning that they are in the same case of the structure lemma and have the same oscillation period (if applicable).
Then in each pair of substrings $s_i$ and $t_i$ one can find a long common subsequence using one of the three strategies, giving an overall large LCS.

It is important for two technical reasons to split into substrings $s_i$ and $t_i$, rather than applying the structure lemma directly to the entire strings $s$ and $t$. 
First, we need to randomly shift the starting position of one of the ``runners'' to get enough synchronized flags, incurring a loss of order up to $|s_1|$, so it is necessary that this loss is $o(N)$.
Second, our regularity lemma guarantees Blue flags to be well-distributed at most, but not all, scales.
Thus, we may not obtain the desired regularity until we consider substring lengths down to around $O(N/\poly \log N)$.

\medskip\noindent\textbf{Organization.} Section~\ref{sec:prelim} collects the common notations, definitions, and preliminary lemmas we need for the rest of the paper. In Section~\ref{sec:structure}, we prove a structure lemma which divides strings into three types, one suitable to each of the three strategies above. In Section~\ref{sec:entropy}, we perform an additional technical argument to prove a ``regularity-type'' property of strings necessary for the runners to remain synchronized in the Blue-Yellow case (so that one does not race too far ahead of the other).  These two sections together set the stage for Sections~\ref{sec:green},~\ref{sec:blue}, and~\ref{sec:wrapup} to handle the Green case, the Blue-Yellow case, and complete the proof, respectively. Finally, we collect some of the tantalizing open questions that remain in this area in Section~\ref{sec:concluding}.

\section{Preliminaries and Notation}\label{sec:prelim}

\smallskip\noindent \textbf{Constants.} Throughout, fix $\varepsilon=10^{-6}$ and $\gamma=10^{-15} = 0.001\varepsilon^2$.

\medskip\noindent\textbf{Subsequences and substrings.} A {\it subsequence} in a string $s$ is any string obtained from $s$ by deleting zero or more symbols. In contrast, a substring is a subsequence consisting of consecutive symbols of $s$ (substrings are also sometimes referred to as {\it subwords} elsewhere, but we do not use this terminology). Thus, if $s=10001$, then $101$ is a subsequence of $s$ but not a substring. 

\medskip\noindent\textbf{Intervals and strings.}
Throughout, for real numbers $x$ and $y$ we define an \emph{interval} $I=[x,y]$ to be the set of integers $a$ such that $x\le a\le y$ (rather than the set of real numbers $a$).
We let $[n]$ denote the interval $[1,n]$.
We similarly define intervals $(x,y]$ and $[x,y)$ and $(x,y)$ as subsets of the integers. 
The \emph{size} of an interval is the number of integers in the interval.
For $\alpha\in(0,1)$ and real number $x$, let $(1\pm \alpha)x$ denote the interval $[(1-\alpha)x,(1+\alpha)x]$.

Throughout the remainder of the paper, all strings are binary.
Let $\{0,1\}^*$ be the set of all binary strings of any nonnegative length.
For integers $m\ge 0$ and $i\ge 1$, we let $I_{m,i}\defeq[(i-1)\cdot2^{m}+1,i\cdot2^{m}]$.
We call such an $I_{m,i}$, where both the size is a power of two and the endpoints are aligned with the same power of two, a \emph{dyadic interval}.
For strings $s_1$ and $s_2$, let $s_1s_2$ denote the string concatenation of $s_1$ with $s_2$.

For a string $s$ with $L$ ones and an interval $I=[x,y]$, let $s_{I}$ denote the contiguous substring of $s$ between the $\ceil{x}$-th one of $s$ (or the beginning of the string if $x\le 0$) and the $\floor{y+1}$-st one of $s$ (or until the end of the string if $y\ge L$), including the first but excluding the second. For example, if $w=1001011$, we have $w_{[1,2]} = 10010$.
For $m\ge0$ and $i\ge 1$ write $s_{m,i}$ as shorthand for $s_{I_{m,i}}$.
Informally, we refer to $s_{m,1},s_{m,2},s_{m,3},\dots$ as the \emph{substrings of $s$ at scale $m$}.
We note that leading zeros of a string are not included in any dyadic substring $s_{m,i}$, but this is negligible as we typically work with strings that start with a one.

We write $z(s)$ for the number of zeros in a string $s$. 

\medskip\noindent\textbf{Reversing strings that begin with a one.}
In the Blue-Yellow strategy on two strings $s$ and $t$, we apply the Blue-Yellow matchings in pairs: (i) matching Blue flags in $s$ with Yellow flags in $t$, and (ii) matching Blue flags in $t$ with Yellow flags in $s$.
Arguments (i) and (ii) can be handled with a single argument when (ii) is viewed as applying the lemma on the reversals of $s$ and $t$.
However, since throughout we index our substrings $s_I$ by the one-bits rather than all bits, it is helpful to slightly modify the definition of string reversal as follows.
Given a string $s$ starting with a one, let $\rev(s)$ denote the string obtained by reversing the order of all the bits in $s$ after the first bit. Thus, $\rev(s)$ is only defined for strings starting with a one.
It is easy to check the following properties of $\rev$.

\begin{lemma}
  \label{lem:rev}
  Let $w$ be a string that begins with a one and has 
  $L$ ones in total.
  \begin{enumerate}
  \itemsep=0ex
  \item The strings $w$ and $\rev(w)$ have the same length and number of ones.
  \item For an interval $I=[x,y]\subset[L]$, we have $\rev(w_I) = \rev(w)_{[L+1-y,L+1-x]}$.
  \item When $w$ has $L=2^n$ ones we have $\rev(w_{m,i}) = \rev(w)_{m,2^{n-m}+1-i}$.
  \end{enumerate}
\end{lemma}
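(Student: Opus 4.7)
The three parts of the lemma will be proved by direct computation from the definitions of $\rev$ and of substring-indexing-by-ones. Part~1 is immediate: the map $\rev$ only permutes the bits of $w$ after the first position, so it preserves both the total length $|w|$ and the number of one-bits.

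For Part~2, I will first translate the reversal to the level of one-positions. Let the one-bits of $w$ sit at positions $1 = p_1 < p_2 < \cdots < p_L$. Since $\rev(w)_1 = 1$ and $\rev(w)_k = w_{|w|+2-k}$ for $k \ge 2$, the one-bits of $\rev(w)$ are at positions $q_1 = 1$ and $q_j = |w|+2-p_{L+2-j}$ for $2 \le j \le L$; in words, the $j$-th one of $w$ (for $j \ge 2$) maps to the $(L+2-j)$-th one of $\rev(w)$. With $I = [x,y]$ and $I' = [L+1-y, L+1-x]$, the substring $\rev(w)_{I'}$ runs from position $q_{L+1-y}$ to just before $q_{L+2-x}$ (or to the end of $\rev(w)$ when $x = 1$). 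Substituting and using $\rev(w)_k = w_{|w|+2-k}$, this substring reads bit-by-bit as $w_{p_{y+1}}, w_{p_{y+1}-1}, \ldots, w_{p_x+1}$ in the generic range $2 \le x \le y \le L-1$. On the other hand, $\rev(w_I)$ is by definition the leading bit $w_{p_x}$ of $w_I$ followed by the reversal of $w_{p_x+1}, w_{p_x+2}, \ldots, w_{p_{y+1}-1}$, yielding $w_{p_x}, w_{p_{y+1}-1}, w_{p_{y+1}-2}, \ldots, w_{p_x+1}$. Since $w_{p_x} = w_{p_{y+1}} = 1$, the two strings agree in every position.

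Part~3 then follows from Part~2 by interval arithmetic: with $L = 2^n$ and $I_{m,i} = [(i-1)2^m + 1, i \cdot 2^m]$, applying Part~2 gives $\rev(w_{m,i}) = \rev(w)_{[L+1-i \cdot 2^m,\, L-(i-1)2^m]}$, and the identities $L+1-i \cdot 2^m = (2^{n-m}-i)2^m+1$ and $L-(i-1)2^m = (2^{n-m}+1-i)2^m$ show this interval equals $I_{m, 2^{n-m}+1-i}$. The main obstacle is the boundary casework in Part~2: the cases $x=1$ (where the leading one of $w_I$ coincides with the very first bit that $\rev$ treats specially) and $y=L$ (where $w_I$ runs to the end of $w$ with no $(y+1)$-st one to stop before) each require a short separate verification that the endpoint conventions on both sides agree. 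In both cases the same bit-by-bit comparison goes through once one notes that the first bits of $\rev(w_I)$ and $\rev(w)_{I'}$ are each forced to equal $1$ (as $w_{p_x}$ and $w_{p_{y+1}}$ respectively, or as the untouched leading bit of $\rev(w)$ when $y = L$).
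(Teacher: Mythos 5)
Your proof is correct and follows essentially the same route as the paper's: prove Part~2 by directly tracking what reversal does to the one-bits and the blocks of zeros between them, then derive Part~3 by interval arithmetic. The only stylistic difference is that you parametrize by the explicit positions $p_1<\cdots<p_L$ of the ones, whereas the paper writes $w_{[x,y]}=10^{z_x}10^{z_{x+1}}\cdots 10^{z_y}$ in terms of the gap lengths $z_i$; the gap-length notation absorbs the boundary cases $x=1$ and $y=L$ uniformly, so the paper's version avoids the separate verification you flag at the end. One small point you omit (and the paper handles in one line): the lemma allows $x,y$ to be real, and one should first reduce to integer endpoints via $[x,y]=[\lceil x\rceil,\lfloor y\rfloor]$ and $[L+1-y,L+1-x]=[L+1-\lfloor y\rfloor, L+1-\lceil x\rceil]$.
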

\begin{proof}
  The first item is obvious.
  For the second item, it suffices to consider when $x$ and $y$ are integers: indeed, for real numbers $x$ and $y$, we have $[x,y]=[\ceil{x},\floor{y}]$ and $[L+1-y,L+1-x] = [L+1-\floor{y},L+1-\ceil{x}]$, so we may replace $x$ and $y$ with $\ceil{x}$ and $\floor{y}$.
  The zeros between the $i$-th and $(i+1)$-st one of $w$ map to the zeros between the $(L+1-i)$-th and $(L+2-i)$-th one of $\rev(s)$.
  Let $z_i$ denote the number of zeros between the $i$-th and $(i+1)$-st one of $w$.
  Then
  \begin{equation*}
  \rev(w_{[x,y]}) = \rev(10^{z_x}10^{z_{x+1}}1\cdots 10^{z_y}) = 10^{z_y}10^{z_{y-1}}1\cdots 10^{z_x}=\rev(w)_{[L+1-y,L+1-x]}.
  \end{equation*}
  The third item follows from the second:
  \begin{align}
    \rev(w_{m,i})
    = \rev(w_{[(i-1)\cdot 2^m+1, i\cdot 2^m]})
    &= \rev(w)_{[2^n+1 - i\cdot 2^m , 2^n - (i-1)\cdot 2^m]} \nonumber\\
    &= \rev(w)_{[2^m (2^{n-m}-i)+1, 2^m\cdot(2^{n-m}-i+1)]}
    =\rev(w)_{m,2^{n-m}+1-i}.\qedhere
  \label{}
  \end{align}
\end{proof}
\begin{example}
  \label{ex:1}
  For $w=1001011$, we have
  \begin{equation*}
  \rev(w_{[1,2]}) = \rev(10010) = 10100 = (1110100)_{[3,4]} = \rev(w)_{[3,4]}.
  \end{equation*}
\end{example}
\begin{lemma}($\rev$ preserves LCS)
  For strings $s$ and $t$ starting with a one, $\LCS(s,t) = \LCS(\rev(s),\rev(t))$.
\label{lem:rev-lcs}
\end{lemma}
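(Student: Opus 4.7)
The plan is to reduce the claim to the standard fact that ordinary string reversal preserves LCS, modulo a short identity about an extra leading/trailing symbol. Write $s = 1s'$ and $t = 1t'$, and let $X^R$ denote the \emph{ordinary} reversal of a string $X$ (reversing all bits, including the first). Then by definition $\rev(s) = 1(s')^R$ and $\rev(t) = 1(t')^R$, while the ordinary reversals are $s^R = (s')^R 1$ and $t^R = (t')^R 1$. Ordinary reversal trivially preserves LCS, since any common subsequence $u$ of $s,t$ yields a common subsequence $u^R$ of $s^R, t^R$ of the same length (and vice versa), so $\LCS(s,t) = \LCS(s^R, t^R)$.

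It therefore suffices to prove $\LCS((s')^R 1, (t')^R 1) = \LCS(1(s')^R, 1(t')^R)$. I will show the stronger statement that for any strings $X, Y$ and any symbol $b$,
\[
\LCS(Xb, Yb) \;=\; \LCS(X, Y) + 1 \;=\; \LCS(bX, bY).
\]
The inequalities $\ge$ are immediate: append (resp.\ prepend) $b$ to any longest common subsequence of $X$ and $Y$. For the direction $\le$ in the first equality, take a longest common subsequence $u$ of $Xb$ and $Yb$ and consider its rightmost embeddings into $Xb$ and $Yb$. If the last symbol of $u$ is matched to the trailing $b$ in both strings, then deleting it gives a common subsequence of $X$ and $Y$, so $|u| \le \LCS(X,Y) + 1$; otherwise $u$ itself is a common subsequence of $X$ and $Y$, so $|u| \le \LCS(X,Y)$. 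The second equality is symmetric (use leftmost embeddings). Taking $X = (s')^R$, $Y = (t')^R$, and $b = 1$ then chains the identities:
\[
\LCS(s,t) = \LCS(s^R, t^R) = \LCS((s')^R 1, (t')^R 1) = \LCS(1(s')^R, 1(t')^R) = \LCS(\rev(s), \rev(t)).
\]

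There is no real obstacle here; the only mild subtlety is that $\rev$ is not ordinary reversal, so one must separate out the leading $1$ before invoking reversal symmetry. Once the two are compared, the leading $1$ on both sides of $\rev(s), \rev(t)$ plays exactly the role of the trailing $1$ on both sides of $s^R, t^R$, and the elementary identity above makes this cost-free.
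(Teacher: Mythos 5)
Your proof is correct and takes essentially the same approach as the paper's one-line argument: split off the common leading $1$, observe that a shared leading or trailing symbol contributes exactly $+1$ to the LCS, and invoke the fact that ordinary reversal preserves LCS. Your write-up just formalizes that observation via the identity $\LCS(Xb,Yb)=\LCS(X,Y)+1=\LCS(bX,bY)$.
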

\begin{proof}
  The LCS always matches the first bits if they are equal, and reversing strings preserves the LCS.
\end{proof}

\paragraph{Flags.} We now define flags, a key notion that measures the oscillation frequencies within string.
\begin{definition}[Flags]
For a positive integer $\ell$ and a string $w$, define an index $i\in\ZZ$ to be {\it an $\ell$-flag of rate $r$ in $w$} if $(\ell-1)^{-1}z(w_{[i,i+\ell)})=r$.
Here $0\le r \le +\infty$ is a real number that can take on the value $+\infty$ and we define $0^{-1}\cdot0=0$ and $0^{-1}\cdot m=+\infty$ for any positive integer $m$. 
The rate of an $\ell$-flag $i$ in $w$ is the ratio of zeros to ones (strictly) between the $i$-th one and the $(i+\ell)$-th one of $w$, and we would like to find $\ell$-flags with high rate in order to execute the zero-matching strategies described in the previous section. 
We say $\ell$-flag $i$ of rate $r$ is 
\[
\begin{cases}
\textnormal{Blue} & \textnormal{if } r > \eps^{-1}, \\
\textnormal{Green} & \textnormal{if } r > 1+2\eps, \\
\textnormal{Yellow} & \textnormal{if } r > 0.9, \\
\textnormal{Red} & \textnormal{if } r \le 0.9. \\
\end{cases}
\]
Note that Blue flags are Green flags and Green flags are Yellow flags. 
For a string $w$ with $L$ ones, for each $i\in[L]$, define $b_w(i)$ to be the value $\ell\in [L]$ such that $\ell$ is the largest power of two for which $i$ is a Blue $\ell$-flag in $w$, with $b_w(i) = 0$ if no such $\ell$ exists.
We say $i$ is a Blue $\ell^+$-flag in $w$ if $b_w(i)\ge \ell$.
\end{definition}

\medskip\noindent\textbf{Imbalanced strings.}
For $\delta \in (0, 1/2)$, we say a string $w$ with $L$ ones is \emph{$\delta$-imbalanced} if its number of zeros $z(w)$ is not in $(1\pm\delta)L$. For convenience, we simply say that $w$ is {\it imbalanced} if it is $\eps$-imbalanced with $\eps = 10^{-6}$ defined at the beginning of this section. We will reason about imbalanced substrings $w_I$ of $w$ at various scales, and exploit the existence of these imbalanced substrings.
\begin{lemma}
  Let $w$ be a string with $L$ ones, and let $\ell\ge 2\varepsilon^{-1}$. 
  Suppose that $i$ is a Blue or Green $\ell$-flag in $w$, or $i\le L-\ell+1$ is a Red $\ell$-flag.
  On the interval $I=[i,\min(i+\ell-1,L)]$, the substring $w_I$ is imbalanced. 
  \label{lem:blue-0}
\end{lemma}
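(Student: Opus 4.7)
The plan is to translate the hypothesis on the rate $r$ of the $\ell$-flag directly into a count of zeros in $w_I$ and compare it with the count of ones, verifying that one of the inequalities defining imbalance holds. First I would unpack $w_I$ from the definitions: when $i+\ell-1\le L$, the substring $w_I=w_{[i,i+\ell-1]}$ runs from the $i$-th one of $w$ up to (but not including) the $(i+\ell)$-th one, so it contains exactly $\ell$ ones and, by the flag-rate definition, exactly $r(\ell-1)$ zeros. If instead $i+\ell-1>L$, which can only occur in the Blue or Green case (since the Red case presupposes $i\le L-\ell+1$), then $I=[i,L]$ and $w_I$ has $L-i+1\le\ell-1$ ones but still $r(\ell-1)$ zeros, so the imbalance is only more pronounced.

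Next I would run a short algebraic check for each color, always using $\ell\ge 2\varepsilon^{-1}$. For a Blue flag, $r>\varepsilon^{-1}$ gives more than $\varepsilon^{-1}(\ell-1)$ zeros, which exceeds $(1+\varepsilon)\ell$ once $\ell(\varepsilon^{-1}-1-\varepsilon)>\varepsilon^{-1}$, easily implied by $\ell\ge 2\varepsilon^{-1}$ for $\varepsilon=10^{-6}$. For a Green flag, $r>1+2\varepsilon$ gives more than $(1+2\varepsilon)(\ell-1)$ zeros, and $(1+2\varepsilon)(\ell-1)>(1+\varepsilon)\ell$ reduces to $\varepsilon\ell>1+2\varepsilon$, again immediate from $\ell\ge 2\varepsilon^{-1}$. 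In both cases $z(w_I)>(1+\varepsilon)\cdot(\text{number of ones in }w_I)$, so $w_I$ is imbalanced.

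For a Red flag with $i\le L-\ell+1$, the substring $w_I$ has exactly $\ell$ ones and at most $0.9(\ell-1)$ zeros. The inequality $0.9(\ell-1)<(1-\varepsilon)\ell$ rearranges to $(0.1-\varepsilon)\ell>-0.9$, which is trivial for $\varepsilon=10^{-6}$ and any $\ell\ge 1$; hence $z(w_I)<(1-\varepsilon)\ell$ and $w_I$ is imbalanced on the other side. No step here is substantive: the lemma is essentially a sanity check confirming that the numerical thresholds $\varepsilon^{-1}$, $1+2\varepsilon$, and $0.9$ in the flag definitions are compatible with the imbalance constant $\varepsilon$ once $\ell\ge 2\varepsilon^{-1}$, with the only mild subtlety being the boundary case where the interval is truncated at $L$.
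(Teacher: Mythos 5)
Your proof is correct and takes essentially the same route as the paper's: translate the flag-rate threshold into a zero count on $w_I$, compare against $(1\pm\varepsilon)|I|$, and note $|I|\le\ell$ so the truncation at $L$ only helps in the Blue/Green cases. The paper compresses the Blue case by citing that Blue flags are Green flags, whereas you verify it separately, and you spell out the truncated-interval bookkeeping a bit more explicitly; these are cosmetic differences, not a different argument.
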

\begin{proof}
  If $i$ is a Green $\ell$-flag, substring $w_{I}$ has at least $(1+2\varepsilon)(\ell-1) > (1+\varepsilon)\ell \ge (1+\varepsilon)|I|$ zeros.
  Since Blue flags are Green flags, $w_I$ is also imbalanced if $i$ is a Blue $\ell$-flag.
  If $i$ is a Red $\ell$-flag with $i\le L- \ell + 1$, then $|I| = \ell$ and substring $w_I$ has at most $0.9(\ell-1) < (1-\varepsilon)|I|$ zeros. In all three cases, $w_I$ is imbalanced, as desired.
\end{proof}

It is also easy to see that if two imbalanced strings have the same length $n$ and the same number of ones, then their LCS is a constant fraction larger than $n/2$. 
\begin{lemma}
  Let $\delta\in(0,1/2)$.
  Let $s$ and $t$ be $\delta$-imbalanced strings of the same length with the same number $L$ of ones in each. Then $\LCS(s,t)\ge (1/2+\delta/5)|s|$.
\label{lem:trivial-0}
\end{lemma}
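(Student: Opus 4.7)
The plan is to produce two candidate common subsequences---one made entirely of ones and one made entirely of zeros---and observe that in either imbalance regime one of them already has length at least $(1/2+\delta/5)|s|$.

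First I would record that $|s|=|t|$ together with the common one-count $L$ forces the common zero-count $z = |s|-L$ to be equal for both strings. Consequently the ``greedy'' matching that pairs the $i$-th one in $s$ with the $i$-th one in $t$ for every $i\in[L]$ produces a common subsequence of length $L$, and the analogous matching of zeros produces one of length $z$. Hence $\LCS(s,t)\ge\max(L,z)$, and the lemma reduces to the arithmetic statement that $z\notin(1\pm\delta)L$ implies $\max(L,z)\ge(1/2+\delta/5)(L+z)$. This is exactly the ``Strategy~1'' baseline described in Section~\ref{sec:overview}, specialized to the case where the two strings already agree on the count of each symbol.

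To dispatch the arithmetic claim I would split on which side of $(1\pm\delta)L$ the quantity $z$ lies. If $z>(1+\delta)L$, then substituting $L<z/(1+\delta)$ into $n=L+z$ yields $z>\tfrac{1+\delta}{2+\delta}\,n$, and since $\tfrac{1+\delta}{2+\delta}-\tfrac12=\tfrac{\delta}{2(2+\delta)}\ge\tfrac{\delta}{5}$ for $\delta\in(0,1/2)$, the zero-matching already does the job. In the symmetric case $z<(1-\delta)L$, one gets $L>\tfrac{n}{2-\delta}$, and $\tfrac{1}{2-\delta}-\tfrac12=\tfrac{\delta}{2(2-\delta)}\ge\tfrac{\delta}{4}\ge\tfrac{\delta}{5}$, so the one-matching suffices. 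Combining the two cases completes the proof.

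There is essentially no obstacle to overcome; the lemma is a direct corollary of matching the majority symbol, and the only thing to verify is that the constant $1/5$ is loose enough to absorb the weaker of the two imbalance regimes (where the majority symbol happens to be ones, yielding the slightly smaller excess $\delta/(2(2-\delta))$ over $1/2$). The real work in the paper will be to locate imbalanced substrings $s_I,t_I$ of matching length and matching one-count via the structure lemma, so that this elementary bound can be invoked on each piece.
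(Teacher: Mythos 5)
Your proof is correct and matches the paper's argument exactly: both observe that since $s$ and $t$ share the same length and one-count they also share the same zero-count, so the all-ones string (length $L$) and the all-zeros string (length $z=|s|-L$) are both common subsequences, and the imbalance hypothesis forces $\max(L,z)\ge(1/2+\delta/5)|s|$. One small slip in your closing commentary: the regime that actually pins down the constant $1/5$ is the \emph{zero}-majority case, where the excess is $\delta/(2(2+\delta))\ge\delta/5$, not the one-majority case whose excess $\delta/(2(2-\delta))\ge\delta/4$ is the larger of the two; this does not affect the correctness of the proof.
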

\begin{proof}
  If the number of zeros is at most $(1-\delta)L$, then the all-ones string is a common subsequence of length $L\ge \frac{1}{2-\delta}|s| \ge (1/2+\delta/5)|s|$.
  If the number of zeros is at least $(1+\delta)L$, then the all-zeros string is a common subsequence of length at least $\frac{1+\delta}{2+\delta}|s| \ge (1/2+\delta/5)|s|$.
\end{proof}

\paragraph{Prefixes and suffixes.}
For a string $w$ with $L$ ones, and $\Delta\in[-L,L]$, let $\Drop_\Delta(w) \defeq w_{[\max(1,1-\Delta),\min(L,L-\Delta)]}$. Thus, $\Drop_\Delta(w)$ is a prefix of $w$ if $\Delta \ge 0$ and a suffix otherwise. The following lemma (see Figure~\ref{fig:shift}) shows that finding long common subsequences across prefixes and suffixes of many subintervals of $s$ and $t$ implies that $\LCS(s,t)$ is large overall.

\begin{figure}[h]
\centering
\pgfdeclareverticalshading{rainbow}{100bp}
{color(0bp)=(yellow); color(30bp)=(yellow); color(50bp)=(green); color(70bp)=(blue); color(100bp)=(blue)}
\newcommand\addZShade[1]{
  \fill[violet!30] (#1-0.6,0)--(#1-1,1) -- (#1-0.4,1) -- (#1,0);
  \node at (#1-0.5,0.5) {\tiny $+\delta \cdot 2^m$};
\node[outer sep=-5, label={[anchor=south,gray]\tiny$\Drop_\Delta(s_{m,#1})$}] at (#1-0.7,1.05) {};
\node[outer sep=-1,                           label={[anchor=north,gray]\tiny$\Drop_{-\Delta}(t_{m,#1})$}] at (#1-0.3,-0.05) {};
\draw [decorate,decoration={brace,amplitude=3pt}] (#1-1,1) -- (#1-0.4,1);
\draw [decorate,decoration={brace,amplitude=3pt,mirror}] (#1-0.6,0) -- (#1,0);
}
\newcommand\addNormalShade[2]{
  \fill[gray!30] (#2-0.6,0)--(#2-1,1) -- (#1-0.4,1) -- (#1,0);
}
\vspace{-5mm}
\scalebox{1}{
\begin{tikzpicture}
\node[label={[anchor=east]$t$}] at (0,0) {};
\node[label={[anchor=east]$s$}] at (0,1) {};
\draw (0,0) -- (16,0);
\draw (0,1) -- (16,1);
\foreach \x in {0,...,16} {
  \draw(\x,0.1)--(\x,-0.1);
  \draw(\x,1.1)--(\x,0.9);
}

\node[outer sep=-5, label={[anchor=south,gray]$s_{m,3}$}] at (2.5,2) {};
\node[outer sep=-1,                           label={[anchor=north,gray]$t_{m,3}$}] at (2.5,-1) {};
\node[outer sep=-5, label={[anchor=south,gray]$s_{m,2^{n-m}}$}] at (15.5,2) {};
\node[outer sep=-1,                           label={[anchor=north,gray]$t_{m,2^{n-m}}$}] at (15.5,-1) {};
\draw[dashed,gray] (2,2) rectangle (3,-1);
\draw[dashed,gray] (15,2) rectangle (16,-1);
\addNormalShade{0.4}{3}
\addNormalShade{3}{6}
\addNormalShade{6}{8}
\addNormalShade{8}{14}
\addNormalShade{14}{16.6}
\addZShade{3}
\addZShade{6}
\addZShade{8}
\addZShade{14}

\toplabelshift{0}{1}
\toplabelshift{1}{2^m}
\toplabelshift{15.6}{2^n-\Delta}
\bottomlabelshift{0}{1}
\bottomlabelshift{0.4}{\Delta}
\bottomlabelshift{1}{2^m}
\bottomlabelshift{16}{2^n}

\end{tikzpicture}
}
\vspace{-5mm}
\caption{If, for some $\Delta$, we find a large LCS between many (prefixes and suffixes of) dyadic substrings of $s$ and $t$, then $\LCS(s,t)$ is large overall.
We note that Lemma~\ref{lem:shift} works as long as the subintervals (in purple) have LCS beating the trivial matching by $\delta\cdot 2^m$ \emph{on average}, even though the figure depicts \emph{each} subinterval having LCS advantage $\delta\cdot 2^m$.  In the diagram, the set $Z$ from Lemma~\ref{lem:shift} is $\{3,6,8,14\}$. \protect\footnotemark}
\label{fig:shift}
\end{figure}
\footnotetext{Technically, the figure is invalid because we need $m\le n-10-\log\delta^{-1}$ and here $m=n-4$, but we ignore this issue for illustration.}
\begin{lemma}[Prefix/Suffix LCS]
\label{lem:shift}
  Let $\delta>0$, let $m$ and $n$ be integers with $0\le m\le n-10-\log\delta^{-1}$ and $L=2^n$, and let $Z\subset[2^{n-m}]$ satisfy $|Z|\ge 2^{n-m}/10$.
  Suppose that $s$ and $t$ are strings with $L$ ones each, and there exists $\Delta\in [-2^m,2^m]$ such that
  \begin{equation*}
    \sum_{i\in Z}^{} \LCS(\Drop_\Delta(s_{m,i}),\Drop_{-\Delta}(t_{m,i}))\ge |Z|\cdot (2^m - |\Delta| + \delta \cdot 2^m).
  \end{equation*} 
  Then we have
  \begin{equation*}
    \LCS(s,t)\ge \left(1 + \frac{\delta}{20}\right)L.
  \end{equation*}
\end{lemma}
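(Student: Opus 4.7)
The plan is to exhibit an explicit common subsequence of $s$ and $t$ by concatenating, block by block, the LCSes provided by the hypothesis and then filling in the ``gaps'' between blocks with the leftover unmatched ones. First I would reduce to the case $\Delta\ge 0$ (the case $\Delta<0$ is symmetric: apply $\rev$ to both strings and use Lemmas~\ref{lem:rev} and~\ref{lem:rev-lcs}, or simply swap the roles of prefixes and suffixes in what follows). With $\Delta\ge 0$, the substring $\Drop_\Delta(s_{m,i})$ is the prefix of $s_{m,i}$ containing its first $2^m-\Delta$ ones, and $\Drop_{-\Delta}(t_{m,i})$ is the suffix of $t_{m,i}$ containing its last $2^m-\Delta$ ones.

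The common subsequence is built from two kinds of pieces. For each $i\in[2^{n-m}]$, take an LCS of $\Drop_\Delta(s_{m,i})$ with $\Drop_{-\Delta}(t_{m,i})$. For each $i\in[2^{n-m}-1]$, additionally match the \emph{last} $\Delta$ ones of $s_{m,i}$ (which lie in $s$ strictly between $\Drop_\Delta(s_{m,i})$ and $\Drop_\Delta(s_{m,i+1})$) with the \emph{first} $\Delta$ ones of $t_{m,i+1}$ (which lie in $t$ strictly between $\Drop_{-\Delta}(t_{m,i})$ and $\Drop_{-\Delta}(t_{m,i+1})$). These ``gap ones'' are genuinely available in both $s$ and $t$ because $\Drop_\Delta$ and $\Drop_{-\Delta}$ leave them untouched, and the whole concatenation is in order in both strings by construction, so it is a valid common subsequence.

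To bound its length I would use the trivial inequality $\LCS(\Drop_\Delta(s_{m,i}),\Drop_{-\Delta}(t_{m,i}))\ge 2^m-\Delta$ (both sides contain $2^m-\Delta$ ones) for $i\notin Z$, combined with the hypothesized lower bound $2^m-\Delta+\delta\cdot 2^m$ for $i\in Z$, plus $(2^{n-m}-1)\Delta$ from the gap matchings. Summing,
\begin{equation*}
\LCS(s,t)\;\ge\; |Z|(2^m-\Delta+\delta 2^m) + (2^{n-m}-|Z|)(2^m-\Delta) + (2^{n-m}-1)\Delta \;=\; L-\Delta + |Z|\delta 2^m.
\end{equation*}
The hypotheses $|Z|\ge 2^{n-m}/10$ and $m\le n-10-\log\delta^{-1}$ give $|Z|\delta 2^m\ge L\delta/10$ and $\Delta\le 2^m\le L\delta/1024$, so the right-hand side is at least $L(1+\delta/10-\delta/1024)\ge L(1+\delta/20)$.

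The only genuinely nontrivial step is the gap-matching trick: without the extra $(2^{n-m}-1)\Delta$ ones, the naive sum loses $2^{n-m}\Delta$ and beats $L$ only by $|Z|\delta 2^m - 2^{n-m}\Delta$, which can be negative when $\Delta$ is close to $2^m$. Recovering essentially all of the $\Delta$ ``skipped'' ones by matching them across adjacent blocks is what makes the estimate work uniformly over $|\Delta|\le 2^m$, and verifying that the leftover ones sit in the correct positions to interleave with the in-block LCSes is the part that most needs care.
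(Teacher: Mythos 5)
Your proof is correct and, up to presentation, is the same as the paper's. The paper begins with the global matching of the $i$-th one of $s$ to the $(i+\Delta)$-th one of $t$ (which automatically accounts for your ``gap matchings'') and then locally replaces the matched ones inside the $Z$-blocks by the LCS matchings; your block-by-block concatenation produces exactly the same non-crossing matching, the same intermediate bound $\LCS(s,t)\ge L-|\Delta|+|Z|\delta\cdot 2^m$, and the same final arithmetic.
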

\begin{proof}
  Finding a common sequence between $s$ and $t$ is equivalent to exhibiting a matching between the bits of $s$ and the bits of $t$ such that only equal bits are matched and such that the matching is ``non-crossing,'' meaning that earlier bits of $s$ are matched with earlier bits of $t$.

  Consider the matching where we match the $i$-th one in $s$ with the $(i+\Delta)$-th one in $t$.
  This matches $L-|\Delta|$ ones.
  In this matching, for each $i\in Z$, the $2^m-|\Delta|$ ones of $\Drop_\Delta(s_{m,i})$ are exactly matched to the $2^m-|\Delta|$ ones of $\Drop_{-\Delta}(t_{m,i})$.
  For each $i\in Z$, replace the matching between the ones of $\Drop_\Delta(s_{m,i})$ and $\Drop_{-\Delta}(t_{m,i})$ with a matching for the LCS of $\Drop_\Delta(s_{m,i})$ and $\Drop_{-\Delta}(t_{m,i})$. 
 
  All of these replacements can be done simultaneously and independently while keeping the matching non-crossing.
  Each of the $|Z|$ replacement operations deletes $2^m-|\Delta|$ pairs, and in total the replacements add $|Z|(2^m-|\Delta|+\delta \cdot 2^m)$ pairs.
  Thus in total the replacements increase the number of matched pairs by at least $|Z|\cdot \delta \cdot 2^m$.
  Thus, the total length of this common subsequence is at least (recall $L=2^n$)
  \begin{align*}
    L-|\Delta| + |Z|\cdot (\delta 2^m) 
    \ge L - 2^{m} + \frac{2^{n-m}}{10}\cdot \delta\cdot 2^m
    \ge L - \frac{\delta}{2^{10}}L + \frac{\delta}{10} L
    \ge \left(1 + \frac{\delta}{20}\right)L.
  \end{align*}
  In the second inequality, we used that $m\le n-10-\log\delta^{-1}$.
  Thus $\LCS(s,t)\ge \left(1 + \frac{\delta}{20}\right)L$, as desired. 
\end{proof}

\section{The structure lemma and definition of types}\label{sec:structure}

\subsection{The structure lemma}

Throughout this section, we reason about a single string $w$ with $L$ ones, and assume $w$ starts with a one. 
Recall that $\varepsilon=10^{-6}$.
Our main structure lemma is as follows.
 
\begin{lemma}[Structure Lemma]
\label{lem:classify}

If $w\in\{0,1\}^{*}$ is a string that starts with one and has exactly $L=2^n$ ones, and $n$ is sufficiently large (in terms of $\eps$), then at least one of the following conditions hold.
\begin{enumerate}
\item \label{enu:skew}There exists an interval $I\subseteq[L]$ of size $|I| \ge \eps^2 L$ such that $w_I$ is imbalanced.
\item \label{enu:Green}There exists $1 \le \ell \le L$ such that the number of Green $\ell$-flags in $w$ is
at least $\eps^{2}L$.
\item \label{enu:Blue-Red}There exists $1\le m\le n$ such that the number of Blue $(2^m)^+$-flags in $w$ is
at least $\eps^{2}L$, and for every $\ell \ge 2^m$ the number of Red $\ell$-flags in $w$ is at most $600\eps L$.
\end{enumerate}
\end{lemma}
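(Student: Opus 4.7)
The plan is to prove the lemma by contrapositive: assume conditions~\ref{enu:skew} and~\ref{enu:Green} both fail, and derive condition~\ref{enu:Blue-Red}. Failure of~\ref{enu:skew} applied to $I=[L]$ gives the global balance $z(w)\in(1\pm\eps)L$; applied to arbitrary subintervals, no $w_I$ with $|I|\ge \eps^2 L$ is imbalanced, so by Lemma~\ref{lem:blue-0} there are no interior Red (or Green, or Blue) $\ell$-flags at any scale $\ell\ge \eps^2 L$. Failure of~\ref{enu:Green} gives at most $\eps^2 L$ Green (and hence Blue) $\ell$-flags at every scale $\ell$.

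I would let $m^*$ be the smallest $m\in[1,n]$ such that the number of Red $\ell$-flags is at most $600\eps L$ for every $\ell\ge 2^{m^*}$; such an $m^*$ exists because at $\ell=L$ there is a single flag, which is Yellow by global balance. The task reduces to verifying condition~\ref{enu:Blue-Red}(a) at this $m^*$: Blue $(2^{m^*})^+$-flags number at least $\eps^2 L$.

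The core technique is a sum-of-rates computation at a scale just below $2^{m^*}$. If $m^*=1$, then (b) holds at every $\ell\ge 2$; combined with few Green, the majority of $\ell=2$-flags must be Yellow-not-Green, which, since rates at $\ell=2$ are integer-valued, forces the number of zeros between the $i$-th and $(i+2)$-nd ones of $w$ to equal exactly $1$ for most $i$. Summing yields $z(w)\approx L/2$, contradicting global balance. Hence we may take $m^*\ge 2$, and by minimality there is a scale $\ell_0\in[2^{m^*-1},2^{m^*})$ with more than $600\eps L$ Red $\ell_0$-flags. Since $\sum_i r_i\approx z(w)\approx L$ at scale $\ell_0$, while Red flags contribute at most $0.9$ each, Yellow-not-Green at most $1+2\eps$ each, and Green-not-Blue at most $\eps^{-1}\cdot\eps^2 L=\eps L$ in total, the ``ones-richness'' excess forces the Blue rate sum $\sum_{\textnormal{Blue}} r_i$ to be at least $\Omega(\eps L)$. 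Since individual Blue rates are at most $L/(\ell_0-1)$, a subcount of $\Omega(\eps(\ell_0-1))$ Blue $\ell_0$-flags must have rate at least $3\eps^{-1}$, and for each such $i$, the substring $w_{[i,i+\ell_0)}$ already contains at least $3\eps^{-1}(\ell_0-1)>\eps^{-1}(2\ell_0-1)$ zeros, so $i$ remains Blue at scale $2\ell_0\ge 2^{m^*}$, contributing to condition~\ref{enu:Blue-Red}(a).

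The principal obstacle is the bookkeeping in the rate-sum argument and closing the quantitative gap when $m^*$ is small: the promoted Blue count $\Omega(\eps\cdot 2^{m^*-1})$ can fall short of $\eps^2 L$ when $2^{m^*}$ is much smaller than $\eps L$. To handle small $m^*$, I would extend the combinatorial argument of the $m^*=1$ case by examining the joint structure across scales $\ell\in[2,2^{m^*}]$: if (a) fails at $m^*$, then Blue $\ell$-flag counts at each scale $\ell\ge 2^{m^*}$ are also sparse, so combined with few Green and few Red (from (b) at $m^*$), most flags at every $\ell\ge 2^{m^*}$ must be Yellow-not-Green, which constrains the gap-structure of $w$ enough to contradict the global balance $z(w)\in(1\pm\eps)L$. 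A secondary subtlety is handling boundary Red flags at large $\ell$, whose count must fit within the $600\eps L$ slack; global balance of $w$ forces every long suffix of $w$ to itself be balanced, so boundary Red flags are concentrated in the last $O(\ell)$ indices and remain within tolerance for the relevant range of $\ell$.
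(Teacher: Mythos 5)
Your overall strategy matches the paper's at the top level — contrapositive, then rate-sum bookkeeping at a scale $\ell_0$ where Red flags are abundant, forcing the total rate over Blue $\ell_0$-flags to be $\Omega(\eps L)$ — but the step you flag as your principal obstacle is indeed the heart of the proof, and your proposed repair does not close it. The ``promotion'' idea (high-rate Blue $\ell_0$-flags remain Blue at $2\ell_0$) can never produce enough flags: the promoted flags are a subset of the Blue $\ell_0$-flags, which by failure of condition~\ref{enu:Green} already number fewer than $\eps^2 L$, and the count you actually extract from the rate-sum is only $O(\eps\ell_0)$, which is $\ll \eps^2 L$ as soon as $\ell_0 \ll \eps L$. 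The paper instead isolates Lemma~\ref{lem:blue-plus}, which converts a total Blue zero-mass of $\alpha L$ at scale $\ell$ into $\Omega(\alpha\eps L)$ distinct Blue $\ell^+$-flags \emph{independent of $\ell$}, precisely by producing indices that were \emph{not} themselves Blue $\ell$-flags to begin with: after pigeonholing to one residue class modulo $\ell$, one finds maximal zero-heavy dyadic blocks of consecutive windows, observes that \emph{every} index in the block immediately to the left of such a heavy block sees enough zeros at a doubled dyadic scale to be a Blue $\ell^+$-flag, and then passes to a disjoint subfamily of predecessor blocks to count them without overlap. This dyadic amplification is the ingredient your proposal is missing, and it is needed regardless of how $m$ is chosen: with it, the paper fixes $m$ maximal so that the Blue $(2^m)^+$-count lies in $[\eps^2 L, 2\eps^2 L)$, and a hypothetical $\ell\ge 2^m$ with many Red flags then yields, by a second application of the amplification, at least $2\eps^2 L$ Blue $\ell^+$-flags, a contradiction.

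Your fallback — that sparsity of Red, Green, and Blue flags at all $\ell\ge 2^{m^*}$ forces most flags to be Yellow-not-Green and thereby constrains the gap structure enough to contradict global balance — does not hold beyond $\ell = 2$. The $\ell=2$ argument works only because rates there are integers, so ``Yellow-not-Green'' pins the rate to exactly $1$ and forces $z(w)\approx L/2$. For general $\ell$ a rate anywhere in $(0.9,\, 1+2\eps]$ is entirely consistent with $z(w)\approx L$, and no contradiction follows from any single scale nor from any obvious combination of scales; so this route cannot substitute for the amplification.
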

The three cases of the Structure Lemma correspond exactly to the three matching strategies outlined in the Overview (Section~\ref{sec:overview}).
Case \ref{enu:skew} is when $w$ is imbalanced at a macroscopic scale, i.e., a linear-length subword with density far from $\frac{1}{2}$. 
Case \ref{enu:Green}, the ``Green case'', is when $w$ ``fluctuates on a single scale'' and can be treated by studying that scale only, i.e., using the Green strategy described in the Overview (see Section~\ref{sec:green}).
Case~\ref{enu:Blue-Red} is when $w$ is ``sporadic'' and must be analyzed at many scales simultaneously, which is done with the Blue-Yellow strategy in the Overview (see Section~\ref{sec:blue}). 
Most strings $w$ in fact fall into Case \ref{enu:Green} with $\ell=1$, and Case \ref{enu:Blue-Red} is only necessary for highly structured strings designed to oscillate at multiple frequencies.
Before we prove Lemma~\ref{lem:classify}, we need the following technical lemma, which gives part of Lemma~\ref{lem:classify} under the additional assumption that many of the zeros in $w$ are concentrated at Blue flags.

\begin{lemma}
\label{lem:blue-plus}
Suppose $\alpha > 0$, $n$ is sufficiently large, $L=2^{n}$, $1\le \ell \le L$, and $w\in\{0,1\}^{*}$ is a string with $L$ ones. If
\begin{align}
\frac{1}{\ell}\sum_{i\in B_{\ell}}z(w_{[i,i+\ell)})\ge\alpha L,
  \label{eq:blue-plus-0}
\end{align}
where $B_{\ell}$ is the set of all Blue $\ell$-flags of $w$, then at least one of the following conditions hold. 
\begin{enumerate}
\item \label{enu:bp-skew}There exists an interval $I\subseteq[L]$ of size $|I| \ge \eps^2 L$ such that $w_I$ is imbalanced.
\item \label{enu:bp-Green} The number of Blue $\ell$-flags in $w$ is
at least $\eps^{2}L$.
\item \label{enu:bp-Blue-Red} The number of Blue $\ell^{+}$-flags in $w$ is at least $(\alpha-22\eps)\eps L/16$.
\end{enumerate}
\end{lemma}

\begin{proof} We assume Conditions~\ref{enu:bp-skew} and~\ref{enu:bp-Green} do not hold and prove Condition~\ref{enu:bp-Blue-Red}.
We first prove a slightly stronger lower bound (see \eqref{eq:t-lower-power-two}) on the number of Blue $\ell^+$ flags than claimed in Condition~\ref{enu:bp-Blue-Red} in the special case that $\ell$ is a power of $2$, and will then handle the case of general $\ell$.

Since $|B_{\ell}|<\eps^{2}L$, if we define $A_{\ell}\defeq\{i\in B_{\ell} \mid z(w_{[i,i+\ell)})>2\eps^{-1}\ell\}$
we find that
\[
\frac{1}{\ell}\sum_{i\in B_{\ell}\backslash A_{\ell}}z(w_{[i,i+\ell)})
\le
\frac{1}{\ell}\cdot (2\varepsilon^{-1}\ell)\cdot |B_\ell\setminus A_\ell|
\le
2\eps L,
\quad 
\text{so}
\quad
\frac{1}{\ell}\sum_{i\in A_{\ell}}z(w_{[i,i+\ell)})\ge(\alpha-2\eps)L \ .
\]
By the pigeonhole principle, we can pick a residue class $r\in\{1,\ldots,\ell\}$
such that the set $S \subseteq [L/\ell]$ defined by $S\defeq\{j \mid (j-1)\ell+r\in A_{\ell}\}$ satisfies 
\[
\sum_{j\in S}z(w_{[(j-1)\ell+r,j\ell+r)})\ge(\alpha-2\eps)L.
\]
Write $z_{j}\defeq z(w_{[(j-1)\ell+r,j\ell+r)})$ for the number of zeros in the $j$-th such interval. We have that for
each $j\in S$, $z_{j}>2\eps^{-1}\ell$, and $z_{S}\ge(\alpha-2\eps)L$
(using the summation notation $z_{U}\defeq\sum_{j \in U}z_{j}$ for any set $U$). 

Let $L'=L/\ell$. Since $\ell$ and $L$ are powers of 2, $L'$ is as well. We consider the family of all dyadic intervals $I_{m,i} \defeq [(i-1) \cdot 2^m + 1, i \cdot 2^m]$ where $0\le m \le \log_2 (L')$ and $1\le i \le L'\cdot 2^{-m}$. We define intervals $I_{m,i}$ for convenience to reason about intervals $J_{m,i}\defeq[(i-1)\cdot2^{m}\ell+r,i\cdot2^{m}\ell+r)$ in the string $w$: indeed, $z_{I_{m,i}} = \sum_{j\in I_{m,i}}^{} z_j$ counts the number of zeros in the substring $w_{[(i-1)\cdot 2^m \ell + r, i\cdot 2^m \ell  + r)} = w_{J_{m,i}}$.

Let $\mathcal{I}$ be the set of all such intervals $I_{m,i}$ maximal under the property that $z_{I_{m,i}}>2\varepsilon^{-1}\ell\cdot |I_{m,i}|$, so that any other dyadic interval satisfying this property lies inside a member of $\mathcal{I}$.
By maximality of the $I_{m,i}$, the intervals of $\mathcal{I}$ are pairwise disjoint.
Furthermore, for $j\in S$, we have $z_{I_{0,j}}=z_j>2\varepsilon^{-1}\ell$, so each $j\in S$ is in some interval of $\mathcal{I}$.

We claim that these intervals satisfy $z_{I_{m,i}}\le4\eps^{-1}\ell\cdot|I_{m,i}|$. Suppose otherwise. Either $I_{m,i} = [L']$ or there is a dyadic interval $I_{m+1,\lceil i/2\rceil}$ in $[L']$ containing $I_{m,i}$ with twice the size of $I_{m,i}$. In the former case, $z(w_{[L]}) = z_{[L']} > 4\varepsilon^{-1}\ell L' > L+\eps L$, which would imply condition~\ref{enu:bp-skew}. In the latter case, $I_{m+1,\lceil i/2\rceil}$ is an interval containing $I_{m,i}$ satisfying $z_{I_{m+1,\lceil i/2\rceil}}>2\eps^{-1}\ell\cdot|I_{m+1,\lceil i/2\rceil}|$, contradicting the maximality of $I_{m,i}$. This proves the claim.

Since every $j\in S$ lies in some element of $\I$, we have
\[
4\eps^{-1}\ell\cdot\sum_{I_{m,i}\in\I}|I_{m,i}|\ge\sum_{I_{m,i}\in\I}z_{_{I_{m,i}}}\ge z_{S}\ge(\alpha-2\eps)L,
\]
and so 
\begin{equation}
\label{eq:length-of-maximal}
    \sum_{I_{m,i}\in\I}|I_{m,i}|\ge\frac{1}{4}(\alpha-2\eps)\eps L/\ell \ . 
\end{equation}

\begin{figure}
  \centering
  \scalebox{1.5}{
\begin{tikzpicture}
  \node at (4, 1) {$w_{J_{m,i-1}}$};
  \node[align=center] at (4, 0.5) {\tiny $2^m\ell$ ones};
  \node at (7.5, 1) {$w_{J_{m,i}}$};
  \node[align=center] at (7.5, 0.65) {\tiny $2^m\ell$ ones};
  \node[align=center] at (7.5, 0.35) {\tiny $z_{I_{m,i}}>2\varepsilon^{-1}\cdot 2^m\ell$ zeros};
  \node[outer sep=0, inner sep=1, fill=blue!30] (1a) at (3.1,0) {\tiny 1};
  \node[outer sep=0, inner sep=1, fill=blue!30] (1b) at (3.5,0) {\tiny 1};
  \node[outer sep=0, inner sep=1, fill=blue!30] (1c) at (3.9,0) {\tiny 1};
  \node[outer sep=0, inner sep=1, fill=blue!30] (1d) at (4.9,0) {\tiny 1};
  \node[outer sep=0, inner sep=0] (b) at (4,-1) {\small Blue $2^{m+1}\ell$-flags};
  \node[outer sep=0, inner sep=1] (1s) at (7.5,0) {\tiny 10000000000000000000000000010100000000001000000};
  \draw[->] (b) -- (1a);
  \draw[->] (b) -- (1b);
  \draw[->] (b) -- (1c);
  \draw[->] (b) -- (1d);
  \draw (3,-0.1) rectangle (5,0.1);
  \draw (5,-0.1) rectangle (10,0.1);
\end{tikzpicture}
}
  \caption{Our method for finding Blue $\ell^+$-flags: the ones in an interval $J_{m,i-1}$ preceeding a zero-rich interval $J_{m,i}$ are all Blue $\ell^+$-flags. We find lots of Blue $\ell^+$-flags by finding pairwise-disjoint zero-rich intervals $J_{m,i}$ with large total size, and then showing their predecessor intervals $J_{m,i-1}$ also have a large union, hence giving lots of Blue $\ell^+$-flags.}
  \label{fig:struct}
\end{figure}

Thus the dyadic intervals in $\I$ have an abundance of zeros in total. We need to convert this into an abundance of $\ell^{+}$-flags. 
The key observation (see Figure~\ref{fig:struct}) is that if $I_{m,i}\in\I$, then any $j\in J_{m,i-1}$ is a Blue $\ell^{+}$-flag. 
Indeed, for such a $j$ we have $[j,j+2^{m+1}\ell)\supseteq J_{m,i}$, so (recalling $z_{I_{m,i}}=z(w_{J_{m,i}})$),
\[
z(w_{[j,j+2^{m+1}\ell)})\ge z(w_{J_{m,i}})=z_{I_{m,i}}>2\eps^{-1}\ell\cdot|I_{m,i}|=\eps^{-1}\cdot2^{m+1}\ell,
\]
so $j$ is a Blue $(2^{m+1}\ell)$-flag, and thus an $\ell^{+}$-flag
as desired since $\ell$ is a power of $2$. 
 
We conclude that all of the elements of
\begin{equation}
\label{eq:not-quite-disjoint}    
T\defeq[L]\cap\bigcup_{I_{m,i}\in\I}J_{m,i-1}
\end{equation}
are Blue $\ell^{+}$-flags of $w$. 

The union in \eqref{eq:not-quite-disjoint} may not be a disjoint union, but it is still large enough. To see this, say two intervals $I_{m_1, i_1}, I_{m_2, i_2} \in \I$ \emph{conflict} if their predecessor intervals $I_{m_1,i_1-1}, I_{m_2,i_2-1}$ have nonempty intersection. 
In this case, one of $I_{m_1,i_1-1},I_{m_2,i_2-1}$ must be a subset of the other as they are dyadic intervals.
We show we can find a large subcollection $\I'\subseteq \I$ of intervals no two of which conflict.
Choose $\I'$ greedily by repeatedly adding to $\I'$ the largest interval in $\I$ that does not conflict with any interval already in $\I'$.
If some interval $I_{m_1,i_1}\in\I$ is not added to $\I'$, it must conflict with some interval $I_{m_2,i_2}\in \I'$ with $m_2\ge m_1$ (in fact $m_2>m_1$). Then, by definition of conflict, $I_{m_2,i_2-1}$ is a superset of $I_{m_1,i_1-1}$, and thus also a superset of $I_{m_1,i_1}$ ($I_{m_1,i_1}$ is disjoint from $I_{m_2,i_2}$ by choice of $\I$).
Thus, the union of intervals in $\I\setminus \I'$ is a subset of the union of the predecessors of intervals in $\I'$.
As elements of $\I$ are pairwise disjoint, it follows that the total length of the intervals in $\I\setminus\I'$ is at most the total length of the intervals in $\I'$, and so the total size of the intervals in $\I'$ is at least half of the total size of intervals in $\I$.

Thus, $\sum_{\I'}|I_{m,i}|\ge\frac{1}{2}\sum_{\I}|I_{m,i}|$, and since no two intervals in $\I'$ conflict, the intervals $\{J_{m,i-1} \mid I_{m,i}\in\I'\}$ are pairwise disjoint. It is possible for $\I'$ to contain intervals $I_{m,i}$ with $i=0$, which would cause $J_{m,i-1}$ to lie outside $[L]$. However, the corresponding intervals $J_{m,i-1}$ are all pairwise disjoint, and if $i=0$ then $J_{m,i-1}$ contains $r-1$. Thus, there can be at most one interval $I_{m,i}$ for which $i=0$ and $J_{m,i-1}$ does not lie entirely inside $[L]$. This exceptional interval $J_{m,i-1}$ must have size
at most $\eps^{2}L$ or else
\[
z(w_{J_{m,i-1}}) > (1+\eps)\cdot 2^m, 
\]
for $2^m \ge \eps^2 L$, implying condition~\ref{enu:bp-skew}. It follows that once we exclude at most one exceptional interval, $w$ has at least
\begin{align}
|T|\ge\sum_{I_{m,i}\in\I'}|J_{m,i-1}|-\eps^{2}L\ge\frac{1}{2}\sum_{I_{m,i}\in\I}|J_{m,i-1}|-\eps^{2}L\ge\frac{(\alpha-10\eps)\eps}{8}L,
\label{eq:t-lower-power-two}
\end{align}
Blue $\ell^+$-flags, where the last step used \eqref{eq:length-of-maximal}. This proves the lemma, in fact with a stronger lower bound on number of Blue $\ell^+$ flags, when $\ell$ is a power of $2$.

Now suppose $\ell$ is not a power of $2$. If $\ell'$ is the smallest power of $2$ greater than or equal to $\ell$, then all Blue $(\ell')^+$-flags are also Blue $\ell^+$-flags and $1\le \ell' \le L$ since $L$ is a power of $2$. Furthermore, if $B_{\ell'}$ is the set of Blue $\ell'$-flags, $z(w_{[i,i+\ell)}) \le z(w_{[i,i+\ell')}) \le \eps^{-1} (\ell'-1)$ if $i\not\in B_{\ell'}$. Thus, since $|B_\ell|<\eps^2 L$ we get
\[
\frac{1}{\ell'}\sum_{i\in B_{\ell}\setminus B_{\ell'}}z(w_{[i,i+\ell)})\le \eps L,
\]
which implies
\[
\frac{1}{\ell'}\sum_{i\in B_{\ell'}}z(w_{[i,i+\ell')}) \ge \frac{1}{\ell'}\sum_{i\in B_{\ell}\cap B_{\ell'}}z(w_{[i,i+\ell)})\ge \frac{1}{2\ell}\sum_{i\in B_{\ell}}z(w_{[i,i+\ell)}) - \eps L \ge (\alpha/2 - \eps)L,
\]
where in the last inequality we used the original assumption from the lemma statement that $\frac{1}{\ell}\sum_{B_{\ell}}z(w_{[i,i+\ell)}) \ge \alpha L$.
Thus, the conditions of the lemma are true with modified parameters $\ell'\ge \ell$ a power of $2$ and $\alpha'=\alpha/2 - \eps$. It follows by applying (\ref{eq:t-lower-power-two}) with these parameters instead that $w$ must have at least
\[
\frac{(\alpha'-10\eps)\eps}{8}L\ge \frac{(\alpha - 22\eps)\eps}{16}L
\]
Blue $\ell^+$-flags in the general case, thus completing the proof.
\end{proof}

Now we are ready to prove Lemma~\ref{lem:classify}.

\begin{proof}[Proof of Lemma~\ref{lem:classify}.]
We assume conditions \ref{enu:skew} and \ref{enu:Green} do not
hold for some $w$, and prove condition \ref{enu:Blue-Red}.

Pick $m\in[0,n]$ maximal such that $w$ contains at least $\eps^{2}L$
Blue $(2^{m})^{+}$-flags. To see that such an $m$ exists, let $B_{1}$
denote the set of Blue $1$-flags in $w$, which is just the set
of one-bits in $w$ immediately followed by at least one zero. Then, we have
\[
\sum_{i\in B_{1}}z(w_{[i,i+1)})=z(w_{[L]})\ge L-\eps L,
\]
assuming condition \ref{enu:skew} does not hold. Furthermore, because
Blue flags are Green flags, condition \ref{enu:Green} being false
implies there are fewer than $\eps^{2}L$ Blue $2^{t}$-flags for
any $0\le t\le n$. In particular, $|B_{1}|<\eps^{2}L$. Thus, the
conditions of Lemma \ref{lem:blue-plus} are satisfied with $\ell = 1$
and $\alpha=1-\eps$, and we obtain that either $w$ satisfies condition~\ref{enu:skew} or condition~\ref{enu:bp-Green} (since Blue flags are Green flags) or the
number of Blue $1^{+}$-flags in $w$ is at least 
\[
(\alpha-22\eps)\eps L/16=(\eps-23\eps^{2})L/16\ge2\eps^{2}L.
\]
Here we used $\eps$ is sufficiently small. Thus, some such $0\le m\le n$ exists. 

By the maximality of $m$ and the fact that there are fewer than $\eps^{2}L$
Blue $2^{t}$-flags for any particular $0\le t\le n$, we see that in fact $m\ge 1$ and the number
of Blue $(2^{m})^{+}$-flags is in $[\eps^{2}L,2\eps^{2}L)$. Indeed, if the number of such flags were at least $2\eps^2L$, then at most $\eps^2L$ of these flags are Blue $2^m$-flags, so the number of Blue $(2^{m+1})^{+}$-flags would be at least $\eps^2 L$, contradicting the maximality of $m$. It remains
to check that the second half of condition \ref{enu:Blue-Red} holds.

Suppose $\ell\ge 2^m$, and let $B_{\ell},G_{\ell},Y_{\ell},$
and $R_{\ell}$ be the sets of $\ell$-flags in $w$ which are Blue,
Green, Yellow, and Red (respectively). Note that by our definitions
of the colors, $B_{\ell}\subseteq G_{\ell}\subseteq Y_{\ell}$ and
$Y_{\ell}\sqcup R_{\ell}=[L]$. Suppose for the sake of contradiction
that $|R_{\ell}|>600\eps L$. We may assume $\ell\le\eps^{2}L$, as otherwise a single Red $\ell$-flag would violate condition \ref{enu:skew}. We can express $z(w_{[L]})$ as
\[
z(w_{[L]})=\frac{1}{\ell}\sum_{i=-\ell}^{L}z(w_{[i,i+\ell)})
\]
since each $z(w_{\{i\}})$ appears exactly $\ell$ times in the sum on the
right. Setting aside the terms on the right with
$i\le0$, we find by breaking up the sum in terms of the colors of
the flags,
\begin{equation}
z(w_{[L]})\le\frac{1}{\ell}\left(\sum_{i=-\ell}^{0}z(w_{[i,i+\ell)})+\sum_{i\in B_{\ell}}z(w_{[i,i+\ell)})+(\ell-1)(\eps^{-1}|G_{\ell}\backslash B_{\ell}|+(1+2\eps)|Y_{\ell}\backslash G_{\ell}|+0.9|R_{\ell}|)\right).\label{eq:double-count}
\end{equation}
We can bound $\frac{1}{\ell}\sum_{i=-\ell}^{0}z(w_{[i,i+\ell)})\le z(w_{[\ell]})$,
$|G_{\ell}\backslash B_{\ell}|\le|G_{\ell}|<\eps^{2}L$ (since condition
\ref{enu:Green} is false), and $|Y_{\ell}\backslash G_{\ell}|\le|Y_{\ell}|=L-|R_{\ell}|$.
Putting these together with (\ref{eq:double-count}), we obtain

\begin{equation}
z(w_{[L]})\le L+3\eps L + z(w_{[\ell]})+\frac{1}{\ell}\sum_{i\in B_{\ell}}z(w_{[i,i+\ell)})-0.1|R_{\ell}|.\label{eq:simplified}
\end{equation}
Since condition \ref{enu:skew} is false, we have $z(w_{[L]})\ge L-\eps L$,
so together with (\ref{eq:simplified}) and $|R_\ell|> 600\eps L$ we get
\[
z(w_{[\ell]})+\frac{1}{\ell}\sum_{i\in B_{\ell}}z(w_{[i,i+\ell)})\ge56\eps L.
\]
Next, we claim that $z(w_{[\ell]})\le\eps L$. This follows from the
facts that $\ell\le\eps^{2}L$ and $z(w_{[\eps^{2}L]})\le\eps L$ (by the assumption that condition \ref{enu:skew} does not hold). We get
\[
\frac{1}{\ell}\sum_{i\in B_{\ell}}z(w_{[i,i+\ell)})\ge55\eps L.
\]
The conditions of Lemma
\ref{lem:blue-plus} are satisfied with this $\ell$ and $\alpha=55\eps$.
Applying the lemma, we find that either $w$ satisfies one of condition~\ref{enu:skew} or~\ref{enu:Green}, or the number of Blue $\ell^{+}$-flags
in $w$ is at least $(55\eps-22\eps)\eps L/16 > 2\eps^{2}L$.
This contradicts the observation we made before that the number of
Blue $(2^{m})^{+}$-flags is in $[\eps^{2}L,2\eps^{2}L)$ and completes
the proof.
\end{proof}

\subsection{Definition of types}

Using the structure theorem, we define below the \emph{type} of a string roughly based on the case that it satisfies in the structure theorem. 
These three definitions of types (Imbalanced, Green, Blue-Yellow) roughly align with the three cases of the structure theorem, though for Green and Blue-Yellow types, it helps to additionally have an upper bound on $\ell$, the length of the flags. 
Because we only prove the Structure Lemma (Lemma~\ref{lem:classify}) for strings whose number of ones is a power of two, we also only define types for strings whose number of ones is a power of two, which is enough for our application. Recall that $b_w(i)$ is defined to be the largest $\ell\in [L]$ such that $\ell$ is a power of two and $i$ is a Blue $\ell$-flag in $w$, with $b_w(i)=0$ if no such $\ell$ exists. Also, recall that $\gamma=0.001\varepsilon^2$.
\begin{definition}
  \label{def:type}
  Given a string $w$ with $L=2^n$ ones with $n$ sufficiently large, we say the \emph{type of $w$} is
  \begin{enumerate}
    \item {\bf Imbalanced} if there exists some interval $I\subseteq[L]$ of size $|I|\ge \varepsilon^{5} L$ such that $w_I$ is imbalanced.
    \item {\bf $\ell$-Green} for some integer $1\le \ell\le \varepsilon^{5}L$ if the number of Green $\ell$-flags in $w$ is at least $\eps^{2}L$. 
    \item {\bf $m$-Blue-Yellow} if there exists $1\le m\le n$ such that the number of indices $i\in[L]$ with $2^m\le b_w(i)\le \gamma L$ is at least $(\eps^{2}-\gamma)L$, and the number of Red $\ell$-flags in $w$ is at most $600\eps L$ for any $\ell \ge 2^m$.
  \end{enumerate}
  If $w$ could be more than one type, we assign $w$ one of the possible types arbitrarily.
\end{definition}
Note that there are at most $1+\eps^5 L+\log L= O(|w|)$ possible types for a string $w$. 
As a corollary of Lemma~\ref{lem:classify}, each string $w$ has a type, assuming $n$ is sufficiently large: 
\begin{lemma}
  \label{lem:type}
  If $n$ is sufficiently large, each string $w$ with $L=2^n$ ones has a type.
\end{lemma}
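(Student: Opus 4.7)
The plan is to derive Lemma~\ref{lem:type} directly from the structure lemma (Lemma~\ref{lem:classify}) by matching each of its three cases to one of the three types. The types strengthen the structure lemma's cases in three small ways: the Imbalanced type needs only $|I|\ge \eps^5 L$ (weaker than the $\eps^2 L$ guaranteed), while the $\ell$-Green and $m$-Blue-Yellow types add the extra upper bounds $\ell \le \eps^5 L$ and $b_w(i)\le \gamma L$ respectively. Thus, whenever one of these upper bounds fails to hold, I must instead verify that $w$ is Imbalanced. Throughout I use the constant inequalities $\eps^5 < \gamma < \eps^2$, which hold from $\eps^5 = 10^{-30}$, $\gamma = 10^{-15}$, $\eps^2 = 10^{-12}$.

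Case~\ref{enu:skew} of Lemma~\ref{lem:classify} immediately yields the Imbalanced type. In Case~\ref{enu:Green}, if $\ell \le \eps^5 L$ then $w$ is $\ell$-Green by definition; otherwise $\eps^5 L < \ell \le L$, and I show $w$ is Imbalanced. Either (a) some Green $\ell$-flag $i$ satisfies $i+\ell-1\le L$, so that $I = [i,i+\ell-1]$ has $|I| = \ell > \eps^5 L$ and $w_I$ is imbalanced by Lemma~\ref{lem:blue-0}; or (b) every Green $\ell$-flag lies in $[L-\ell+2, L]$, and since $\ge \eps^2 L$ of them must fit there, the smallest flag $i^*$ satisfies $i^* \le L - \eps^2 L + 1$. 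Applying Lemma~\ref{lem:blue-0} at $i^*$ then yields that $w_{I^*}$ is imbalanced for $I^* = [i^*, L]$, with $|I^*| \ge \eps^2 L \ge \eps^5 L$.

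For Case~\ref{enu:Blue-Red} with parameter $m$, let $X = \{i\in[L] : b_w(i) > \gamma L\}$. If $|X|<\gamma L$, then subtracting $X$ from the set of $\ge \eps^2 L$ indices with $b_w(i)\ge 2^m$ leaves $\ge (\eps^2-\gamma)L$ indices satisfying $2^m \le b_w(i)\le \gamma L$, so $w$ is $m$-Blue-Yellow (the Red-flag bound for $\ell\ge 2^m$ transfers verbatim from the structure lemma; note that $2^m\le \gamma L$ is forced, since otherwise $|X|\ge \eps^2 L > \gamma L$). Otherwise $|X|\ge \gamma L$, in which case I show $w$ is Imbalanced. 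Since $b_w(i)$ is a power of two in $[L]$, there are only $O(\log(1/\gamma)) = O(1)$ possible values of $b_w(i)$ exceeding $\gamma L$, so by pigeonhole some specific $\ell^* > \gamma L$ has $\Omega(\gamma L)$ indices with $b_w(i)=\ell^*$, each of which is in particular a Blue $\ell^*$-flag. The dichotomy from Case~\ref{enu:Green} now applies (Blue flags being Green): either some such flag fits with $i+\ell^*-1\le L$, giving an imbalanced substring of size $\ell^* > \eps^5 L$; or the smallest flag $i^*$ gives $I^* = [i^*, L]$ of size $\Omega(\gamma L) > \eps^5 L$ with $w_{I^*}$ imbalanced by Lemma~\ref{lem:blue-0}.

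The main (though minor) obstacle is the ``truncated tail'' phenomenon where many flags cluster near the end of $w$, so that the canonical flag-associated substring $w_{[i,i+\ell)}$ no longer fits entirely inside $w$. I resolve this uniformly by examining the suffix $w_{[i^*, L]}$ for the smallest such flag, whose length is automatically at least the number of flags lying in the tail, which in both our uses is $\ge \eps^5 L$. Once this technicality is handled, each case of Lemma~\ref{lem:classify} funnels cleanly into one of the three types, completing the proof.
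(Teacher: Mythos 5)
Your proof is correct and follows essentially the same route as the paper: derive each type from the corresponding case of Lemma~\ref{lem:classify}, and whenever the extra upper bound ($\ell \le \eps^5 L$ or $b_w(i)\le\gamma L$) fails, invoke Lemma~\ref{lem:blue-0} to exhibit a long imbalanced interval. The only superficial differences are that you spell out the ``flag fits vs.\ flag is truncated'' dichotomy more explicitly than the paper does, and in Case~\ref{enu:Blue-Red} you add an unnecessary pigeonhole on the value $\ell^*$ (the paper just takes any $i\le L-\gamma L+1$ with $b_w(i)\ge\gamma L$ and applies Lemma~\ref{lem:blue-0} directly, which works regardless of the specific value of $b_w(i)$).
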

\begin{proof}
  If $w$ satisfies Case~\ref{enu:skew} of Lemma~\ref{lem:classify}, then $w$ has type Imbalanced.
  
  If $w$ satisfies Case~\ref{enu:Green} of Lemma~\ref{lem:classify} with parameter $\ell$, then $w$ has at least $\varepsilon^2L$ Green $\ell$-flags, and in particular there exists an $i\le L-\varepsilon^2L$ that is a Green $\ell$-flag.
  If $\ell\ge \varepsilon^{5}L$, then by Lemma~\ref{lem:blue-0}, the interval $I=[i,i+\ell-1]$ of size at least $\varepsilon^{5}L$ gives that $w_I$ is imbalanced, so $w$ has type Imbalanced.
  If $\ell\le \varepsilon^{5}L$, then $w$ has type $\ell$-Green.

  If $w$ satisfies Case~\ref{enu:Blue-Red} of Lemma~\ref{lem:classify} with parameter $m$, then there are at least $\varepsilon^2L$ indices with $b_w(i)\ge \varepsilon^2 L$, and the number of Red $\ell$-flags in $w$ is at most $600\varepsilon L$ for any integer $\ell\ge 2^m$.
  If there are at least $\gamma L$ indices $i\in[L]$ with $b_w(i)\ge \gamma L$, then some $i\le L-\gamma L + 1$ satisfies $b_w(i)\ge \gamma L$, so by Lemma~\ref{lem:blue-0}, there exists an interval $I$ of size at least $\gamma L > \varepsilon^5L$ such that $w_I$ is imbalanced, so $w$ is type Imbalanced.
  Otherwise, fewer than $\gamma L$ indices $i$ satisfy $b_w(i) \ge \gamma L$, so for the remaining at least $(\varepsilon^2-\gamma)L$ indices $i\in[L]$ satisfying $b_w(i) \ge \eps^2 L$, we have $2^m\le b_w(i)\le \gamma L$, and $w$ is type $m$-Blue-Yellow as desired.
\end{proof}

\section{The entropy regularity argument}\label{sec:entropy}

In this section, we prove the regularity-type result Lemma~\ref{lem:entropy}, which roughly states that in most dyadic substrings of a given string $s$, the positions of the Blue flags are distributed relatively uniformly.
In the Blue-Yellow strategy, we may be matching bits in $s$ and $t$ that lie in nearby but different dyadic intervals (because of our random shifting argument and because the Blue-Yellow strategy consumes ones from $s$ and $t$ at different rates). 
Because of this, it is helpful to say that there are many neighboring pairs of dyadic intervals with similar Blue flag distributions in $s$ and $t$.

\subsection{Flag balance}
Define the $L^1$ distance between two discrete probability distributions $p,q$ to be $\|p-q\|_{1}\defeq\sum|p(x)-q(x)|$.
Recall that $b_w(i)$ is the largest $\ell\in[L]$ such that $\ell$ is a power of two and $i$ is a Blue $\ell$-flag in $w$, and $0$ if no such $\ell$ exists. For a string $w$ with $L$ ones and an interval $I\subset[L]$, let $p_{w,I}$ denote the distribution of the value of $b_w(i)$ over a uniform random $i\in I$. 
Put another way, the probability mass $p_{w,I}(\ell)$ is the fraction of indices $i\in I$ with $b_w(i) = \ell$.

\begin{definition}[Blue-flag-balance]
\label{def:blue-flag-balance}
For $\beta>0$, we say that a dyadic interval $I_{m,i}$ is \emph{$\beta$-Blue-flag-balanced in $w$} if $\|p_{w,I_{m-1,2i-1}} -p_{w,I_{m-1,2i}}\|_1 \le \beta$.
We say that a string $w$ with $L=2^n$ ones is $\beta$\emph{-Blue-flag-balanced} if the interval $I_{n,1}$ is $\beta$-Blue-flag-balanced in $w$.
\end{definition}

Showing Blue-flag balance is useful because we can show that if a Blue-flag-balanced string $w$ has many Blue flags of a certain length, then both halves of $w$ also have many Blue flags of the same length.
The next lemma formalizes this idea. For a given set $U$ and property $P(i)$ depending on $i$, write $\Pr_{i\in U}[P(i)]$ for the probability that a uniform random element $i\in U$ satisfies $P(i)$.
\begin{lemma}
  \label{cor:flag-0}
  If string $w$ is $\beta$-Blue-flag-balanced with $L=2^n$ ones, then for any set $S$ of integers, we have
  \begin{equation*}
    \left|\Pr_{i\in [L]}[b_w(i)\in S] - \Pr_{i\in[L/2]}[b_w(i)\in S]\right| \le \frac{\beta}{2}.
  \end{equation*}
\end{lemma}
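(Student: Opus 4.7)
The plan is to unfold the definitions and reduce everything to a one-line averaging identity. Since $L=2^n$, the interval $I_{n,1}=[L]$ splits into its two dyadic halves $I_{n-1,1}=[L/2]$ and $I_{n-1,2}=[L/2+1,L]$, each of size $L/2$. Thus by the definition of $p_{w,I}$ as a uniform distribution on $I$, we have the convex combination identity
\[
p_{w,[L]} \;=\; \tfrac{1}{2}\,p_{w,[L/2]} \;+\; \tfrac{1}{2}\,p_{w,[L/2+1,L]}.
\]

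From this, for any set $S$ of integers, averaging the indicator of $b_w(i)\in S$ gives
\[
\Pr_{i\in [L]}[b_w(i)\in S] - \Pr_{i\in[L/2]}[b_w(i)\in S]
\;=\; \tfrac{1}{2}\Bigl(\Pr_{i\in[L/2+1,L]}[b_w(i)\in S] - \Pr_{i\in[L/2]}[b_w(i)\in S]\Bigr).
\]
I would then bound the difference of the two probabilities in terms of the $L^1$ distance. Writing out the definition,
\[
\bigl|\Pr_{i\in[L/2+1,L]}[b_w(i)\in S] - \Pr_{i\in[L/2]}[b_w(i)\in S]\bigr|
= \Bigl|\sum_{\ell\in S}\bigl(p_{w,[L/2+1,L]}(\ell) - p_{w,[L/2]}(\ell)\bigr)\Bigr|
\le \|p_{w,[L/2]} - p_{w,[L/2+1,L]}\|_1.
\]
By Definition~\ref{def:blue-flag-balance}, $\beta$-Blue-flag-balance of $w$ is exactly the statement that $I_{n,1}$ is $\beta$-Blue-flag-balanced, i.e.\ $\|p_{w,[L/2]} - p_{w,[L/2+1,L]}\|_1 \le \beta$. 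Combining with the previous display yields the desired bound of $\beta/2$.

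There is essentially no obstacle here — the lemma is a direct consequence of (i) the fact that $[L]$ is the disjoint union of the two equal-sized dyadic halves, making $p_{w,[L]}$ their average, and (ii) the elementary inequality $|p(S)-q(S)|\le\|p-q\|_1$. The only things to be careful about are matching the indexing conventions ($I_{n-1,1}=[L/2]$ and $I_{n-1,2}=[L/2+1,L]$) and invoking the correct instance of Blue-flag-balance (the top-level interval $I_{n,1}$ rather than some other scale).
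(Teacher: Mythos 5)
Your proof is correct and takes essentially the same approach as the paper: both exploit the identity $p_{w,I_{n,1}} = \tfrac{1}{2}(p_{w,I_{n-1,1}} + p_{w,I_{n-1,2}})$ and bound the resulting difference of probabilities by the $L^1$ distance, which is at most $\beta$ by the definition of Blue-flag-balance.
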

\begin{proof}
  Since $w$ is $\beta$-Blue-flag-balanced, interval $I_{n,1}$ is $\beta$-Blue-flag-balanced in $w$.
  We have
  \begin{align*}
    \left|\Pr_{i\in [L]}[b_w(i)\in S] - \Pr_{i\in[L/2]}[b_w(i)\in S]\right| 
    &= \sum_{\ell'\in S}^{} |p_{w,I_{n,1}}(\ell') - p_{w,I_{n-1,1}}(\ell')| \\
    &\le \sum_{\ell'}^{} |p_{w,I_{n,1}}(\ell') - p_{w,I_{n-1,1}}(\ell')| \nonumber\\
    &= \bigl\|p_{w,I_{n,1}} - p_{w,I_{n-1,1}}\bigr\|_1 \nonumber\\
    &= \Bigl\|\frac{p_{w,I_{n-1,1}} - p_{w,I_{n-1,2}}}{2}\Bigr\|_1
    \le \frac{\beta}{2} .
  \end{align*}
  The third equality uses that $\frac{p_{w,I_{n-1,1}} + p_{w,I_{n-1,2}}}{2} = p_{w,I_{n,1}}$, and the last inequality uses the definition of interval $I_{n,1}$ being $\beta$-Blue-flag-balanced in $w$.
\end{proof}

\subsection{Flag balance of intervals}
Our goal is to find a scale at which most intervals are Blue-flag-balanced.
We start by proving a simple lemma about probability distributions.
Recall that the binary entropy of a discrete probability distribution $p$ is defined as $H(p)\defeq-\sum_{x}p(x)\log p(x)$ over all values $x$ in the support of $p$, and the logarithms are base $2$. 
\begin{lemma}
\label{lem:pinsker}If $p^{-},p$, and $p^{+}$ are three discrete probability
distributions supported on a finite domain $\Omega$ satisfying $p^{-}(x)+p^{+}(x)=2p(x)$
for all $x\in \Omega$, then
\[
H(p^{-})+H(p^{+})\le2H(p)-\frac{1}{4}\|p^{+}-p^{-}\|_{1}^{2} . 
\]
\end{lemma}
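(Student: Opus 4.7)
The plan is to recognize the quantity $2H(p) - H(p^-) - H(p^+)$ as twice a Jensen--Shannon divergence and then invoke Pinsker's inequality to extract the $\|p^+ - p^-\|_1^2$ term.

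First, I would write the left-hand side as a sum of Kullback--Leibler divergences. Expanding $KL(p^\pm \,\|\, p) = -H(p^\pm) - \sum_x p^\pm(x)\log p(x)$ and adding the two, the cross terms merge via $p^-(x)+p^+(x) = 2p(x)$ into $-2\sum_x p(x)\log p(x) = 2H(p)$, yielding the identity
\begin{equation*}
KL(p^+ \,\|\, p) \,+\, KL(p^- \,\|\, p) \;=\; 2H(p) - H(p^+) - H(p^-).
\end{equation*}
So it suffices to lower bound this sum of KL divergences in terms of $\|p^+ - p^-\|_1^2$.

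Next, I would apply Pinsker's inequality, which in the base-$2$ convention used throughout the paper reads $KL(q\,\|\,r) \ge \frac{1}{2\ln 2}\|q-r\|_1^2$. Since $p$ is the midpoint of $p^-$ and $p^+$, we have $\|p^\pm - p\|_1 = \tfrac{1}{2}\|p^+ - p^-\|_1$, so each KL term is at least $\frac{1}{8\ln 2}\|p^+ - p^-\|_1^2$. Summing and using $\ln 2 < 1$ gives
\begin{equation*}
2H(p) - H(p^-) - H(p^+) \;\ge\; \frac{1}{4\ln 2}\,\|p^+ - p^-\|_1^2 \;\ge\; \tfrac{1}{4}\,\|p^+ - p^-\|_1^2,
\end{equation*}
which is exactly the claim.

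There is no substantive obstacle: the only care required is keeping track of the logarithm base, and the target constant $\tfrac{1}{4}$ is comfortably weaker than the $\tfrac{1}{4\ln 2} \approx 0.36$ that Pinsker actually delivers. If one preferred not to quote Pinsker, one could give a self-contained argument by reducing to the two-point case via the indicator set $A = \{x : p^+(x) \ge p^-(x)\}$ (data processing collapses both distributions to Bernoullis on $\{A, A^c\}$ without decreasing either KL term), and then verifying the two-point inequality directly by a second-derivative bound on the binary entropy function $\alpha \mapsto -\alpha\log\alpha - (1-\alpha)\log(1-\alpha)$; but this is heavier than invoking the standard Pinsker bound.
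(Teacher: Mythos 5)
Your proof is correct and follows essentially the same route as the paper's: write $2H(p)-H(p^-)-H(p^+)$ as the sum $D_{\mathrm{KL}}(p^-\|p)+D_{\mathrm{KL}}(p^+\|p)$, apply Pinsker's inequality to each term, and use $\|p^\pm-p\|_1=\tfrac12\|p^+-p^-\|_1$. The only difference is that you track the base-$2$ versus natural-log factor explicitly (obtaining the sharper constant $\tfrac{1}{4\ln 2}$ before relaxing to $\tfrac14$), whereas the paper silently uses the fact that the bits-KL is larger than the nats-KL; both are valid.
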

\begin{proof}
We use Pinsker's inequality (see for example \cite[Section 2.8]{pinsker-ref}, which states in the case of discrete
probability distributions that the Kullback-Leibler divergence between
two distributions $P$ and $Q$ satisfies
\[
D_{\text{KL}}(P\|Q)=\sum_{i}P(i)\log\left(\frac{P(i)}{Q(i)}\right)\ge\frac{1}{2}\|P-Q\|_{1}^{2} .
\]
In particular, applying this to the pairs $(P,Q)=(p^{-},p)$ and $(P,Q)=(p^{+},p)$,
we obtain
\begin{align*}
2H(p)-H(p^{-})-H(p^{+}) & = \sum_{i}\left(p^{-}(i)\log p^{-}(i)+p^{+}(i)\log p^{+}(i)-2p(i)\log(p(i))\right)\\
 & = D_{\text{KL}}(p^{-}\|p)+D_{\text{KL}}(p^{+}\|p)\\
 & \ge  \frac{1}{2}\|p^{-}-p\|_{1}^{2}+\frac{1}{2}\|p^{+}-p\|_{1}^{2}\\
 & = \frac{1}{4}\|p^{+}-p^{-}\|_{1}^{2},
\end{align*}
where the last inequality follows from the fact that $p^+(x)-p^-(x) = 2(p(x) - p^-(x)) = 2(p^+(x) - p(x))$ for all $x\in \Omega$.
\end{proof}

We now obtain with a regularity argument the following lemma, which is the most substantial result in this section.
\begin{lemma}[Interval Blue-flag-balance]
\label{lem:entropy-0}
For $\beta >0$, $n \ge 2$, and any $w\in\{0,1\}^{N}$ with $L=2^{n}$ ones, except for at most $32\beta^{-3}\log n$ values of $m\in[0,n]$, the following holds: 
the number of dyadic intervals $I_{m,i}$ that are not $\beta$-Blue-flag-balanced in $w$ is less than  $\beta\cdot2^{n-m}$.
\end{lemma}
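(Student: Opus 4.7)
The plan is an entropy increment argument powered by Lemma~\ref{lem:pinsker}. For each $m\in[0,n]$, define the scale-$m$ average entropy
\[
  E_m := \frac{1}{2^{n-m}} \sum_{i=1}^{2^{n-m}} H(p_{w,I_{m,i}}) .
\]
The idea is that $E_m$ is monotonically non-decreasing in $m$ (by concavity of entropy), is trapped between two easily computed endpoints, and any scale $m$ containing many intervals that fail to be $\beta$-Blue-flag-balanced forces $E_m$ to exceed $E_{m-1}$ by at least $\beta^3/8$. Since $E_n - E_0$ is small, only a few scales can display such a jump.

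For the entropy increment, note that $I_{m,i}$ is the disjoint union of its two equal-sized halves $I_{m-1,2i-1}$ and $I_{m-1,2i}$, so $p_{w,I_{m,i}}$ is the arithmetic mean of $p_{w,I_{m-1,2i-1}}$ and $p_{w,I_{m-1,2i}}$. Applying Lemma~\ref{lem:pinsker} to each such triple and summing over $i=1,\dots,2^{n-m}$ yields
\[
  E_m - E_{m-1} \ge \frac{1}{2^{n-m+3}} \sum_{i=1}^{2^{n-m}} \bigl\|p_{w,I_{m-1,2i-1}} - p_{w,I_{m-1,2i}}\bigr\|_1^2 .
\]
If scale $m$ contains at least $\beta\cdot 2^{n-m}$ intervals that are not $\beta$-Blue-flag-balanced, each contributes at least $\beta^2$ to the sum, giving the claimed jump $E_m - E_{m-1} \ge \beta^3/8$.

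For the endpoints: $E_0 = 0$ because each $I_{0,i}$ is a singleton and $p_{w,I_{0,i}}$ is a point mass; and because $b_w(j)$ takes values only in $\{0,1,2,4,\dots,2^n\}$, a set of size $n+2$, we have $E_n \le \log(n+2) \le 2\log n$ for $n\ge 2$. Telescoping the increments bounds the number of bad scales $m\in[1,n]$ by $(E_n-E_0)/(\beta^3/8) \le 16\beta^{-3}\log n \le 32\beta^{-3}\log n$, and $m=0$ plays no role since $I_{0,i}$ has no halves to balance.

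The main work is setting up the entropy increment cleanly, in particular verifying that $p_{w,I_{m,i}}$ really is the average of the two daughter distributions (which holds because both halves have equal size $2^{m-1}$) and identifying the correct small support for $b_w$ so that $E_n = O(\log n)$. Once these observations are in place, Lemma~\ref{lem:pinsker} plus telescoping gives the regularity bound with no further effort.
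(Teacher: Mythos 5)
Your proposal is correct and follows essentially the same entropy-increment argument as the paper: the same potential $E_m$ (up to a harmless global normalization factor of $2^n$), the same application of Lemma~\ref{lem:pinsker} to daughter pairs, the same support bound $n+2$ on $b_w$ giving $E_n = O(\log n)$, and the same telescoping count of bad scales. The only minor cosmetic difference is that you note $E_0 = 0$ exactly (since scale-$0$ intervals are singletons), whereas the paper only uses $E_0 \ge 0$.
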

\begin{proof}
Consider the expression
\[
E_{m}\defeq2^{m}\cdot\sum_{i=1}^{2^{n-m}}H(p_{w,I_{m,i}}) .
\]
By the definition
of an $\beta$-Blue-flag-balanced interval, we obtain that whenever interval $I_{m,i}$
is not $\beta$-Blue-flag-balanced in $w$,
\[
\|p_{w,I_{m-1,2i-1}}-p_{w,I_{m-1,2i}}\|_{1}\ge \beta.
\]
Applying Lemma \ref{lem:pinsker} to the three distributions $p = p_{w,I_{m,i}}$, $p^{-} = p_{w,I_{m-1,2i-1}}$, and $p^{+} = p_{w,I_{m-1,2i}}$, we obtain
\[
2H(p_{w,I_{m,i}}) - H(p_{w,I_{m-1,2i-1}}) - H(p_{w,I_{m-1,2i}}) \ge \frac{1}{4} \beta^{2}
\]
whenever $I_{m,i}$ is not $\beta$-Blue-flag-balanced in $w$. Summing over the $t_{m}$ dyadic intervals $I_{m,i}$ that are not $\beta$-Blue-flag-balanced in $w$, we obtain
\[
E_{m-1}\le E_{m}-2^{m-1}\cdot t_{m}\cdot\frac{1}{4}\beta^{2} = E_{m}-2^{m-3}t_{m}\beta^{2}.
\]
Since $b_w(i)\le L$ is either 0 or a power of two for all indices $i\in[L]$, we have that $b_w(i)$ takes on one of $n+2$ values.
Thus, we have $E_{n}=2^{n}\cdot H(p_{w,I_{n,1}})\le\log(n+2)\cdot2^{n}$.
Since we also have $E_0\ge 0$, we obtain from the previous equation that
\[
\sum_{m = 1}^{n} 2^{m-3}t_m \beta^2 \le \sum_{m = 1}^{n} E_m - E_{m-1} \le \log(n+2) \cdot 2^n.
\]
In particular, at most $8\beta^{-3}\log(n+2)<32\beta^{-3}\log n$ values of $t_{m}$ are at least $\beta\cdot2^{n-m}$, completing the proof.
\end{proof}

\subsection{Flag balance of substrings}
In Lemma~\ref{lem:entropy-0} we showed that there exists many scales $m$ where many (dyadic) \emph{intervals} were $\beta$-Blue-flag-balanced in the sense of Definition~\ref{def:blue-flag-balance}. 
For technical reasons, it is helpful to establish that there are many \emph{substrings} that are $\beta$-Blue-flag-balanced, and we do so in this section. The distinction here is that certain one-bits in the interval $I$ may be Blue $\ell$-flags in $w$, but not Blue $\ell$-flags in $w_I$ once the zeros to the right of $I$ are taken out of consideration.

For a string $w$ with $L$ ones, let $p_{w}\defeq p_{w,[L]}$.
Note that the distribution $p_{w,I}$ may be different from the distribution $p_{w_I}$, because indices $i\in I$ that are Blue $\ell$-flags in string $w$ may not correspond to Blue $\ell$-flags in substring $w_I$.
However, the converse is true: for $I=[x,y]$, if $i-x+1$ is a Blue $\ell$-flags of substring $w_I$, then $i$ is a Blue $\ell$-flag of string $w$.
Because of this, we can show that $p_{w,I}$ and $p_{w_I}$ are similar in distribution under certain conditions.
\begin{lemma}
  \label{lem:flag-1}
  Let $\beta > 0$, $w$ be a string with $L$ ones, $I\subset[L]$ be an interval, and $\ell_0$ be an integer, such that at most $\beta|I|$ indices $i\in I$ satisfy $b_w(i)\ge \ell_0$.
  Then 
  \[ \|p_{w,I}-p_{w_I}\|_1\le 2\left(\beta+\frac{\ell_0}{|I|}\right). \]
\end{lemma}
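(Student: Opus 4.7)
\textbf{Proof plan for Lemma~\ref{lem:flag-1}.} The plan is to bound the $L^1$ distance by coupling the two distributions via the natural index shift and then counting ``bad'' indices. Write $I = [x,y]$, so that $|I| = y-x+1$ and the map $j \mapsto j+x-1$ is a bijection $[|I|] \to I$. Since the total variation distance between two distributions over the same (finite) support is half the $L^1$ distance, and is at most the probability of disagreement under any coupling, it suffices to show
\[
\#\{i \in I : b_w(i) \ne b_{w_I}(i-x+1)\} \;\le\; \beta\, |I| + \ell_0,
\]
which would give $\|p_{w,I} - p_{w_I}\|_1 \le 2(\beta + \ell_0/|I|)$.

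To bound this count, I split $I$ into three classes: (a) indices $i$ with $b_w(i) \ge \ell_0$, of which there are at most $\beta|I|$ by hypothesis; (b) ``right-boundary'' indices with $i > y - \ell_0$, of which there are at most $\ell_0$; (c) the remaining indices, which satisfy both $b_w(i) < \ell_0$ and $i + \ell_0 \le y$. The heart of the proof is the claim that for every index $i$ in class (c) we actually have $b_{w_I}(i-x+1) = b_w(i)$; once this is established, bad indices lie only in (a) or (b), giving the bound above.

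The key claim splits into two inequalities. For $b_{w_I}(i-x+1) \ge b_w(i)$: since $b_w(i) < \ell_0$ and $i + b_w(i) \le i + \ell_0 \le y$, the zeros between the $i$-th and $(i+b_w(i))$-th ones of $w$ all lie inside $w_I$, so $i-x+1$ is a Blue $b_w(i)$-flag in $w_I$. For $b_{w_I}(i-x+1) \le b_w(i)$: I need to show that for every power of two $\ell'$ with $b_w(i) < \ell' \le |I|$, the index $i-x+1$ is not a Blue $\ell'$-flag in $w_I$. When $i + \ell' \le y$ the zeros in the two strings agree exactly at this scale, so this reduces to $i$ not being a Blue $\ell'$-flag in $w$, which holds because $\ell' > b_w(i)$. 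When $i + \ell' > y$ the substring $w_I{[i-x+1, i-x+1+\ell')}$ runs only to the end of $w_I$, so its zero count equals $z(w_{[i,y+1)})$; from $i + \ell' > y$ one has $i + \ell' \ge y+1$, hence $w_{[i,y+1)}$ is contained in $w_{[i,i+\ell')}$, giving $z(w_I{[i-x+1, i-x+1+\ell')}) \le z(w_{[i, i+\ell')}) \le \eps^{-1}(\ell'-1)$, where the last inequality again uses $\ell' > b_w(i)$ (and that $\ell' \ge \ell_0$ so the Blue condition for $i$ in $w$ at scale $\ell'$ is controlled by $b_w(i) < \ell_0 \le \ell'$).

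The main obstacle is the ``cut-off'' case $i + \ell' > y$ in the upper bound on $b_{w_I}(i-x+1)$: a priori, missing zeros past the right boundary of $I$ could push some index to appear as a Blue flag in $w_I$ even when it isn't in $w$, so one must use the monotonicity $z(w_I{[i-x+1,\,\cdot\,)}) \le z(w_{[i,\,\cdot\,)})$ to transfer the non-flag property from $w$ down to $w_I$. Everything else is bookkeeping, so the lemma follows.
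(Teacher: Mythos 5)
Your proof is correct and takes essentially the same route as the paper: both arguments reduce the $L^1$ bound to counting indices $i\in I$ where $b_w(i)\neq b_{w_I}(i-x+1)$ (the paper phrases this via a Lipschitz observation, you via a coupling/TV bound, but these are interchangeable), and both establish that disagreements occur only when $b_w(i)\ge\ell_0$ or $i$ is within $\ell_0$ of the right endpoint of $I$. Your explicit case split on $i+\ell'>y$ versus $i+\ell'\le y$ is a slightly more hands-on version of the paper's observation that a Blue $\ell$-flag of $w_I$ is always a Blue $\ell$-flag of $w$ (monotonicity of zero counts under truncation), but the content is the same.
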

\begin{proof}
  Recall that $b_w(i)$ is the largest power of two such that the $i$-th one of $w$ is an $b_w(i)$-flag of $w$ (or 0 if no such power of two exists). 
  For an interval $I=[x,y]$, if index $i-x+1$ is a Blue $\ell$-flag of $w_I$, then index $i$ is a Blue $\ell$-flag of $w$, and furthermore if index $i$ is a Blue $\ell$-flag of $w$ and $x\le i\le y-\ell$, then index $i-x+1$ is a Blue $\ell$-flag of $w_I$ as well.
  Hence, for all $x\le i\le y-\ell_0$ with $b_w(i)\le \ell_0$, we have that $b_{w}(i)= b_{w_I}(i-x+1)$.
  Thus, by the union bound,
  \begin{align}
    \Pr_{i\in I}[b_w(i)\neq b_{w_I}(i-x+1)] \le \Pr_{i\in I}[i\ge y-\ell_0] + \Pr_{i\in I}[b_w(i)\ge \ell_0] \le \frac{\ell_0}{|I|} + \beta,
    \label{eq:flag-1-1}
  \end{align}
  We also have 
  \begin{align}
    \|p_{w,I}-p_{w_I}\|_1
    = \sum_{\ell'}^{} |p_{w,I}(\ell')-p_{w_I}(\ell')| 
    = \sum_{\ell'}^{} \frac{1}{|I|}\abs{\sum_{i\in I}^{} \indicator(b_w(i)=\ell')-\sum_{i\in I}^{} \indicator(b_{w_I}(i-x+1)=\ell')} 
    \label{eq:flag-1-1b}
  \end{align}
  Changing the value of any single $b_w(i)$ changes the value of at most two indicator functions in \eqref{eq:flag-1-1b}, and furthermore changing $|I|\cdot \Pr_{i\in I}[b_w(i)\neq b_{w_I}(i-x+1)]$ values of $b_w(i)$ makes the expression $\|p_{w,I}-p_{w_I}\|_1$ equal to 0, so we have that
  \begin{align}
    \|p_{w,I}-p_{w_I}\|_1\le \frac{1}{|I|}\cdot 2\cdot |I|\cdot \Pr_{i\in I}[b_w(i)\neq b_{w_I}(i-x+1)]
    \le 2\cdot \Pr_{i\in I}[b_w(i)\neq b_{w_I}(i-x+1)].
    \label{eq:flag-1-2}
  \end{align}
  Combining \eqref{eq:flag-1-1} and \eqref{eq:flag-1-2} gives the desired result.
\end{proof}

Lemma~\ref{lem:entropy-0} argues about the Blue-flag-balance of intervals. 
Combining Lemma~\ref{lem:entropy-0} with Lemma~\ref{lem:flag-1}, we obtain the following result about the Blue-flag-balance of substrings.

\begin{lemma}[Substring Blue-flag-balance]
\label{lem:entropy}
Let $\beta>0$, and $n$ be sufficiently large in terms of $\beta$.
Let $w\in\{0,1\}^{N}$ be a string with $L=2^{n}$ ones, and suppose that at most $\beta^2 L$ indices $i\in[L]$ satisfy $b_w(i)\ge 2^{n-200\beta^{-3}\log n}$. 
  Then, except for at most $32\beta^{-3}\log n$ values of $m\in[n-150\beta^{-3}\log n,n]$ the following holds: the number of $i\in[2^{n-m}]$ for which $w_{m,i}$ is not $6\beta$-Blue-flag-balanced is less than $3\beta\cdot2^{n-m}$.
  
\end{lemma}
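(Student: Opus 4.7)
The plan is to combine Lemma~\ref{lem:entropy-0} with the substring-to-interval comparison of Lemma~\ref{lem:flag-1} (via a slightly sharper variant of its Lipschitz argument), together with a Markov-type bound over indices with large $b_w$. Set $\ell_0\defeq 2^{n-200\beta^{-3}\log n}$. Apply Lemma~\ref{lem:entropy-0} with parameter $\beta$: for all but $32\beta^{-3}\log n$ values of $m\in[n-150\beta^{-3}\log n,n]$, at most $\beta\cdot 2^{n-m}$ intervals $I_{m,i}$ fail to be $\beta$-Blue-flag-balanced in $w$. I fix such a ``good scale'' $m$.

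Call an interval $I_{m,i}$ \emph{regular} if (a) it is $\beta$-Blue-flag-balanced in $w$, and (b) each of its halves $I_{m-1,2i-1}, I_{m-1,2i}$ contains at most $\beta\cdot 2^{m-1}$ indices $j$ with $b_w(j)\ge\ell_0$. By the hypothesis there are at most $\beta^2 L=\beta^2 2^n$ indices with $b_w\ge\ell_0$, so Markov's inequality bounds the number of $(m-1)$-level intervals violating (b) by $\beta^2 2^n/(\beta\cdot 2^{m-1})=2\beta\cdot 2^{n-m}$. Thus at most $2\beta\cdot 2^{n-m}$ intervals $I_{m,i}$ have a bad half; combined with the $\le \beta\cdot 2^{n-m}$ intervals failing (a), at most $3\beta\cdot 2^{n-m}$ intervals are non-regular, matching the target count exactly.

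It remains to show that every regular $I_{m,i}$ yields a $6\beta$-Blue-flag-balanced substring $w_{m,i}$, i.e., $\|p_{w_{m,i},I_{m-1,1}}-p_{w_{m,i},I_{m-1,2}}\|_1\le 6\beta$. The key identification is that, since $m\ge n-150\beta^{-3}\log n$ and $\ell_0=2^{n-200\beta^{-3}\log n}$, for $n$ sufficiently large we have $\ell_0\le 2^{m-3}$. Consequently, for any $j\in I_{m-1,1}$ with $b_w(j')\le\ell_0$ (where $j'\defeq j+(i-1)2^m$ is the corresponding index in the parent string $w$), windows of length at most $2\ell_0\le 2^{m-2}$ starting at $j\le 2^{m-1}$ fit strictly inside $w_{m,i}$, so $b_{w_{m,i}}(j)=b_w(j')$. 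Property (b) caps the number of exceptional $j$ by $\beta\cdot 2^{m-1}$, yielding $\|p_{w_{m,i},I_{m-1,1}}-p_{w,I_{m-1,2i-1}}\|_1\le 2\beta$ by the same Lipschitz counting used inside the proof of Lemma~\ref{lem:flag-1}. The analogous bound holds for the second half. Combined with $\|p_{w,I_{m-1,2i-1}}-p_{w,I_{m-1,2i}}\|_1\le\beta$ from property (a), the triangle inequality gives $\|p_{w_{m,i},I_{m-1,1}}-p_{w_{m,i},I_{m-1,2}}\|_1\le 5\beta\le 6\beta$.

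The principal obstacle is the bookkeeping needed to reconcile flags of the substring $w_{m,i}$ with flags of the parent string $w$: substrings truncate flag windows, so $b_{w_{m,i}}\le b_w$ in general, and agreement at an index requires both the scale $b_w(j')$ and the distance to the substring boundary to be small enough that no truncation occurs. The gap between the exponents $150\beta^{-3}\log n$ (range of $m$) and $200\beta^{-3}\log n$ (defining $\ell_0$) is precisely calibrated to guarantee $\ell_0\le 2^{m-3}$, so that the fraction of indices where the identification fails is controlled directly by the hypothesis on the rarity of $b_w\ge\ell_0$.
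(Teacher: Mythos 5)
Your proposal follows essentially the same route as the paper: apply Lemma~\ref{lem:entropy-0} to find good scales, identify the same two failure modes (an interval failing $\beta$-Blue-flag-balance, or a half-interval containing many indices with $b_w\ge\ell_0$), do the identical Markov count to get the $3\beta\cdot 2^{n-m}$ bound, and then transfer balance from intervals to substrings via a Lipschitz/flag-comparison argument and the triangle inequality. The paper invokes Lemma~\ref{lem:flag-1} as a black box where you re-derive its Lipschitz core inline, but that is a cosmetic difference, not a different proof.

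One point deserves care. Your key identification $b_{w_{m,i}}(j)=b_w(j')$ for $j\in I_{m-1,1}$ with $b_w(j')<\ell_0$ is correct as stated (the window $[j,j+b_w(j'))$ stays well inside $w_{m,i}$ because $j\le 2^{m-1}$ and $\ell_0\ll 2^m$), and it really does give the clean bound $\|p_{w_{m,i},I_{m-1,1}}-p_{w,I_{m-1,2i-1}}\|_1\le 2\beta$ with no boundary correction. But the ``analogous bound for the second half'' is not literally analogous: for $j\in I_{m-1,2}=[2^{m-1}+1,2^m]$ close to $2^m$, the window $[j,j+\ell)$ is truncated by the right end of $w_{m,i}$ even when $\ell<\ell_0$, so $b_{w_{m,i}}(j)$ can be strictly smaller than $b_w(j')$ regardless of the size of $b_w(j')$. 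The fix is the same $\ell_0/|I|$ boundary term that appears explicitly in Lemma~\ref{lem:flag-1}: the number of such boundary indices is at most $\ell_0$, so $\|p_{w_{m,i},I_{m-1,2}}-p_{w,I_{m-1,2i}}\|_1\le 2\bigl(\beta+\ell_0/2^{m-1}\bigr)$. Since $\ell_0/2^{m-1}\le 2^{-50\beta^{-3}\log n+1}$ is negligible once $n$ is large in terms of $\beta$, the triangle inequality still yields a bound strictly less than $6\beta$; your claimed ``$\le 5\beta$'' is slightly too optimistic but the lemma's conclusion is unaffected. With that small correction the proof is complete and matches the paper's argument.
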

\begin{proof}
  Let $m\in[n-150\beta^{-3}\log n,n]$ be any value such that there are at most $\beta \cdot 2^{n-m}$ dyadic intervals $I_{m,i}$ that are not $\beta$-Blue-flag-balanced in $w$.
  By Lemma~\ref{lem:entropy-0}, all but $32\beta^{-3}\log n$ values of $m$ have this property. 
  We show that each such $m$ satisfy the requirements of the lemma.
  
  Call an index $i\in[2^{n-m}]$ \emph{good} if (1) the dyadic interval $I_{m,i}$ is $\beta$-Blue-flag-balanced in $w$, and (2) for both intervals $I\in\{I_{m-1,2i-1},I_{m-1,2i}\}$, there are at most $\beta|I|$ indices $i\in I$ with $b_w(i)\ge 2^{n-200\beta^{-3}\log n}$.
  By the choice of $m$, there are less than $\beta \cdot 2^{n-m}$ choices of $i\in[2^{n-m}]$ that violate requirement (1).
  Next, we use the assumption that in total at most $\beta^2 L$ indices $i\in[L]$ satisfy $b_w(i)\ge 2^{n-200\beta^{-3}\log n}$. Thus, at most $\beta \cdot 2^{n-m+1}$ of the dyadic intervals $I$ of size $2^{m-1}$ contain at least $\beta |I|$ indices $i\in I$ satisfying $b_w(i)\ge 2^{n-200\beta^{-3}\log n}$.
  Hence, at most $2\beta\cdot 2^{n-m}$ choices of $i\in[2^{n-m}]$ violate requirement (2).
  We see that all but less than $3\beta \cdot 2^{n-m}$ choices of $i\in [2^{n-m}]$ are good.

  We now show that for each good index $i\in[2^{n-m}]$, the substring $w_{m,i}$ is $6\beta$-Blue-flag-balanced.
  By Lemma~\ref{lem:flag-1}, for each good $i$ and either interval $I\in\{I_{m-1,2i-1},I_{m-1,2i}\}$, we have 
  \[
    \|p_{w,I}-p_{w_I}\|_1\le 2\left(\beta + \frac{2^{n-200\beta^{-3}\log n}}{2^m}\right)\le 2\left(\beta + 2^{-50\beta^{-3}\log n}\right).
  \]
  Since interval $I_{m,i}$ is $\beta$-Blue-flag-balanced with respect to $w$, we have by the triangle inequality
  \begin{align*}
    \|p_{w_{m-1,2i-1}}-p_{w_{m-1,2i}}\|_1
    &\le 
    \|p_{w_{m-1,2i-1}}-p_{w,I_{m-1,2i-1}}\|_1
    +\|p_{w_{m-1,2i}}-p_{w,I_{m-1,2i}}\|_1
    +\|p_{w,I_{m-1,2i-1}}-p_{w,I_{m-1,2i}}\|_1 \nonumber\\
    &\le
    4(\beta + 2^{-50\beta^{-3}\log n})
    +\beta
     < 6\beta,
  \end{align*}
  assuming $n$ is sufficiently large.
  Hence, for each good $i$, the substring $w_{m,i}$ is $6\beta$-Blue-flag-balanced, as desired.
\end{proof}

\section{Green case}\label{sec:green}

Call a pair of strings $(s,t)$ a \emph{Green pair} if
\begin{enumerate}
    \item  $s$ and $t$ have the same number $L$ of ones, and 
    \item there exists an $1\le \ell\le \varepsilon^5L$ such that $s$ and $t$ are both type $\ell$-Green.
\end{enumerate}
In this section, we implement the Green strategy in the Overview (Section~\ref{sec:overview}).
Specifically, we show (Lemma~\ref{lem:green2}) that when we have strings $s$ and $t$ and some scale $m^*$ with many Green pairs $(s_{m^*,i},t_{m^*,i})$, then we can find a common subsequence of $s$ and $t$ of length $(0.5+\delta)|s|$. 
At the highest level, we do this by finding common subsequences within the Green pairs, and matching ones elsewhere, using the Prefix/Suffix LCS Lemma (Lemma~\ref{lem:shift}).

Within the Green pairs, we match ones everywhere, except that we switch to matching zeros at Green flags in $s$ and $t$.

\begin{lemma}[Green matching lemma]
\label{lem:green}
  Let $L$ be a power of two.
  Let $(s,t)$ be a Green pair where strings $s$ and $t$ have $L$ ones each.
  Then for $\Delta$ uniformly random in $[-L,L]$, we have 
  \[
    \E_{\Delta}\left[\LCS(\Drop_\Delta(s), \Drop_{-\Delta}(t))+|\Delta|\right]\ge L+ \frac{\varepsilon^5}{8} L.
  \]
\end{lemma}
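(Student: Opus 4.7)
The plan is to construct, for each $\Delta\in[-L,L]$, a common subsequence of $\Drop_\Delta(s)$ and $\Drop_{-\Delta}(t)$ by starting from the \emph{baseline matching} that pairs the $j$-th one of $s$ with the $(j+\Delta)$-th one of $t$ for $j$ in the valid range (yielding $L-|\Delta|$ matched ones), improving it locally at each ``synchronized Green flag pair'', and then averaging over $\Delta$. Fix the common Green period $\ell$ with $1\le \ell\le \varepsilon^5 L$, and let $G_s$ (resp.\ $G_t$) denote the set of Green $\ell$-flags $j$ in $s$ (resp.\ $t$) with $j+\ell-1\le L$, so that the associated region $[j,j+\ell-1]$ lies inside $[1,L]$. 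Since $\ell\le \varepsilon^5 L$, we lose at most $\ell-1$ flags to this restriction, leaving $|G_s|, |G_t|\ge \varepsilon^2 L-\ell\ge (1-\varepsilon^3)\varepsilon^2 L$.

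For each pair $(j,j+\Delta)$ with $j\in G_s$ and $j+\Delta\in G_t$, the regions $s_{[j,j+\ell-1]}$ and $t_{[j+\Delta,j+\Delta+\ell-1]}$ each contain $\ell$ ones and strictly more than $(1+2\varepsilon)(\ell-1)$ zeros. Replacing the baseline match of $\ell$ ones with the common subsequence ``match the first ones together, then match $\min(z_s,z_t)$ zeros of the suffixes'' yields at least $1+\min(z_s,z_t)>\ell+2\varepsilon(\ell-1)$ matches within the pair of regions, a local improvement over baseline of at least $\max(1,\lceil 2\varepsilon(\ell-1)\rceil)$ (the $\ell=1$ case is verified directly, since a Green $1$-flag guarantees a zero between consecutive ones). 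To apply many such local improvements simultaneously while keeping the overall matching non-crossing, I pass to a subset $G'_s\subseteq G_s$ by pigeonhole on residues modulo $2\ell$, obtaining $|G'_s|\ge |G_s|/(2\ell)$ with pairwise-disjoint $s$-regions; since the $t$-regions are uniform shifts by $\Delta$, they are then pairwise disjoint as well.

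Averaging over $\Delta$ uniform in $[-L,L]$, for fixed $j\in G'_s$ the probability that $j+\Delta\in G_t$ is exactly $|G_t|/(2L+1)$, which is at least $\varepsilon^2/3$ for $n$ sufficiently large. Thus the expected number of synchronized pairs is at least $|G'_s|\cdot \varepsilon^2/3\ge \Omega(\varepsilon^4 L/\ell)$. Splitting into cases: if $\ell\le 1/\varepsilon$ the per-pair improvement $\ge 1$ gives expected total improvement $\Omega(\varepsilon^4 L/\ell)\ge \Omega(\varepsilon^5 L)$; if $\ell>1/\varepsilon$ the per-pair improvement $\ge 2\varepsilon(\ell-1)\ge \varepsilon\ell$ gives the same order $\Omega(\varepsilon^5 L)$. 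Careful accounting of the constants (in particular using $|G_s|\ge 0.99\varepsilon^2 L$ and $|G_t|/(2L+1)\ge 0.49\varepsilon^2$ for $L$ large) shows that the total expected improvement is at least $\varepsilon^5 L/8$, and since $\LCS\ge \text{baseline}+\text{improvement}$ pointwise in $\Delta$, we obtain $\E_\Delta[\LCS(\Drop_\Delta(s),\Drop_{-\Delta}(t))+|\Delta|]\ge L+\varepsilon^5 L/8$ as claimed. The main obstacle is precisely that thinning $G_s\to G'_s$ loses a factor $2\ell$ in the pair count; this is compensated by the $\varepsilon\ell$-scaling of per-pair improvement in the regime $\ell\ge 1/\varepsilon$, producing an $\ell$-independent gain of $\Omega(\varepsilon^5 L)$.
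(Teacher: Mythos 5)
Your proposal is correct and follows essentially the same strategy as the paper's proof: match ones greedily with shift $\Delta$, replace matched ones by matched zeros in synchronized Green-flag regions, and argue the expected gain over $\Delta\sim\unif[-L,L]$ is $\Omega(\varepsilon^5 L)$. The only substantive (but still minor) variation is the thinning step: you select a $\Delta$-independent subset $G'_s\subseteq G_s$ by pigeonholing on residues mod $2\ell$ and then intersect with the synchronized flags, while the paper first forms the $\Delta$-dependent synchronized set $G'_\Delta$ and then greedily thins it to spacing $\ell$ (costing a factor $\ell$ plus an additive $1$, rather than your factor $2\ell$); both give the same $\Theta(\varepsilon^4 L/\ell)$ expected pair count. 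Likewise, your case split at $\ell\lessgtr 1/\varepsilon$ is fine but unnecessary, since the paper observes directly that the per-pair gain is at least $\max(1,\varepsilon(\ell-1))\ge \tfrac{\varepsilon}{2}\ell$ for all $\ell\ge 1$, which yields the $\ell$-independent bound in one line.
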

\begin{proof}
  Let $\ell$ be the integer that allows $(s,t)$ to be a Green pair.
  For a fixed $\Delta$ in $[-L,L]$, let $G_\Delta'$ be the set of indices $i\in[L]$ such that $i$ is Green $\ell$-flag of $s$ and $i+\Delta\in [L]$ and $i+\Delta$ is a Green $\ell$-flag of $t$.
  Let $G_\Delta\subseteq G_\Delta'$ be a subset of size at least $|G_{\Delta}'|/\ell -1$ such that any two distinct $i,i'\in G_\Delta$ satisfy $|i-i'|\ge \ell$, and furthermore every $i\in G_\Delta$ satisfies $i+\Delta+\ell\le L$.
  Such a $G_\Delta$ can be greedily selected from $G'_\Delta$: we can get $|G_\Delta|/\ell$ indices such that any two differ by at least $\ell$, and, as any $i\in G'_\Delta$ satisfies $i+\Delta\le L$, we can guarantee $i+\Delta+\ell\le L$ for all $i\in G_\Delta$ by removing the largest index.

  Since $s$ and $t$ are type $\ell$-Green, both $s$ and $t$ have at least $\varepsilon^2 L$ Green $\ell$-flags.
  Thus, for each Green $\ell$-flag $i\in[L]$ of $s$, there are at least $\varepsilon^2L$ values of $\Delta$ such that $i\in G_\Delta'$.
  Hence, $\Pr_{\Delta\in[-L,L]}[i\in G_\Delta']\ge \frac{\varepsilon^2L}{2L+1} > \frac{\varepsilon^2}{3}$.
  By linearity of expectation, we have $\E_{\Delta\sim[-L,L]}[|G_\Delta'|]\ge \frac{\varepsilon^2}{3}\cdot \varepsilon^2 L = \frac{\varepsilon^4}{3}L$. 
  Hence, as $|G_\Delta|\ge |G_\Delta'|/\ell-1$, we have
  \begin{align}
    \label{eq:green-1}
    \E_{\Delta\sim[-L,L]}[|G_\Delta|]
    \ge \frac{\varepsilon^4L}{3\ell} - 1 
    > \frac{\varepsilon^4L}{4\ell}.
  \end{align}
  Here we used that $\ell\le \varepsilon^{5}L$.

\begin{figure}[h]
\centering
\vspace{-5mm}
\scalebox{1.5}{
\begin{tikzpicture}
\drawpartg{0}{1}{1}{1}{0}{2}{1}{3}
\drawpartgg{0}{1}{0}{1}{0}{2}{0}{2}
\drawpartg{2}{3}{3}{5}{2}{4}{3}{4}
\drawpartgg{2}{3}{2}{3}{2}{4}{2}{4}
\subseqpic
\drawmatch{0,0}{0,0}
\drawmatch{0,1}{0,2}

\drawmatchlight{0,3}{0,3}
\drawmatchlight{0,7}{0,5}
\drawmatchlight{1,1}{0,6}
\drawmatchgreen{0,2}{0,4}
\drawmatchgreen{0,4}{0,7}
\drawmatchgreen{0,5}{1,0}
\drawmatchgreen{0,6}{1,1}
\drawmatchgreen{1,0}{1,2}
\drawmatch{1,2}{1,4}
\drawmatch{1,6}{1,7}
\drawmatch{1,7}{2,0}
\drawmatch{2,0}{2,2}
\drawmatch{2,1}{2,3}
\drawmatch{2,3}{2,4}
\drawmatchlight{3,0}{2,6}
\drawmatchlight{3,2}{3,1}
\drawmatchlight{3,4}{3,2}
\drawmatchgreen{2,4}{2,5}
\drawmatchgreen{2,5}{2,7}
\drawmatchgreen{2,6}{3,0}
\drawmatchgreen{2,7}{3,3}
\drawmatchgreen{3,1}{3,4}
\drawmatch{3,6}{3,5}
\drawmatch{3,7}{3,6}
\end{tikzpicture}
}
\vspace{-5mm}
\caption{The matching strategy for Lemma~\ref{lem:green}.
In this example $\ell = 4$ and $\Delta=0$.
The darker Green one-bits are Green $\ell$-flags, and the light Green substrings following them are relatively zero-rich. 
The solid and dashed black lines indicate the initial matching of ones, and the thick dark-green lines indicate matchings of zeros that replace the matchings of ones. The matchings of ones that are replaced are dashed.
}
\label{fig:green-case-2}
\end{figure}
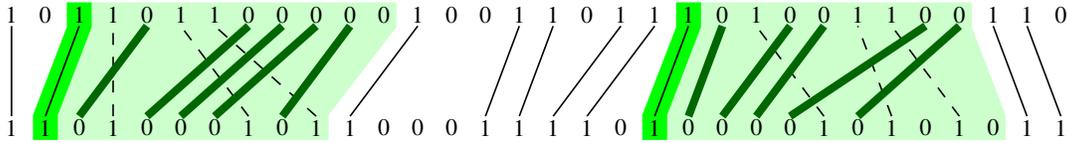

  It suffices to construct a large matching between the bits of $s$ and the bits of $t$ such that only equal bits are matched and such that the matching is ``non-crossing'', meaning that earlier bits of $s$ are matched with earlier bits of $t$.
  Indeed, the number of pairs in a non-crossing matching corresponds to the length of a common subsequence of $s$ and $t$.

  Consider the matching where we match the $i$-th one in $s$ with the $(i+\Delta)$-th one in $t$.
  This matches $L-|\Delta|$ ones.
  In particular, for each $i\in G_\Delta$, the ones in substring $s_{[i+1,i+\ell-1]}$ are exactly matched with the ones in substring $t_{[i+\Delta+1,i+\Delta+\ell-1]}$.
  In this matching, for each $i\in G_\Delta$, replace the matching between the ones of $s_{[i+1,i+\ell-1]}$ and the ones of $t_{[i+\Delta+1,i+\Delta+\ell-1]}$ with a matching between the zeros of $s_{[i,i+\ell-1]}$ and the zeros of $t_{[i+\Delta,i+\Delta+\ell-1]}$.
  All of the zeros of $s_{[i,i+\ell-1]}$ come after the $i$-th one of $s$, and all the zeros of $t_{[i+\Delta,i+\Delta+\ell-1]}$ come after the $(i+\Delta)$-th one of $t$, which stays matched to the $i$-th one of $s$, so the matching stays non-crossing after this replacement. 
  Furthermore, since any two $i\in G_\Delta$ differ by at least $\ell$, the substrings $s_{[i,i+\ell-1]}$ and $t_{[i+\Delta,i+\Delta+\ell-1]}$ across $i\in G_\Delta$ are pairwise disjoint, so these $|G_\Delta|$ replacements can be done simultaneously while keeping the matching non-crossing and thus valid.

  Substrings $s_{[i,i+\ell-1]}$ and $t_{[i+\Delta,i+\Delta+\ell-1]}$ each have at least $\max(1,(1+\varepsilon)(\ell-1))$ zeros because $i$ is a Green $\ell$-flag of $s$ and $i+\Delta$ is a Green $\ell$-flag of $t$.
  Thus, each replacement deletes $(\ell-1)$ pairs of ones and adds at least $\max(1,(1+\varepsilon)(\ell-1))$ pairs of zeros.
  If $\ell\ge 2$, each replacement increases the number of pairs matched by $\varepsilon(\ell-1) > \frac{\varepsilon}{2}\ell$, so the total number of bits in the new matching is at least  $L-|\Delta| + \frac{\varepsilon}{2}\ell|G_\Delta|$.
  If $\ell=1$, each replacement increases the number of pairs matched by at least 1, so the total number of pairs in the new matching is $L-|\Delta| + |G_\Delta|$ which is also at least $L-|\Delta| + \frac{\varepsilon}{2}\ell|G_\Delta|$.
  Thus, for all $\Delta$, we have shown 
  \begin{align*}
    \LCS(\Drop_\Delta(s), \Drop_{-\Delta}(t)) \ge L-|\Delta|+\frac{\varepsilon}{2}\ell|G_\Delta|.
  \end{align*}

  Hence, we have by linearity of expectation and \eqref{eq:green-1},
  \begin{align*}
    \E_{\Delta\sim[-L,L]}\left[\LCS(\Drop_\Delta(s), \Drop_{-\Delta}(t)) + |\Delta| \right]
    \ge L + \frac{\varepsilon}{2}\ell\cdot \E_{\Delta\sim[-L,L]}\left[|G_\Delta|\right]
    \ge L + \frac{\varepsilon^5}{8}L,
  \end{align*}
  as desired.
\end{proof}

To find strings $s$ and $t$ that have LCS beating the 1/2 barrier, it is not enough to assume that $s$ and $t$ form a Green pair and use Lemma~\ref{lem:green}, because we lose by the size of the random shift $|\Delta|$. To remedy this, we break $s$ and $t$ into shorter substrings so that the loss from the random shift is at most the length of a single substring. If a substantial fraction of these substrings form Green pairs, then we can combine the gains using the Shifted LCS Lemma (Lemma~\ref{lem:shift}) to get a large overall LCS, and the loss from the random shift is negligible.
The following lemma implements this idea, showing that many Green pairs gives large overall LCS.

\begin{lemma}[Many Green pairs implies large LCS]
  \label{lem:green2}
  Let $s$ and $t$ be strings with the same length and the same number $L=2^n$ of ones, and let $m^*\le n-10-\log\varepsilon^{-5}$.
  Suppose there exists a set $Z\subset[2^{n-m^*}]$ such that $|Z| > 2^{n-m^*}/10$ and for all $i\in Z$, the pair of substrings $(s_{m^*,i},t_{m^*,i})$ is a Green pair.
  Then 
  \[
    \LCS(s,t) \ge \left( 0.5 + \frac{\varepsilon^5}{5000} \right)|s|.
  \]
\end{lemma}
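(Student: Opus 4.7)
The plan is to apply the Green matching lemma (Lemma~\ref{lem:green}) to each Green pair $(s_{m^*,i},t_{m^*,i})$ for $i\in Z$, pigeonhole to a single shift $\Delta^*$ that works on average for all these pairs simultaneously, and then lift the pointwise advantages to $\LCS(s,t)$ via the Prefix/Suffix LCS Lemma (Lemma~\ref{lem:shift}).

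Concretely, each Green pair has $2^{m^*}$ ones in each of its two strings, so Lemma~\ref{lem:green} gives
\[
\E_{\Delta\in[-2^{m^*},2^{m^*}]}\bigl[\LCS(\Drop_\Delta(s_{m^*,i}),\Drop_{-\Delta}(t_{m^*,i}))+|\Delta|\bigr]\ge 2^{m^*}\Bigl(1+\tfrac{\varepsilon^5}{8}\Bigr).
\]
Summing over $i\in Z$ and invoking linearity of expectation, I can pigeonhole on $\Delta$ to obtain a single $\Delta^*\in[-2^{m^*},2^{m^*}]$ satisfying
\[
\sum_{i\in Z}\LCS(\Drop_{\Delta^*}(s_{m^*,i}),\Drop_{-\Delta^*}(t_{m^*,i}))\ge |Z|\Bigl(2^{m^*}-|\Delta^*|+\tfrac{\varepsilon^5}{8}\cdot 2^{m^*}\Bigr).
\]
This is exactly the form of the hypothesis of Lemma~\ref{lem:shift} with $\delta=\varepsilon^5/8$ (the given bound $m^*\le n-10-\log\varepsilon^{-5}$ differs from the condition $m^*\le n-10-\log\delta^{-1}$ by an absolute constant, which is absorbable into the generous final denominator), yielding $\LCS(s,t)\ge (1+\varepsilon^5/160)L$.

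The remaining step is to convert this bound in terms of $L$ into the claimed bound in terms of $|s|$. Because $s$ and $t$ share length and one-count, they share zero-count $z$, so $|s|=L+z$. If $z\le (1+\varepsilon^5/1000)L$, then $|s|\le (2+\varepsilon^5/1000)L$ and the bound $\LCS(s,t)\ge (1+\varepsilon^5/160)L$ comfortably implies $\LCS(s,t)\ge (0.5+\varepsilon^5/5000)|s|$. Otherwise $z$ is already so large that the trivial all-zeros common subsequence of length $z$ alone satisfies $z\ge (0.5+\varepsilon^5/5000)|s|$. I expect the main conceptual point to be the averaging step: Lemma~\ref{lem:green} is deliberately phrased so that the cost $|\Delta|$ of the random shift is folded into the expectation, which is precisely what is required so that a pigeonholed $\Delta^*$ fits the form demanded by Lemma~\ref{lem:shift}; the final passage from $L$ to $|s|$ is routine but must be split into the two regimes above because $s$ can in principle be heavily zero-dominated.
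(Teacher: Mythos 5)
Your proposal is correct and matches the paper's proof essentially step for step: apply Lemma~\ref{lem:green} to each Green pair, average over $\Delta$ and pigeonhole to a single shift, feed the resulting inequality into Lemma~\ref{lem:shift} with $\delta=\varepsilon^5/8$, and finally dispatch the heavily zero-dominated case by the trivial all-zeros common subsequence. The small off-by-a-constant mismatch you flag between $m^*\le n-10-\log\varepsilon^{-5}$ and the hypothesis $m\le n-10-\log\delta^{-1}$ of Lemma~\ref{lem:shift} is also present in the paper's own invocation; as you note it is harmless, since tracing the proof of Lemma~\ref{lem:shift} with the looser bound $2^{m^*}\le \varepsilon^5 L/2^{10}$ still yields $\frac{1}{80}-\frac{1}{1024}>\frac{1}{160}$ and hence the claimed $(1+\varepsilon^5/160)L$ bound.
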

\begin{proof}
Let $L^* = 2^{m^*}$.
We may assume $L\ge (0.5-\frac{\varepsilon^5}{1000})|s|$ or else we are done by having an LCS of $(0.5+\frac{\varepsilon^5}{1000})|s|$ zeros.
By Lemma~\ref{lem:green}, we have, for all $i\in Z$,
  \[
    \E_{\Delta\sim[-L^*,L^*]}\left[ \LCS\left(\Drop_{\Delta}(s_{m^*,i}), \Drop_{-\Delta}(t_{m^*,i})\right) + |\Delta|\right] \ge L^* + \frac{\varepsilon^5}{8}L^* .
  \]
  We have, by linearity of expectation,
  \begin{align}
    \E_{\Delta\sim[-L^*,L^*]}&\left[\sum_{i\in Z}^{} \left(\LCS\left(\Drop_{\Delta}(s_{m^*,i}), \Drop_{-\Delta}(t_{m^*,i})\right) + |\Delta|\right)\right] \nonumber\\
    &= \sum_{i\in Z}\E_{\Delta\sim[-L^*,L^*]}\left[ \LCS\left(\Drop_{\Delta}(s_{m^*,i}), \Drop_{-\Delta}(t_{m^*,i})\right) + |\Delta|\right] \nonumber\\
    &\ge |Z|\cdot \left(L^*+\frac{\varepsilon^5}{8}L^*\right) \ . \nonumber 
  \end{align}
  Hence, we may fix a single $\Delta$ for which $|\Delta| \le L^*$ and
  \[
   \sum_{i\in Z}^{} \LCS\left(\Drop_{\Delta}(s_{m^*,i}), \Drop_{-\Delta}(t_{m^*,i})\right)
    \ge |Z|\cdot \left(L^*-|\Delta|+\frac{\varepsilon^5}{8}L^*\right).
  \]
   \noindent Thus, the set $Z$ satisfies the setup of Lemma~\ref{lem:shift} with $n'=n$ and $m'=m^*$ and $L'=L$ and $\delta' = \frac{\varepsilon^5}{8}$. Therefore we get
  \[
    \LCS(s,t) \ge 
    \left( 1 + \frac{\varepsilon^5}{160} \right)L 
    > \left( 1 + \frac{\varepsilon^5}{160} \right)\left(0.5 - \frac{\varepsilon^5}{1000}\right)|s|
    > \left( 0.5+ \frac{\varepsilon^5}{5000} \right)|s|. \qedhere
  \]
\end{proof}

\section{Blue-Yellow case}\label{sec:blue}
In this section, we implement the Blue-Yellow strategy described in the Overview (Section~\ref{sec:overview}), which is the most intricate part of our argument.
Call a pair of strings $(s,t)$ a \emph{Blue-Yellow pair} if: 
\begin{enumerate}
    \item The strings $s$ and $t$ have the same number of ones.
    \item There exists an $m$ such that $s$ and $t$ are both of type $m$-Blue-Yellow.
    \item Both $s$ and $t$ are $6\gamma$-Blue-flag-balanced.
\end{enumerate}
Note that $(s,t)$ is a Blue-Yellow pair if and only if $(t,s)$ is a Blue-Yellow pair.

We show that when we have strings $s$ and $t$ and some scale $m^*$ with many Blue-Yellow pairs $(s_{m^*,i},t_{m^*,i})$ among their substrings at scale $m^*$, then we can find a common subsequence of $s$ and $t$ of length $(1/2+\delta)|s|$. 
At the highest level (Lemma~\ref{lem:blue3}), we do this by finding common subsequences within the Blue-Yellow pairs, and matching ones elsewhere.
To find large common subsequences within Blue-Yellow pairs, we use Lemma~\ref{lem:blue1} and Lemma~\ref{lem:blue2}.

Lemma~\ref{lem:blue1} shows that, for a Blue-Yellow pair $(s_{m^*,i},t_{m^*,i})$, we can find substrings $s_{m^*,i}'$ and $t_{m^*,i}'$ of $s_{m^*,i}$ and $t_{m^*,i}$ which ``gain in span,'' meaning that $\LCS(s_{m^*,i}', t_{m^*,i}') > (1/4+\delta)(|s_{m^*,i}'| + |t_{m^*,i}'|)$.
To do this, we match ones of $s_{m^*,i}$ with ones of $t_{m^*,i}$, and switch to matching zeros when we simultaneously encounter a Blue flag in $s_{m^*,i}$ and Yellow flag in $t_{m^*,i}$ of the appropriate lengths.
However, despite gaining in span, the lengths of $s_{m^*,i}$ and $t_{m^*,i}$ may be quite different (this offset comes from the zeros of the Blue flags in $s_{m^*,i}$ spanning fewer ones than the zeros of the Yellow flags in $t_{m^*,i}$), so repeatedly applying Lemma~\ref{lem:blue1} is insufficient. Indeed, this would cause us to systematically match shorter substrings in $s$ with longer substrings in $t$, leading the common subsequence obtained to reach the end of $t$ well before reaching the end of $s$.

The subsequent lemma, Lemma~\ref{lem:blue2}, shows that if \emph{two} Blue-Yellow pairs $(s_{m^*,i},t_{m^*,i})$ and $(s_{m^*,i+1},t_{m^*,i+1})$ occur consecutively in a string, then we can find a common subsequence that gains in span, like Lemma~\ref{lem:blue1}, but also uses the same number of ones in both strings, creating a balanced matching.
Roughly, the proof follows by applying Lemma~\ref{lem:blue1} twice, once on the pair $(s_{m^*,i},t_{m^*,i})$, matching more bits in $t_{m^*,i}$, and once on the pair $(t_{m^*,i+1},s_{m^*,i+1})$, matching more bits in $s_{m^*,i+1}$, so that together the number of bits used from $s$ and $t$ is equal. As a result, in total our common subsequence spans the same number of bits in $s$ and $t$ but still gains in span. We refer the reader to Figure~\ref{fig:blueyellow-case-2} below for a visual illustration of this argument.

Thanks to Lemma~\ref{lem:blue2}, it follows that if we have many Blue-Yellow pairs, then we have many pairs of subsequences $(s_{m^*,i}s_{m^*,i+1},t_{m^*,i}t_{m^*,i+1})$ of $s$ and $t$ where there is a common subsequence that gains in span.
Thus, using Lemma~\ref{lem:shift}, we can piece these subsequences together, matching ones in between, giving a large $\LCS$ overall; this is the content of Lemma~\ref{lem:blue3}.
\begin{lemma}[Blue-Yellow matching lemma]
  \label{lem:blue1}
  Let $(s,t)$ be a Blue-Yellow pair where each string has $L=2^n$ ones.
  Then, for $\Delta$ uniformly random in $[L/4]$, 
  \[
    \Pr_{\Delta}\left[\LCS(s_{[1,(0.5+0.01\varepsilon)L]}, t_{[1+\Delta, (0.5+0.3\varepsilon)L + \Delta]}) \ge (0.5 + 0.24 \varepsilon)L\right]\ge 0.9.
  \]
\end{lemma}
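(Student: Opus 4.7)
I plan to construct, for at least $90\%$ of shifts $\Delta \in [L/4]$, an explicit non-crossing matching between the bits of $s_{[1,(0.5+0.01\varepsilon)L]}$ and $t_{[1+\Delta,(0.5+0.3\varepsilon)L+\Delta]}$ of size at least $(0.5+0.24\varepsilon)L$. Starting from the naive matching that pairs the $i$-th one of $s$'s prefix with the $i$-th one of $t$'s substring (which yields $L_s := (0.5+0.01\varepsilon)L$ one-matches), I augment with a sequence of \emph{Blue--Yellow blocks}: each block pairs a Blue $\ell$-flag at some position $i$ of $s$ with a Yellow $\ell'$-flag at the current-shift position $j$ of $t$, where $\ell'$ is the smallest integer with $0.9(\ell'-1) \ge \varepsilon^{-1}(\ell-1)$, i.e.\ $\ell' \approx \varepsilon^{-1}\ell/0.9$. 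Inside such a block I match the boundary ones $i \leftrightarrow j$, skip the $\ell-1$ interior ones of $s$ and the $\ell'-1$ interior ones of $t$, and insert $\varepsilon^{-1}(\ell-1)$ zero-matches using the zero-rich Blue region of $s$ and the Yellow region of $t$ (both contain at least this many zeros by the flag definitions).

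The accounting yields the required surplus. Each block trades $\ell-1$ one-matches for $\varepsilon^{-1}(\ell-1)$ zero-matches, a net match gain of $(\varepsilon^{-1}-1)(\ell-1)$, while advancing the shift $\delta = j-i$ by $\ell'-\ell \approx (\varepsilon^{-1}/0.9-1)(\ell-1)$. Since the total shift advance $\sum_k(\ell'_k-\ell_k)$ must equal $L_t-L_s = 0.29\varepsilon L$ for the matching to consume exactly the ones of $t$'s substring, the cumulative match gain is the ratio $\tfrac{\varepsilon^{-1}-1}{\varepsilon^{-1}/0.9-1} \approx 0.9$ times the total shift advance, i.e.\ $\approx 0.261\varepsilon L$, comfortably above the required surplus of $0.23\varepsilon L$ over $L_s$. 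Feasibility of the block plan follows from: (i) the type $m$-Blue-Yellow assumption on $s$ supplies $\ge(\varepsilon^2-\gamma)L$ Blue flags with $b_s(i)\in[2^m,\gamma L]$, providing ample aggregate ``Blue mass'' for the required cumulative shift; (ii) the $6\gamma$-Blue-flag-balance of $s$ (part of the Blue--Yellow pair hypothesis) distributes these flags approximately uniformly across the prefix $[1,L_s]$; and (iii) the type assumption on $t$ bounds Red $\ell'$-flags by $600\varepsilon L$ for every $\ell'\ge 2^m$, so almost every position of $t$ is a Yellow $\ell'$-flag of any desired size.

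For random $\Delta \in [L/4]$, the shifted $t$-positions examined for each planned block vary over an interval of length $L/4$ inside $t$, which contains at most $600\varepsilon L$ Red flags of any fixed size, yielding per-block failure probability $O(\varepsilon)$. Averaging over $\Delta$ and applying Markov's inequality to the failure fraction then gives $\Pr_\Delta[\text{fail fraction} > 0.1] \le O(\varepsilon)/0.1 \ll 0.1$, so for at least $90\%$ of $\Delta$ at least $90\%$ of the planned blocks succeed and the match gain is $\ge 0.9 \cdot 0.261\varepsilon L > 0.23\varepsilon L$. The main technical obstacle I anticipate is coherently managing the evolving shift $\delta$ across blocks when individual blocks fail: a failed block's planned shift advance is not realized, so the target $t$-positions of subsequent blocks are displaced. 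I plan to address this either (a) by working with all blocks of a single size ($\ell = 2^m$ and $\ell'$ a fixed integer $\approx \varepsilon^{-1}\ell/0.9$) so that planned shifts can be pre-computed and the analysis reduces to counting pre-planned positions that happen to be Yellow in $t$, or (b) by a greedy left-to-right scan through Blue flags of $s$ with a careful coupling argument to handle the inter-block dependencies. In either approach, the Blue-flag-balance property of $s$ guarantees a sufficient remaining supply of Blue flags throughout, absorbing the losses from the $O(\varepsilon)$-fraction of failed blocks.
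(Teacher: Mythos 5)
Your high-level strategy is the same as the paper's: find disjoint Blue flags in $s$, pair each with a candidate Yellow flag in $t$ at a pre-planned offset, use the bound on Red flags in $t$ plus a uniformly random $\Delta\in[L/4]$ to argue most candidates are genuinely Yellow, and close with Markov. Your constant choice $\ell'\approx\varepsilon^{-1}\ell/0.9$ (forcing the Yellow region to carry as many zeros as the Blue region's guaranteed count) differs from the paper's $\ell_{t,k}=\lceil 0.56\varepsilon^{-1}\ell_{s,k}\rceil$ (which only asks the Yellow region to carry half the Blue region's guaranteed count), and your gain/shift ratio works out to roughly the same $\approx 0.9$; the arithmetic survives either choice.

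The genuine gap is your ``inter-block dependency'' worry, which reflects a model mismatch and leads you to two fallback plans that do not work as stated. In the paper's argument there is \emph{no} dependency to manage: the block partition of both $s$ and $t$ is fixed once $\Delta$ is drawn --- $j_k\defeq i_k+\Delta+\ell_{t,[k-1]}-\ell_{s,[k-1]}$ and the substrings $t_k'\defeq t_{[j_k,j_k+\ell_{t,k}-1]}$ are determined \emph{before} asking whether any $j_k$ is Yellow. If $j_k$ fails to be Yellow you simply bound $\LCS(s_k',t_k')\ge 0$ for that block and still skip past the $\ell_{t,k}-1$ ones of $t$ in $t_k'$; the shift advance is always realized regardless of success, so no subsequent position is displaced. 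Your ``failed block's planned shift advance is not realized'' premise is false in this framework, and once you drop it the bookkeeping becomes exactly a sum over blocks with $\LCS(s_k'',t_k'')\ge i_{k+1}-i_k-\ell_{s,k}$ (ones in between) plus $\sum_{k\in K}\LCS(s_k',t_k')$ (zeros in the surviving blocks).

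Your plan (a) --- restricting to Blue flags of a single size $\ell=2^m$ --- does not work: being a Blue $b_s(i)$-flag for $b_s(i)>2^m$ does \emph{not} imply being a Blue $2^m$-flag (zeros in $[i,i+b_s(i))$ need not concentrate in the first $2^m$ ones), so the $m$-Blue-Yellow condition gives you no control on the number of Blue $2^m$-flags specifically; pigeonholing onto a single scale could cost a $1/\log L$ factor that destroys the constant surplus. You therefore must work with Blue flags of mixed sizes, which in turn means the Markov argument must be weighted by flag length (as in the paper: define good $\Delta$ by $\ell_{t,K}\ge(1-24000\varepsilon)\ell_{t,[d]}$ and apply Markov to $\ell_{t,[d]}-\ell_{t,K}$), not by block count. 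Your plan (b) is pointing toward the right idea but the ``careful coupling argument'' is unnecessary once you pre-commit the partition; with that realization plan (b) collapses to the paper's proof.

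Two smaller points worth fixing in a full write-up. First, you should use the Blue-flag-balance hypothesis on $s$ together with Lemma~\ref{cor:flag-0} to ensure the Blue mass lives in the first half $[L/2]$ of the indices (the type hypothesis alone gives mass spread over $[L]$, but the $s$-side matching only touches $s_{[1,(0.5+0.01\varepsilon)L]}$). Second, you should verify that the pre-planned $t$-window actually fits, i.e.\ $j_{d+1}\le 1+(0.5+0.3\varepsilon)L+\Delta$; with your larger $\ell'$, the total shift advance from all Blue mass would be $\approx 0.55\varepsilon L > 0.29\varepsilon L$, so you must truncate the Blue flag sequence once the cumulative shift reaches the budget --- which the paper's smaller constant $0.56\varepsilon^{-1}$ is chosen precisely to avoid.
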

\begin{proof} This proof is long, so we organize it into four parts. First, we use the assumptions to find many disjoint Blue flags in $s$. Next, we match the Blue flags in $s$ with candidate Yellow flags in $t$. We then show that, on average, most of these candidate flags are in fact Yellow flags in $t$. Finally, with these ingredients in place, we show the desired lower bound on the expected LCS over the random offset $\Delta$.

\medskip\noindent\emph{Finding many disjoint Blue flags in $s$.}
  Since $(s,t)$ is a Blue-Yellow pair, the strings $s$ and $t$ are both type $m$-Blue-Yellow for some integer $m$.
  Because $s$ is type $m$-Blue-Yellow, the fraction of indices $i\in[L]$ with $b_s(i)\in [2^m,\gamma L]$ is at least $\varepsilon^2-\gamma$. 
  As $s$ is $6\gamma$-Blue-flag-balanced, by Lemma~\ref{cor:flag-0}, the fraction of indices $i\in[L/2]$ with $b_s(i)\in[2^m,\gamma L]$ is at least $\varepsilon^2-4\gamma$.
  Let $1\le i_1<i_2<\cdots$ be such that $i_k\in[L/2]$ is the smallest index satisfying $b_s(i_k)\in[2^m,\gamma L]$ and, if $k > 1$, $i_k \ge i_{k-1}+b_s(i_{k-1})$. Intuitively, $(i_k)_{k\ge 1}$ is a maximal sequence of non-overlapping Blue flags in $[L/2]$.

  We claim that there is a smallest index $d$ such that $\sum_{k=1}^{d} b_s(i_k)\ge 0.5(\varepsilon^2-4\gamma)L$.
  By the minimality of each $i_k$, the intervals $[i_k,i_k+b_s(i_k)-1]$ for $k=1,\dots,k'$ together contain all indices $i < i_{k'+1}$ with $b_s(i)\in[2^m,\gamma L]$.
  Since there are at least $0.5(\varepsilon^2-4\gamma)L$ such indices $i \in [L/2]$, as long as $\sum_{k=1}^{k'} b_s(i_{k})<0.5(\varepsilon^2-4\gamma)L$, the index $i_{k'+1}$ is well defined, so in particular $d$ is well defined.
  Furthermore, since $d$ is minimal and $b_s(i_k)\le \gamma L$ for all $i_k$, we have $\sum_{k=1}^{d} b_s(i_k)\le 0.5(\varepsilon^2-4\gamma)L + \gamma L < 0.5\varepsilon^2L$.
  We thus have found indices $i_1,\dots,i_d$ where 
  \[ i_d\le L/2, \ i_{k+1}\ge i_k+b_s(i_k) \ \text{for each} \ k<d, \ \text{and} \ \sum_{k=1}^d b_s(i_k)\in[0.5(\varepsilon^2-4\gamma)L, 0.5\varepsilon^2L] . \]

\smallskip
\noindent\emph{Matching Blue flags in $s$ with candidate Yellow flags in $t$.}  
  Define $i_0=1$, $\ell_{s,0} = 0$, and $i_{d+1}=\floor{(0.5+0.01\varepsilon)L}$.
  For $k\in[d]$, let $\ell_{s,k}\defeq b_s(i_k)>1$ for the lengths of these Blue flags in $s$, and let $\ell_{t,k} \defeq \ceil{0.56\varepsilon^{-1}\ell_{s,k}}$ for $k\in[0,d]$, corresponding to Yellow flag length we want to match in $t$. Here the constant $0.56 \eps^{-1}$ above is chosen so that $0.9(\ell_{t,k}-1) \ge 0.5 \eps^{-1} \ell_{s,k}$, so that a Yellow $\ell_{t,k}$-flag guarantees roughly the same number of zeros in $t$ as a Blue $\ell_{s,k}$-flag does in $s$.
  
  For a subset $K\subset[0,d]$, define $\ell_{s,K} \defeq \sum_{k\in K}^{} \ell_{s,k}$, and define $\ell_{t,K}$ similarly.
  Recall the shift $\Delta$ is to be chosen uniformly from $[L/4]$.
  For $k\in [0, d+1]$, let $j_k\defeq i_k+\Delta+\ell_{t,[k-1]}-\ell_{s,[k-1]}$. This $j_k$ is the index in $t$ that we would like to be a Yellow $\ell_{t,k}$-flag to be matched with the Blue $\ell_{s,k}$-flag at $i_k$. 
  The indices $i_k$ and $j_k$ partition $s$ and $t$ into substrings as follows. Let
  \begin{align*}
    s_k'\ &\defeq \ s_{[i_k,i_k+\ell_{s,k}-1]} &&\text{for $k\in [1,d]$} \\
    s_k''\ &\defeq \  s_{[i_k+\ell_{s,k}, i_{k+1}-1]} &&\text{for $k\in[0,d]$} \\
    t_k'\ &\defeq \ t_{[j_k,j_k+\ell_{t,k}-1]} &&\text{for $k\in [1,d]$} \\
    t_k''\ &\defeq \  t_{[j_k+\ell_{t,k}, j_{k+1}-1]} &&\text{for $k\in[0,d]$.}
  \end{align*}
  By construction of $i_0,\dots,i_d$, we have $i_{k} + \ell_{s,k}\le i_{k+1}$ for $k\in [0, d-1]$, and also $i_d+\ell_{s,d} \le L/2 + \gamma L < i_{d+1}$. 
  Thus, the substrings defined above give a partition $s_{[1,(0.5+0.01\varepsilon)L]}=s_0''s_1's_1''s_2's_2''\dots s_d's_d''$, where the substrings alternate between the regions $s_k'$ corresponding to Blue flags and the regions $s_k''$ in between.
  Similarly, for $k\in [0,d-1]$ we have
  \[
  j_k + \ell_{t,k}
  = i_k + \Delta + \ell_{t,[k]} - \ell_{s,[k-1]}
  = j_{k+1} - (i_{k+1}-i_k-\ell_{s,k})
  \le j_{k+1},
  \]
  and
  \begin{align}
    j_{d+1}
    &= i_{d+1} +  \ell_{t,[d]}-\ell_{s,[d]} +\Delta \nonumber\\
    &\le 1+L/2 + \gamma L + 0.57\varepsilon^{-1}\cdot\ell_{s,[d]} + \Delta \nonumber\\
    &\le 1+L/2 + 0.01\varepsilon L +0.57\varepsilon^{-1}\cdot 0.5\varepsilon^2 L + \Delta \nonumber\\
    &\le 1+(0.5 + 0.3\varepsilon)L +\Delta. \nonumber
  \end{align} 
  Thus, we have that $t_0''t_1't_1''t_2't_2''\dots t_d't_d''$ is a prefix of $t_{[1+\Delta, (0.5+0.3\varepsilon)L+\Delta]}$, alternating between the regions $t_k'$ that we wish to cover by Yellow flags and the regions $t_k''$ in between.

\medskip\noindent\emph{Showing many $j_k$'s are Yellow flags of $t$.}
  We next show that typically, most of the $j_k$'s as defined above are Yellow flags. Let $K$ be the set of $k\in[d]$ for which the $j_k$ is a Yellow $\ell_{t,k}$-flag of string $t$.
  Call a shift $\Delta\in[L/4]$ \emph{good} if $\ell_{t,K}\ge(1-24000\varepsilon)\ell_{t,[d]}$, i.e. that when weighted by flag length, the set $K$ comprises most of $[d]$.
  We know that since $\ell_{t,k}=\ceil{0.56\varepsilon^{-1}\ell_{s,k}}\in[0.56\varepsilon^{-1}\ell_{s,k},0.56\varepsilon^{-1}\ell_{s,k} (1+2\epsilon)]$ for all $k\in[d]$, we have that for good $\Delta$, \begin{align}
    \label{eq:blue-00-0}
    \ell_{s,K}
    \ge \frac{\frac{\epsilon}{0.56}\ell_{t,K}}{1+2\epsilon} 
    \ge \frac{\frac{\epsilon}{0.56}(1-24000\varepsilon)\ell_{t,[d]}}{1+2\epsilon}
    \ge \frac{\epsilon}{0.56} (1-24002\varepsilon)\ell_{t,[d]}
    \ge (1-24002\varepsilon)\ell_{s,[d]}.
  \end{align}
  Since string $t$ has type $m$-Blue-Yellow and $\ell_{t,k}> \ell_{s,k}\ge 2^m$ for every $k\in[d]$, we know that for every $k\in[d]$, at most $600\varepsilon L$ of the indices in $[L]$ are not Yellow $\ell_{t,k}$-flags in string $t$. 
  Hence, if $\Delta$ is chosen uniformly at random in $[L/4]$, for each $k\in[d]$, the probability that $k\not\in K$ is at most $600\varepsilon L / (L/4) = 2400\varepsilon$.
  By linearity of expectation,
  \[
    \E_{\Delta\sim[L/4]}[\ell_{t,[d]}-\ell_{t,K}] \le 2400\varepsilon \ell_{t,[d]},
  \]
  As $\ell_{t,K}\le \ell_{t,[d]}$ always, we have by Markov's inequality on $\ell_{t,[d]}-\ell_{t,K}$ that
  \[
    \Pr[\Delta\text{ is good}]
    = 1 - \Pr\left[ \ell_{t,[d]}-\ell_{t,K} \ge (24000\varepsilon)\ell_{t,[d]} \right] \ge 1- \frac{2400\varepsilon\ell_{t,[d]}}{24000\varepsilon\ell_{t,[d]}} = 0.9.
  \]

\smallskip\noindent\emph{Lower bounding expected LCS.}
Since $s_0''s_1's_1''s_2's_2''\dots s_d's_d''= s_{[1,(0.5+0.01\varepsilon)L]}$ and $t_0''t_1't_1''t_2't_2''\dots t_d't_d''$ is a prefix of $t_{[1+\Delta, (0.5+0.3\varepsilon)L+\Delta+1]}$, it suffices to show that, when $\Delta$ is good,
  \begin{align*}
    \LCS(s_0''s_1's_1''s_2's_2''\dots s_d's_d'', t_0''t_1't_1''t_2't_2''\dots t_d't_d'') \ge (0.5 + 0.24 \varepsilon)L.
  \end{align*}
  As the index $i_k$ is a Blue $\ell_{s,k}$-flag in $s$, substring $s_k'$ has at least $\eps^{-1}(\ell_{s,k}-1) \ge 0.5\varepsilon^{-1}\ell_{s,k}$ zeros.
  If $k\in K$, then $j_k$ is a Yellow $\ell_{t,k}$-flag in $t$, so substring $t_k'$ has at least $0.9(\ell_{t,k}-1) > 0.5\varepsilon^{-1}\ell_{s,k}$ zeros as well.
  Hence, for $k\in K$, we have
  \begin{align}
    \label{eq:blue-00-1}
    \LCS(s_k',t_k')\ge 0.5\varepsilon^{-1}\ell_{s,k}.
  \end{align}
  Furthermore, for $k\in [0,d]$, we have that $s_k''$ has $i_{k+1}  - i_k - \ell_{s,k}$ ones, and the number of ones in $t_k''$ is also
  \[
    j_{k+1}  - j_k -\ell_{t,k} 
    = (i_{k+1} +\Delta + \ell_{t,k} - \ell_{s,[k]}) - (i_k + \Delta +\ell_{t,[k-1]} - \ell_{s,[k-1]}) - \ell_{t,k}
    = i_{k+1} - i_k- \ell_{s,k},
  \]
  so for all $k\in [0,d]$,
  \[
    \LCS(s_k'',t_k'')\ge i_{k+1}-i_k-\ell_{s,k}.
  \]
  Indeed, we now have that for any good $\Delta$,
  \begin{align}
    \LCS(s_{[1,(0.5+0.01\varepsilon)L]}, t_{[1+\Delta, (0.5+0.3\varepsilon)L + \Delta]}) 
    \ &\ge \ 
    \sum_{k=1}^{d} \LCS(s_k',t_k')
    +\sum_{k=0}^{d} \LCS(s_k'',t_k'')\nonumber\\
    \ &\ge \  \sum_{k\in K} \LCS(s_k',t_k') + \sum_{k=0}^{d}( i_{k+1}-i_k-\ell_{s,k})\nonumber\\
    \ &\ge \   0.5\varepsilon^{-1}\ell_{s,K} + (i_{d+1}-i_0-\ell_{s,[d]}) \label{eq:blue-0-3}\\
    \ &\ge \  0.5\varepsilon^{-1}(1-24002\varepsilon)\ell_{s,[d]} + (0.5+0.01\varepsilon)L-2-\ell_{s,[d]}\label{eq:blue-0-4}\\
    \ &= \  0.5\varepsilon^{-1}(1-24004\varepsilon)\ell_{s,[d]} + (0.5+0.01\varepsilon)L-2\nonumber\\
    \ &\ge \  0.5\varepsilon^{-1}(1-24004\varepsilon)(0.5(\varepsilon^2-4\gamma) L) + (0.5+0.01\varepsilon)L \label{eq:blue-0-6}\\
    \ &\ge \ (0.5+0.24\varepsilon)L.\label{eq:blue-0-7}
  \end{align}
  In \eqref{eq:blue-0-3}, we used \eqref{eq:blue-00-1} and $\ell_{s,0}=0$.
  In \eqref{eq:blue-0-4}, we used \eqref{eq:blue-00-0} and that $i_{d+1}-i_0 = \floor{(0.5+0.01\varepsilon)L}-1 > (0.05+0.01\varepsilon)L-2$.
  In \eqref{eq:blue-0-6}, we used that $\ell_{s,[d]}=\sum_{k=1}^{d} b_s(i_k) \ge 0.5(\varepsilon^2-4\gamma)L$ by construction of $i_1,\dots,i_d$.
  In \eqref{eq:blue-0-7}, we used that $\varepsilon\le 10^{-6}$ and $\gamma\le 0.001\varepsilon^2$. This completes the proof as the probability that $\Delta \in [L/4]$ is good is at least $0.9$. 
\end{proof}

The next lemma shows that when two Blue-Yellow pairs occur consecutively (for technical reasons, we require that the second Blue-Yellow pair occurs on the reversed strings), we get a large common subsequence that also spans the same number of ones in both $s$ and $t$.
This allows us to piece together many such matchings and apply the Prefix/Suffix LCS Lemma (Lemma~\ref{lem:shift}), giving an overall gain in LCS over the 1/2 barrier.

\begin{lemma}[Blue-Yellow balanced matching lemma]
  \label{lem:blue2}
  Let $s=s_1s_2$ and $t=t_1t_2$ such that substrings $s_1,s_2,t_1,t_2$ each have $L$ ones and start with a one, and the pairs $(s_1,t_1)$ and $(\rev(t_2),\rev(s_2))$ are Blue-Yellow pairs.
  Then for $\Delta$ uniformly random in $[L/4]$, we have
  \[
    \E_{\Delta\sim[L/4]}\left[\LCS(\Drop_\Delta(s), \Drop_{-\Delta}(t)) + \Delta\right]\ge 2L + 0.12\varepsilon L.
  \] 

\end{lemma}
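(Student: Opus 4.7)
The plan is to apply the Blue-Yellow matching lemma (Lemma~\ref{lem:blue1}) twice --- once to $(s_1,t_1)$ and once to $(\rev(t_2),\rev(s_2))$ --- using the \emph{same} random shift $\Delta\in[L/4]$ in both invocations, and then stitch the resulting partial matchings together via a trivial one-to-one matching on the remaining ``middle'' ones of $s$ and $t$. Write $N_1=\lfloor(0.5+0.01\varepsilon)L\rfloor$ and $N_2=\lceil(0.5+0.3\varepsilon)L\rceil$. The first invocation yields, with probability at least $0.9$, a common subsequence of size at least $(0.5+0.24\varepsilon)L$ between the first $N_1$ ones of $s_1$ (positions $1,\ldots,N_1$ in $s$) and the $N_2$ consecutive ones at positions $1+\Delta,\ldots,N_2+\Delta$ in $t$. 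The second invocation, after unwinding via Lemma~\ref{lem:rev} and Lemma~\ref{lem:rev-lcs}, yields with probability at least $0.9$ a common subsequence of size at least $(0.5+0.24\varepsilon)L$ between the last $N_1$ ones of $t_2$ (positions $2L-N_1+1,\ldots,2L$ in $t$) and the $N_2$ consecutive ones at positions $2L-N_2-\Delta+1,\ldots,2L-\Delta$ in $s$.

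The key cancellation is that $t$ supplies $N_2-N_1$ more matched ones than $s$ in the first invocation while the roles of $s$ and $t$ are reversed in the second, so the residual ``middle'' regions in $s$ and $t$ have exactly the same number of ones, namely $2L-N_1-N_2-\Delta=(1-0.31\varepsilon)L-\Delta$. I would match these middle ones one-to-one: the $i$-th middle one of $s$ (at position $N_1+i$ in $s$) to the $i$-th middle one of $t$ (at position $N_2+\Delta+i$ in $t$). This middle matching is non-crossing and is compatible with both outer matchings since $N_2+\Delta\le 2L-N_1$ for $\Delta\le L/4$. When both invocations succeed, concatenating the three pieces yields a valid non-crossing common subsequence of $\Drop_\Delta(s)$ and $\Drop_{-\Delta}(t)$ of length at least $2(0.5+0.24\varepsilon)L+(2L-N_1-N_2-\Delta)\ge(2+0.17\varepsilon)L-\Delta$.

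To take expectations, I would note that the trivial matching of the $i$-th one of $s$ to the $(i+\Delta)$-th one of $t$ always gives $\LCS(\Drop_\Delta(s),\Drop_{-\Delta}(t))+\Delta\ge 2L$. By a union bound both Lemma~\ref{lem:blue1} invocations succeed simultaneously with probability at least $0.8$, yielding
\[
\E_\Delta\bigl[\LCS(\Drop_\Delta(s),\Drop_{-\Delta}(t))+\Delta\bigr]\ge 0.8\cdot(2L+0.17\varepsilon L)+0.2\cdot 2L\ge 2L+0.136\,\varepsilon L,
\]
which exceeds the required $2L+0.12\varepsilon L$. The main obstacle is the bookkeeping needed to verify that the three partial matchings compose into a single non-crossing matching, and that the two outer shifted windows tile the residual one-bits with no overlap and no gap; this is precisely arranged by invoking the second lemma on the pair $(\rev(t_2),\rev(s_2))$ (with $s$ and $t$ swapped before reversing) and coupling both invocations to the same $\Delta$, which together make the asymmetric $N_1$-versus-$N_2$ one-counts cancel on each side.
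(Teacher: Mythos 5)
Your proposal is correct and takes essentially the same route as the paper: apply Lemma~\ref{lem:blue1} to $(s_1,t_1)$ and to $(\rev(t_2),\rev(s_2))$ with the same coupled shift $\Delta$, take a union bound for a $0.8$ success probability, match the residual middle ones one-to-one, and fall back on $\LCS+\Delta\ge 2L$ on the failure event. The paper organizes the stitching via explicit substrings $s_{\text{init}}s_{\text{mid}}s_{\text{end}}$ and $t_{\text{init}}t_{\text{mid}}t_{\text{end}}$ of $\Drop_\Delta(s)$ and $\Drop_{-\Delta}(t)$ rather than your one-index bookkeeping with $N_1,N_2$, but the decomposition, the cancellation making the middle one-counts equal, and the final expectation argument are the same.
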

\begin{proof}
   We have $\Drop_\Delta(s)$ and $\Drop_{-\Delta}(t)$ each have $2L-\Delta$ ones, so we always have 
   \[
   \LCS(\Drop_\Delta(s), \Drop_{-\Delta}(t)) + \Delta \ge 2L.
   \]
   Thus, because $0.8 \cdot 0.16 \eps L > 0.12 \eps L$, it suffices to show that
   \[
    \Pr_{\Delta\sim[L/4]}\left[\LCS(\Drop_\Delta(s), \Drop_{-\Delta}(t))+\Delta\ge 2L + 0.16\varepsilon L\right]  \ge 0.8.
   \]
   Call $\Delta$ \emph{good} if both of the following inequalities are true:
   \begin{align}
    \LCS\left((s_1)_{[1,(0.5+0.01\varepsilon)L]}, (t_1)_{[1+\Delta,(0.5+0.3\varepsilon)L+\Delta]}\right) &\ge (0.5+0.24\varepsilon)L
    \label{eq:blue-1} \\
    \LCS\left(\rev(t_2)_{[1,(0.5+0.01\varepsilon)L]}, \rev(s_2)_{[1+\Delta,(0.5+0.3\varepsilon)L+\Delta]}\right) &\ge (0.5+0.24\varepsilon)L.
    \label{eq:blue-2}
   \end{align}
   
   By Lemma~\ref{lem:blue1}, first applied to the Blue-Yellow pair $(s_1,t_1)$, then applied to the Blue-Yellow pair $(\rev(t_2),\rev(s_2))$, each of \eqref{eq:blue-1} and \eqref{eq:blue-2} holds with probability at least $0.9$ for $\Delta$ sampled uniformly from $[L/4]$, so a random $\Delta$ in $[L/4]$ is good with probability at least $0.8$.

   Fix a good $\Delta$.
   We show that 
   \begin{equation}
    \label{eq:goal-lcs}   
     \LCS(\Drop_\Delta(s), \Drop_{-\Delta}(t))\ge 2L-\Delta + 0.16\varepsilon L.
    \end{equation}
   By the second part of Lemma~\ref{lem:rev},
   \begin{align*}
   \rev(t_2)_{[1,(0.5+0.01\varepsilon)L]} &=  \rev\left((t_2)_{[1+(0.5-0.01\varepsilon)L,L]}\right) \\
   \rev(s_2)_{[1+\Delta,(0.5+0.3\varepsilon)L+\Delta]} &=  \rev\left((s_2)_{[1+(0.5-0.3\varepsilon)L-\Delta,L-\Delta]}\right).
   \end{align*}
   Hence since $\rev(\cdot)$ preserves LCS (Lemma~\ref{lem:rev-lcs}), we have
   \begin{align}
   &\LCS\left((t_2)_{[1+(0.5-0.01\varepsilon)L,L]}, (s_2)_{[1+(0.5-0.3\varepsilon)L-\Delta,L-\Delta]}\right)  \nonumber\\
   &= \LCS\left(\rev(t_2)_{[1,(0.5+0.01\varepsilon)L]}, \rev(s_2)_{[1+\Delta,(0.5+0.3\varepsilon)L+\Delta]}\right) \nonumber\\
   &\ge (0.5+0.24\varepsilon)L \ . \nonumber
   \end{align}

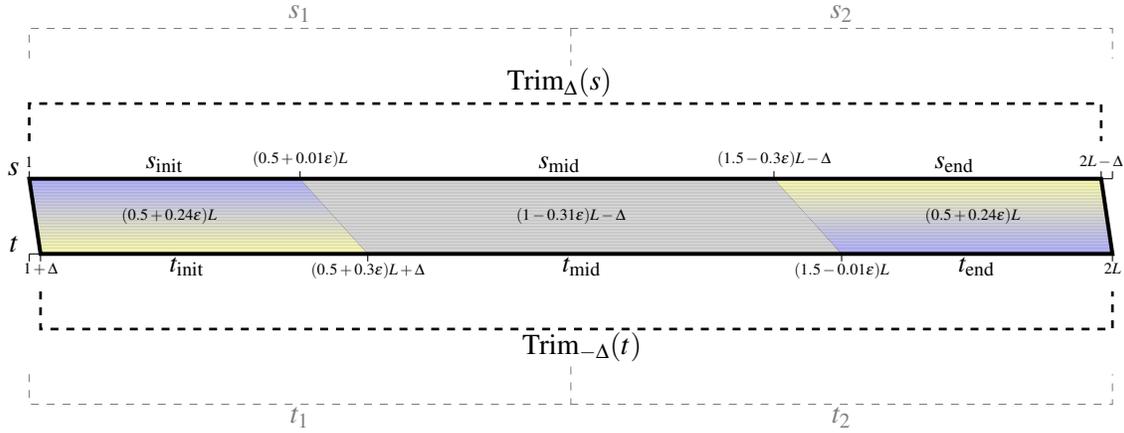
\begin{figure}[h]
\centering
\vspace{-5mm}

\scalebox{1}{
\begin{tikzpicture}
\drawpartbynoind{0}{0.5}{0}{15}{0}{0}{0}{12}
\drawpartgraynoind{0}{15}{0}{36}{0}{12}{0}{33}
\drawpartybnoind{0}{48}{0}{36}{0}{47.5}{0}{33}
\draw[line width=1.5](0,1) -- (47.5*\xspacing,1) -- (48*\xspacing,0) -- (0.5*\xspacing,0) -- cycle;
\node[label={[anchor=east]$s$}] at (0,1) {};
\node[label={[anchor=east]$t$}] at (0,0) {};
\draw (0,0) -- (48*\xspacing,0);
\draw (0,1) -- (48*\xspacing,1);

\draw[dashed,gray] (0,2.6) -- (0,3) -- (48*\xspacing,3) -- (48*\xspacing, 2.6);
\draw[dashed,gray] (0,-1.6) -- (0,-2) -- (48*\xspacing,-2) -- (48*\xspacing, -1.6);
\draw[dashed,gray] (24*\xspacing,3) -- (24*\xspacing,2.5);
\draw[dashed,gray] (24*\xspacing,-1.5) -- (24*\xspacing,-2);
\node[outer sep=-5, label={[anchor=south,gray]$s_{1}$}] at (12*\xspacing,3) {};
\node[outer sep=-5, label={[anchor=south,gray]$s_{2}$}] at (36*\xspacing,3) {};
\node[outer sep=-1,                           label={[anchor=north,gray]$t_{1}$}] at (12*\xspacing,-2) {};
\node[outer sep=-1,                           label={[anchor=north,gray]$t_{2}$}] at (36*\xspacing,-2) {};

\draw[dashed, line width=1] (0,1.5) -- (0,2) -- (47.5*\xspacing, 2) -- (47.5*\xspacing, 1.5);
\draw[dashed, line width=1] (0.5*\xspacing,-0.5) -- (0.5*\xspacing,-1) -- (48*\xspacing, -1) -- (48*\xspacing, -0.5);
\node[outer sep=-5, label={[anchor=south]$\Drop_\Delta(s)$}] at (23.5*\xspacing,2) {};
\node[outer sep=-1,                           label={[anchor=north]$\Drop_{-\Delta}(t)$}] at (24.5*\xspacing,-1) {};

\node[outer sep=-5, label={[anchor=south]\small$s_{\text{init}}$}] at (6*\xspacing,1) {};
\node[outer sep=-5, label={[anchor=south]\small$s_{\text{mid}}$}] at (23.5*\xspacing,1) {};
\node[outer sep=-5, label={[anchor=south]\small$s_{\text{end}}$}] at (41*\xspacing,1) {};
\node[outer sep=-1,                           label={[anchor=north]\small$t_{\text{init}}$}] at (7*\xspacing,0) {};
\node[outer sep=-1,                           label={[anchor=north]\small$t_{\text{mid}}$}] at (24.5*\xspacing,0) {};
\node[outer sep=-1,                           label={[anchor=north]\small$t_{\text{end}}$}] at (42*\xspacing,0) {};
\node[] at (6.2*\xspacing,0.5) {\tiny$(0.5+0.24\varepsilon)L$};
\node[] at (24*\xspacing,0.5) {\tiny$(1-0.31\varepsilon)L-\Delta$};
\node[] at (41.8*\xspacing,0.5) {\tiny$(0.5+0.24\varepsilon)L$};
\toplabel{0}{1};
\toplabel{12}{(0.5+0.01\varepsilon)L};
\toplabel{33}{(1.5-0.3\varepsilon)L-\Delta};
\toplabel{47.5}{2L-\Delta};
\toplabel{48}{};
\bottomlabel{0}{};
\bottomlabel{0.5}{1+\Delta};
\bottomlabel{15}{(0.5+0.3\varepsilon)L+\Delta};
\bottomlabel{36}{(1.5-0.01\varepsilon)L};
\bottomlabel{48}{2L};
\end{tikzpicture}
}
\vspace{-5mm}
\caption{Lemma~\ref{lem:blue2}. We obtain two ``trapezoids'' with good LCS obtained from applying Lemma~\ref{lem:blue1} to two Blue-Yellow pairs, $(s_1,t_1)$ and $(\rev(s_2),\rev(t_2))$, and match ones in between (in the grey region), giving an improved LCS for $\Drop_\Delta(s)$ and $\Drop_{-\Delta}(t)$.} 
\label{fig:blueyellow-case-2}
\end{figure}
Writing $\Drop_\Delta(s)=s_{\text{init}}s_{\text{mid}}s_{\text{end}}$ and $\Drop_{-\Delta}(t)=t_{\text{init}}t_{\text{mid}}t_{\text{end}}$ where (see Figure~\ref{fig:blueyellow-case-2})\footnote{A negligible detail: we do not need to shift the endpoints of substrings $s_{\text{mid}}$ and $t_{\text{mid}}$ by 1 because $L$ is a power of two so $\varepsilon L$ is never an integer}
   \begin{align}
     s_{\text{init}} &\defeq s_{[1,(0.5+0.01\varepsilon)L]} &&= (s_1)_{[1,(0.5+0.01\varepsilon)L]} \nonumber\\
     s_{\text{mid}} &\defeq s_{[(0.5+0.01\varepsilon)L, (1.5-0.3\varepsilon)L-\Delta+1]} \nonumber\\ 
     s_{\text{end}} &\defeq s_{[(1.5-0.3\varepsilon)L-\Delta+1,2L-\Delta]} &&= (s_2)_{[(0.5-0.3\varepsilon)L-\Delta+1,L-\Delta]} \nonumber\\
     t_{\text{init}} &\defeq t_{[1+\Delta,(0.5+0.3\varepsilon)L+\Delta]} &&= (t_1)_{[1+\Delta,(0.5+0.3\varepsilon)L+\Delta]}\nonumber\\
     t_{\text{mid}} &\defeq t_{[(0.5+0.3\varepsilon)L+\Delta, (1.5-0.01\varepsilon)L+1]}\nonumber\\
     t_{\text{end}} &\defeq t_{[(1.5-0.01\varepsilon)L+1,L]} &&= (t_2)_{[(0.5-0.01\varepsilon)L+1,L]} \ . \nonumber 
   \end{align}
   We have $s_{\text{mid}}$ and $t_{\text{mid}}$ both have at least $(1-0.31\varepsilon)L-\Delta$ ones.
   Hence, we have
   \begin{align*}
    \LCS(\Drop_\Delta(s),\Drop_{-\Delta}(t))
    \ &\ge \ \LCS(s_{\text{init}},t_{\text{init}}) + \LCS(s_{\text{mid}},t_{\text{mid}}) + \LCS(s_{\text{end}},t_{\text{end}})\nonumber\\
    \ &\ge \ (0.5+0.24\varepsilon)L +  (1-0.31\varepsilon)L-\Delta + (0.5+0.24\varepsilon)L  \\
    \ &> \  2L - \Delta + 0.16\varepsilon L,
   \end{align*}
  establishing \eqref{eq:goal-lcs} as desired.
\end{proof}

Combining Lemma~\ref{lem:blue2} with the Prefix/Suffix LCS Lemma (Lemma~\ref{lem:shift}) shows that when two strings have many Blue-Yellow pairs among their substrings at some scale, we get a large LCS.

\begin{lemma}[Many Blue-Yellow pairs implies good LCS]
\label{lem:blue3}
  Let $s$ and $t$ be strings with the same length and the same number $L=2^n$ of ones, and let $m^*\le n-10-\log\varepsilon^{-1}$.
  Suppose there exists a set $Z\subset[2^{n-m^*-1}]$ such that $|Z| > 2^{n-m^*-1}/10$ and for all $i\in Z$, the substring pairs $(s_{m^*,2i-1},t_{m^*,2i-1})$ and $(\rev(s_{m^*,2i}),\rev(t_{m^*,2i}))$ are Blue-Yellow pairs.
  Then,
  \begin{align}
    \LCS(s,t) > \left(0.5 + 0.0001\varepsilon \right)|s|.
  \end{align}
\end{lemma}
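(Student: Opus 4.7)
My plan is to follow the same template as Lemma~\ref{lem:green2}: apply Lemma~\ref{lem:blue2} to each Blue-Yellow pair to get a local gain, average out the random shift, then use the Prefix/Suffix LCS Lemma (Lemma~\ref{lem:shift}) to glue the local gains into an overall LCS beating the $1/2$ barrier. The key observation is that for each $i\in Z$, the scale-$(m^*{+}1)$ substring $s_{m^*+1,i}$ equals the concatenation $s_{m^*,2i-1}\,s_{m^*,2i}$ (and likewise for $t$), so the two Blue-Yellow pair hypotheses of this lemma provide exactly the input Lemma~\ref{lem:blue2} requires. Writing $L':=2^{m^*}$, Lemma~\ref{lem:blue2} therefore gives, for every $i\in Z$,
\[
  \E_{\Delta\sim[L'/4]}\!\left[\LCS\bigl(\Drop_\Delta(s_{m^*+1,i}),\Drop_{-\Delta}(t_{m^*+1,i})\bigr)+\Delta\right]\ \ge\ 2L'+0.12\varepsilon L'.
\]

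Summing this inequality over $i\in Z$ and exchanging sum and expectation by linearity, I can fix a single shift $\Delta\in[L'/4]$ satisfying
\[
  \sum_{i\in Z}\LCS\bigl(\Drop_\Delta(s_{m^*+1,i}),\Drop_{-\Delta}(t_{m^*+1,i})\bigr)\ \ge\ |Z|\cdot\bigl(2L'-\Delta+0.06\varepsilon\cdot 2L'\bigr).
\]
Since $2L'=2^{m^*+1}$ and $0.12\varepsilon L'=0.06\varepsilon\cdot 2^{m^*+1}$, this is precisely the hypothesis of Lemma~\ref{lem:shift} applied at scale $m=m^*+1$ with advantage parameter $\delta=\Theta(\varepsilon)$. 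The lemma then yields $\LCS(s,t)\ge(1+\Theta(\varepsilon))L$.

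To convert this $L$-bound into the stated $|s|$-bound I proceed exactly as in Lemma~\ref{lem:green2}: either $L<(0.5-0.0001\varepsilon)|s|$, in which case the all-zeros string is a common subsequence of length $|s|-L>(0.5+0.0001\varepsilon)|s|$, or $L\ge(0.5-0.0001\varepsilon)|s|$, in which case $(1+\Theta(\varepsilon))L>(0.5+0.0001\varepsilon)|s|$. Either way the desired LCS lower bound follows. (I am reading the stated conclusion $(1+0.0001\varepsilon)|s|$ as a typo for $(\tfrac12+0.0001\varepsilon)|s|$, the direct analogue of Lemma~\ref{lem:green2}, since $\LCS(s,t)\le|s|$ precludes a factor above~$1$.)

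The only piece of bookkeeping to watch is the scale constraint $m\le n-10-\log\delta^{-1}$ in Lemma~\ref{lem:shift}: applying it at $m=m^*+1$ with $\delta$ close to $0.06\varepsilon$ would like $m^*$ to be smaller than what the stated hypothesis $m^*\le n-10-\log\varepsilon^{-1}$ provides by an additive $O(1)$. This is the same kind of mild constant-chasing slack that already appears in Lemma~\ref{lem:green2} and can be absorbed either by slightly tightening the constant in the hypothesis or by shrinking the advantage parameter; the resulting gain is still $\Theta(\varepsilon)L$, which is more than enough slack to produce the claimed $0.0001\varepsilon$. Aside from this, I expect no genuine obstacle: the intricate combinatorics has already been done inside Lemma~\ref{lem:blue2}, and the remaining work is purely the averaging-and-gluing step above.
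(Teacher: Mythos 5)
Your proposal matches the paper's proof essentially step for step: apply Lemma~\ref{lem:blue2} to each pair $(s_{m^*+1,i},t_{m^*+1,i})=(s_{m^*,2i-1}s_{m^*,2i},\,t_{m^*,2i-1}t_{m^*,2i})$, average over $\Delta\in[L^*/4]$ to fix a good shift, invoke Lemma~\ref{lem:shift} at scale $m^*+1$ with $\delta'=0.06\varepsilon$, and then do the small-$L$/large-$L$ split. You also correctly caught that the displayed conclusion $(1+0.0001\varepsilon)|s|$ must be a typo for $(0.5+0.0001\varepsilon)|s|$ (the paper's own chain of inequalities ends exactly there, and its invocation in Lemma~\ref{lem:main-1} uses the $0.5$ version). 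Your observation about the scale constraint is also accurate as a matter of constant-chasing: the hypothesis $m^*\le n-10-\log\varepsilon^{-1}$ does not literally imply $m^*+1\le n-10-\log(0.06\varepsilon)^{-1}$, and the paper's own proof has the same slight slack mismatch; it is harmless because the actual calling context (Lemma~\ref{lem:main-1}) supplies $m^*\le n-10-\log(150/\varepsilon^6)$, which is far more than enough, and Lemma~\ref{lem:shift}'s constant $10$ is itself loose.
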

\begin{proof}
Let $L^* = 2^{m^*}$.
We may assume $L\ge (0.5-0.001\varepsilon)|s|$ or else we are done by having an LCS of $(0.5+0.001\varepsilon)|s|$ zeros.
For all $i\in Z$, substrings $s_{m^*+1,i}=s_{m^*,2i-1}s_{m^*,2i}$ and $t_{m^*+1,i}=t_{m^*,2i-1}t_{m^*,2i}$ satisfy the setup of Lemma~\ref{lem:blue2} with $L=L^*$, so we have, for all $i\in Z$,
  \begin{align*}
    \E_{\Delta\sim[L^*/4]}\big[\LCS\left(\Drop_{\Delta}(s_{m^*+1,i}), \Drop_{-\Delta}(t_{m^*+1,i})\right)+\Delta\big]
    \ge 2L^*+0.12\varepsilon L^* \ .
  \end{align*}
  By linearity of expectation, we have
  \begin{align*}
    \E_{\Delta\sim[L^*/4]}\left[\sum_{i\in Z}^{} \big[\LCS\left(\Drop_{\Delta}(s_{m^*+1,i}), \Drop_{-\Delta}(t_{m^*+1,i})\right)+\Delta\big]\right]
    \ge |Z|\cdot (2L^*+0.12\varepsilon L^*) \ .
  \end{align*}
  Hence, we may fix a $\Delta\in[L^*/4]$ for which 
  \begin{align*}
   \sum_{i\in Z}^{} \LCS\left(\Drop_{\Delta}(s_{m^*+1,i}), \Drop_{-\Delta}(t_{m^*+1,i})\right)
    \ge |Z|\cdot (2L^*-\Delta+0.12\varepsilon L^*) \ .
  \end{align*}
  Thus, the set $Z$ satisfies the setup of Lemma~\ref{lem:shift} with $n'=n$ and $m'=m^*+1$ and $L'=2L^*$ and $\delta' = 0.06\varepsilon$. 
  Hence, by Lemma~\ref{lem:shift},
 \[
    \LCS(s,t)
    \ge \left( 1 + \frac{0.06\varepsilon}{20} \right)L
    \ge \left( 1 + 0.003\varepsilon \right)(0.5-0.001\varepsilon)|s|
    > \left( 0.5 + 0.0001\varepsilon \right)|s| \ . \ \qedhere
    \]
\end{proof}

\section{Putting it all together}\label{sec:wrapup}

\subsection{Statistics}

We now prove our main theorem.
The first step is to define the \emph{statistics} of a string $w$. 
\begin{definition}[Statistics]
    \label{def:statistics}
  Let $w$ be a string with $L=2^n$ ones, and let $n_0\defeq \max(0,n - 200\gamma^{-3}\log n)$.
  Let the \emph{statistics} of string $w$ be a table of the following data:
  \begin{enumerate}
  \item For all $m\ge n_0$ and $i\ge 1$, the number of zeros and the number of ones in $w_{m,i}$ (the number of ones is always $2^m$).
  \item For all $m\ge n_0$ and $i\ge 1$, the type (see Definition~\ref{def:type}) of string $w_{m,i}$.
  \item The set $\mathcal{I}^{n_0}(w)$ of pairwise disjoint intervals $I\subset[L]$ that each have size $|I|\ge 2^{n_0}$ such that $w_I$ is imbalanced for each $I\in \mathcal{I}^{n_0}(w)$, and the sum $\sum_{I\in \mathcal{I}^{n_0}(w)}^{} |I|$ is maximized (if multiple such $\mathcal{I}^{n_0}(w)$ exist, break the tie arbitrarily).
  \item For each $I\in \mathcal{I}^{n_0}(w)$, the indices $x$ and $y$ such that substring $w_I$ starts at the $x$-th bit and ends at the $y$-th bit of $s$.
  \end{enumerate}
  We say two strings $s$ and $t$ {\it agree on statistics} if their tables of statistics are identical.
\end{definition}

This next lemma shows that it is possible to pigeonhole strings by their statistics, by showing that there are not too many possible statistics for a string.

\begin{lemma}
  There are at most $2^{\poly\log N}$ possible tables of statistics for a string of length $N$.
\label{lem:stats}
\end{lemma}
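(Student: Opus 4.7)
The plan is to count the information content of a table of statistics directly and show that it is polylogarithmic in $N$, so the number of possible tables is at most $2^{\poly\log N}$.

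First, I would control the range of scales $m$ and substrings per scale. Since $L = 2^n \le N$, we have $n \le \log_2 N$, and by definition
\[
n - n_0 \le 200 \gamma^{-3} \log n = O(\log \log N).
\]
So the set of scales $m \in [n_0, n]$ has size $O(\log\log N)$, and for each such scale the number of dyadic substrings $w_{m,i}$ indexed by $i$ is at most $L / 2^m \le L/2^{n_0} \le 2^{n-n_0} = (\log N)^{O(1)}$. Hence the number of pairs $(m,i)$ appearing in parts (1) and (2) of Definition~\ref{def:statistics} is at most $\poly\log N$.

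Next, I would bound the information per substring. For part (1), the numbers of zeros and ones in $w_{m,i}$ are integers in $[0,N]$, each encodable in $O(\log N)$ bits. For part (2), inspecting Definition~\ref{def:type}, a type is either \textbf{Imbalanced}, or $\ell$-\textbf{Green} for some $1\le\ell\le\varepsilon^5|w_{m,i}|$, or $m'$-\textbf{Blue-Yellow} for some $1\le m'\le \log_2(2^m)\le \log_2 N$; altogether at most $O(N)$ possible type labels, hence $O(\log N)$ bits.

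For parts (3) and (4), the intervals in $\mathcal{I}^{n_0}(w)$ are pairwise disjoint subintervals of $[L]$ each of size at least $2^{n_0}$, so $|\mathcal{I}^{n_0}(w)| \le L/2^{n_0} \le \poly\log N$. Each such interval is described by two endpoints in $[L]$, and the accompanying positions $x,y$ in $s$ lie in $[|s|] \subseteq [N]$, requiring $O(\log N)$ bits each. Multiplying the number of intervals by the per-interval cost still gives $\poly\log N$ bits in total.

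Combining the three contributions, the full table fits in $\poly\log N \cdot O(\log N) = \poly\log N$ bits, so the number of distinct tables of statistics is at most $2^{\poly\log N}$, as claimed. I do not anticipate a substantive obstacle: this is a bookkeeping lemma, and the only point requiring care is confirming that every quantity stored in the table is integer-valued with magnitude polynomial in $N$, which follows directly from the definitions of $n_0$, the dyadic decomposition, and the type classification.
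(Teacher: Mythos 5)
Your proposal is correct and takes essentially the same approach as the paper's proof: both count the $\poly\log N$ dyadic substrings appearing in parts (1)--(2) of Definition~\ref{def:statistics}, observe that each entry (zero-count, one-count, type) ranges over at most $O(N)$ values, and bound $|\mathcal{I}^{n_0}(w)|$ by $2^{n-n_0}=\poly\log N$ with $O(\log N)$-bit endpoint data per interval. The only cosmetic difference is that you tally encoding lengths additively in bits, whereas the paper multiplies the number of possible values per field; these are equivalent.
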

\begin{proof}
  A string $w$ of length $N$ has at most
  \[
  \sum_{m=n_0}^{n} 2^{n- m} = \poly\log N
  \]
  substrings $w_{m,i}$ that are considered in its table of statistics.
  Furthermore, for each substring $w_{m,i}$, there are at most $N+1$ choices for the number of zeros and the number of ones, and at most $O(N)$ choices for the type of $w_{m,i}$ (there is one Imbalanced type, $O(N)$ Green types, and $O(\log N)$ Blue-Yellow types), so there are at most $(\poly N)^{\#(\text{substrings }w_{m,i})}\le 2^{\poly\log N}$ choices for all the types and zero/one-counts in the table.
  Lastly, $\mathcal{I}^{n_0}(w)$ has at most $2^{n-n_0}=\log^{O_\varepsilon(1)}N$ intervals (because intervals have length at least $2^{n_0}$ and are disjoint), so there are at most $(2^{\poly\log N})^{|\mathcal{I}^{n_0}(w)|}\le 2^{\poly\log N}$ choices for $\mathcal{I}^{n_0}(w)$ and the locations of the endpoints of the intervals in $\mathcal{I}^{n_0}(w)$.
\end{proof}

\subsection{The Imbalanced case} 
This next lemma covers an easy case, when $s$ has many large imbalanced intervals.
In this case, our pigeonholing by statistics guarantees that $s$ and $t$ are imbalanced in common locations, allowing us to apply the imbalanced strategy to find a large LCS.
\begin{lemma}
\label{lem:main-0}
  Let $s$ and $t$ be strings that each start with a one, agree on statistics, and have $L=2^n$ ones where $n$ is sufficiently large.
  Suppose there exists a set $\mathcal{I}$ of pairwise disjoint intervals $I$ that each satisfy $|I|\ge 2^{n-200\gamma^{-3}\log n}$ such that $s_I$ is imbalanced for each $I \in \mathcal{I}$ and $\sum_{I\in \mathcal{I}} |I|\ge \frac{\varepsilon^5}{10}L$.
  Then
  \[
  \LCS(s,t)\ge \left(0.5+\frac{\varepsilon^{6}}{150}\right)|s|.
  \]
\end{lemma}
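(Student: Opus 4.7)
The plan is to exploit the statistics agreement between $s$ and $t$ to synchronize their imbalanced substructure at matching bit-positions, then build a common subsequence by matching ones (or zeros) in the gaps and using whichever of ones or zeros dominates inside each imbalanced interval. The key observation is that agreement on statistics forces $\mathcal{I}^{n_0}(s)=\mathcal{I}^{n_0}(t)$ with identical recorded bit-endpoints, so the imbalanced intervals coincide as bit-positions in $s$ and $t$; this is what makes a combined matching across gaps and intervals automatically non-crossing.

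First I set $\mathcal{I}^*:=\mathcal{I}^{n_0}(s)=\mathcal{I}^{n_0}(t)$. The maximality clause in Definition~\ref{def:statistics} gives $\sum_{I\in\mathcal{I}^*}|I|\ge\sum_{I\in\mathcal{I}}|I|\ge\tfrac{\varepsilon^5}{10}L$, since the hypothesized $\mathcal{I}$ is itself a valid candidate for the maximum. For each $I\in\mathcal{I}^*$, the recorded bit-endpoints $x_I,y_I$ of $s_I$ and of $t_I$ coincide, so $s_I$ and $t_I$ have the same bit-length, the same number of ones $|I|$, and the same number $z_I$ of zeros; both are imbalanced, necessarily in the same direction. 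Agreement on per-$w_{m,i}$ zero counts also forces $|s|=|t|=:N$ and $z(s)=z(t)=:Z$. If $Z\notin(1\pm\varepsilon)L$, then $s$ and $t$ are themselves $\varepsilon$-imbalanced of the same length and one-count, and Lemma~\ref{lem:trivial-0} gives $\LCS(s,t)\ge(\tfrac12+\tfrac{\varepsilon}{5})N$, which is far stronger than needed. So I assume $Z\in(1\pm\varepsilon)L$, which yields $L\ge N/(2+\varepsilon)\ge N/3$.

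The main step builds the common subsequence. Because the bit-endpoints of the intervals agree between $s$ and $t$, each gap between consecutive intervals of $\mathcal{I}^*$ has the same bit-length and one-count in both strings, hence the same zero-count too. I construct two candidate non-crossing matchings. In the first, \emph{ones outside}, match the $i$-th one of $s$ with the $i$-th one of $t$ throughout every gap, and inside each $I\in\mathcal{I}^*$ replace the $|I|$ within-$I$ one-matches by an all-zeros matching of length $z_I$ whenever $z_I>|I|$. In the second, \emph{zeros outside}, symmetrically match the $i$-th zero of each gap and inside $I$ use an all-ones matching of length $|I|$ whenever $|I|>z_I$. Non-crossingness is preserved in both because the endpoints $x_I,y_I$ coincide in $s$ and $t$. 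Writing $\mathcal{I}^*_+:=\{I:z_I>(1+\varepsilon)|I|\}$ and $\mathcal{I}^*_-:=\{I:z_I<(1-\varepsilon)|I|\}$, these two matchings have lengths at least $L+\sum_{I\in\mathcal{I}^*_+}(z_I-|I|)$ and $Z+\sum_{I\in\mathcal{I}^*_-}(|I|-z_I)$ respectively.

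Averaging these bounds and using $|z_I-|I||>\varepsilon|I|$ for each $I\in\mathcal{I}^*$,
\[
\LCS(s,t)\ge\frac{L+Z}{2}+\frac{1}{2}\sum_{I\in\mathcal{I}^*}|z_I-|I||\ge\frac{N}{2}+\frac{\varepsilon}{2}\cdot\frac{\varepsilon^5}{10}L\ge\left(\frac{1}{2}+\frac{\varepsilon^{6}}{150}\right)N,
\]
using $L\ge N/3$ at the end. The only genuinely delicate point is the bit-position synchronization of intervals in $\mathcal{I}^*$ across $s$ and $t$, which is precisely why the statistics record these endpoints; once that synchronization is in hand, the non-crossingness of the combined matchings and the arithmetic reduce to routine bookkeeping.
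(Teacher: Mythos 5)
Your proof is correct and follows essentially the same approach as the paper: use $\mathcal{I}^{n_0}(s)=\mathcal{I}^{n_0}(t)$ together with the recorded bit-endpoints to decompose $s$ and $t$ into identically-placed imbalanced pieces and gaps, and then gain by matching the majority bit inside each imbalanced piece. The only divergence is in the final accounting—where the paper bounds each piece's LCS separately (Lemma~\ref{lem:trivial-0} for the imbalanced pieces, the trivial half-bound for the gaps) and sums, you instead build two explicit global matchings and average them, which yields the mildly sharper coefficient $\varepsilon/2$ in place of $\varepsilon/5$; both are more than enough for the stated bound.
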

\begin{proof}
 Let $n_0= n-200\gamma^{-3}\log n$.
  We may assume that $L\ge \frac{|s|}{3}$, or else $s$ and $t$ each have $\frac{2|s|}{3}$ zeros and $\LCS(s,t) \ge \frac{2}{3}|s| > (0.5+\frac{\varepsilon^{6}}{150})|s|$.
  Since $s$ and $t$ agree on statistics, we have $\mathcal{I}^{n_0}(s)=\mathcal{I}^{n_0}(t)$.
  Furthermore, as $\mathcal{I}^{n_0}(s)$ maximizes the sum $\sum_{I\in \mathcal{I}^{n_0}(s)}^{} |I|$, we have $\sum_{I\in \mathcal{I}^{n_0}(s)}^{} |I|\ge \sum_{I\in\mathcal{I}}^{} |I|\ge \frac{\varepsilon^5}{10} L$.
  Let $\mathcal{I}^{n_0}(s) = \mathcal{I}^{n_0}(t)=\{I_1,\dots,I_k\}$, with $I_1<I_2<\cdots < I_k$ (these intervals are pairwise disjoint so the order is the obvious one).
  We thus may write
  \begin{align*}
    s = s_0''s_1's_1''s_2'\cdots s_k's_k'',\qquad
    t = t_0''t_1't_1''t_2'\cdots t_k't_k''
  \end{align*}
  where $s_j' = s_{I_j}$ for $j\in [1,k]$, and substring $s_j''$ consists of the bits between the end of substring $s_{j}'$ (or the beginning of string $s$ if $j=0$) and the beginning of substring $s_{j+1}'$ (or the end of string $s$ if $j=k$), and the partition of $t$ is defined analogously.
  By the definitions of $s_{I_j}$ and $t_{I_j}$, for $j\in [1,k]$, $s_j'$ and $t_j'$ have the same number $|I_j|$ of ones, and for $j\in [0,k]$, $s_j''$ and $t_j''$ have the same number of ones as well.
  Further, since $s$ and $t$ agree on statistics, for all $j\in[1,k]$, substrings $s_{I_j}$ and $t_{I_j}$ start and end in the same positions in their respective strings $s$ and $t$.
  In particular, $s_j'$ and $t_j'$ have the same length for all $j\in[1,k]$, and $s_j''$ and $t_j''$ have the same length for all $j\in[0,k]$.

  For each $j\in [0,k]$, the substrings $s_j''$ and $t_j''$ have the same length and the same number of ones, so $\LCS(s_j'',t_j'')\ge |s_j''|/2$.
  Additionally, for $j\in [1,k]$ the number of zeros in substrings $s_j'$ and $t_j'$ are equal and not in $(1\pm \varepsilon)|s_j'|$.
  Hence, by Lemma~\ref{lem:trivial-0}, we have $\LCS(s_j',t_j')\ge (1/2 + \varepsilon/5)|s_j'|$.
  Therefore we have
  \begin{align*}
    \LCS(s,t)
    &\ge \sum_{j=1}^{k} \LCS(s_j',t_j')
    + \sum_{j=0}^{k} \LCS(s_j'',t_j'') \ \ge \sum_{j=1}^{k} \left(\frac{1}{2} + \frac{\varepsilon}{5}\right)|s_j'|
    + \sum_{j=0}^{k} \frac{|s_j''|}{2} \\
    &= \frac{|s|}{2} + \frac{\varepsilon}{5}\sum_{j=1}^{k} |s_j'|   \ \ge \frac{|s|}{2} + \frac{\varepsilon^{6} L}{50} 
    \ \ge \left(\frac{1}{2} + \frac{\varepsilon^6}{150}\right)|s|,
  \end{align*}
  where the last two inequalities used that $|s_j'| \ge |I_j|$, $\sum_{j=1}^k |I_j| \ge \tfrac{\eps^5 L}{10}$, and $L \ge \tfrac{|s|}{3}$.
\end{proof}

\subsection{Combining the arguments for the Imbalanced, Green, and Blue-Yellow cases}
We now prove the main technical lemma, which shows that two strings that agree on statistics have LCS beating the $1/2$ barrier.
This establishes Theorem~\ref{thm:main-intro} up to a pigeonhole argument and an assumption about the number of ones being a power of two.
The proof consists of piecing together (1) the Imbalanced case, when $s$ and $t$ have many substrings of Imbalanced type, Lemma~\ref{lem:main-0}, (2) the Green case, when $s$ and $t$ have many substrings of Green type, and (3) the Blue-Yellow case, when $s$ and $t$ have many substrings of Blue-Yellow type.
These three cases correspond to the three matching strategies stated in the Overview (Section~\ref{sec:overview}).

\begin{lemma}
  \label{lem:main-1}
  There exists an absolute constant $\delta=\frac{\varepsilon^6}{150}$ such that the following holds for $n$ sufficiently large.
  Let $s$ and $t$ be strings that each start with a one, have $L=2^n$ ones each, such that $s$ and $t$ agree on statistics, and $\rev(s)$ and $\rev(t)$ agree on statistics.
  Then $\LCS(s,t)\ge (0.5+\delta)|s|$.
\end{lemma}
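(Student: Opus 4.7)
The plan is a three-way case split into the Imbalanced, Green, and Blue-Yellow strategies, applied at an appropriate dyadic scale $m^*$. Set $n_0 = \max(0, n - 200\gamma^{-3}\log n)$. If $\sum_{I \in \mathcal{I}^{n_0}(s)}|I| \ge \varepsilon^5 L/10$, then since $\mathcal{I}^{n_0}(s) = \mathcal{I}^{n_0}(t)$ from the statistics agreement, Lemma~\ref{lem:main-0} directly gives the claim. Otherwise, each index $i$ with $b_s(i) \ge 2^{n_0}$ lies in a zero-rich (hence Imbalanced) interval of size $\ge 2^{n_0}$ by Lemma~\ref{lem:blue-0}; a greedy disjoint selection from these furnishes a valid subcollection of $\mathcal{I}^{n_0}(s)$ of total size at least $|\{i : b_s(i) \ge 2^{n_0}\}|$, so this count is less than $\varepsilon^5 L/10 \le \gamma^2 L$ (using $\varepsilon^5/10 = 10^{-31} < 10^{-30} = \gamma^2$). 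The same bound holds for $t$, $\rev(s)$, $\rev(t)$, because reversal preserves Imbalanced structure and the two pairwise statistics agreements give $\mathcal{I}^{n_0}(s) = \mathcal{I}^{n_0}(t)$ and $\mathcal{I}^{n_0}(\rev(s)) = \mathcal{I}^{n_0}(\rev(t))$.

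I then apply Lemma~\ref{lem:entropy} with $\beta = \gamma$ to each of $s, t, \rev(s), \rev(t)$. Since each application excludes at most $32\gamma^{-3}\log n$ scales out of the $\sim 150\gamma^{-3}\log n$ scales in $[n-150\gamma^{-3}\log n, n]$, I can fix a single $m^* \in [n-150\gamma^{-3}\log n, n-10-\log\varepsilon^{-5}]$ (which is nonempty for $n$ large) such that the conclusion holds for all four strings simultaneously: fewer than $3\gamma \cdot 2^{n-m^*}$ of the $w_{m^*,i}$ of each such $w$ fail to be $6\gamma$-Blue-flag-balanced. Moreover, fewer than $\varepsilon^5 \cdot 2^{n-m^*}/10$ of the $s_{m^*,i}$ (and similarly for $\rev(s)$) are of Imbalanced type, because each such dyadic substring has size $2^{m^*} \ge 2^{n_0}$ and would then contribute to $\mathcal{I}^{n_0}$ and exceed the Imbalanced bound.

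If at least $\tfrac{1}{5}\cdot 2^{n-m^*}$ of the $s_{m^*,i}$ are of Green type, set $Z = \{i : s_{m^*,i}\text{ is Green}\}$; the statistics agreement makes $(s_{m^*,i},t_{m^*,i})$ a Green pair for each $i \in Z$, with $|Z| > 2^{n-m^*}/10$, so Lemma~\ref{lem:green2} gives $\LCS(s,t) \ge (0.5 + \varepsilon^5/5000)|s|$. A symmetric argument combined with Lemma~\ref{lem:rev-lcs} handles the case where $\rev(s)$ has many Green-type substrings at scale $m^*$.

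Otherwise Blue-Yellow dominates for both orientations: at least $1 - \varepsilon^5/10 - 1/5 \ge 0.79$ fraction of $s_{m^*,\cdot}$ and of $\rev(s)_{m^*,\cdot}$ are Blue-Yellow type. Subtracting the $6\gamma$ balance-failure rate (by union bound over the two strings in each pair), the fraction of $j \in [2^{n-m^*}]$ with $(s_{m^*,j},t_{m^*,j})$ a Blue-Yellow pair is at least $0.78$, and likewise for $(\rev(s)_{m^*,j},\rev(t)_{m^*,j})$. Using Lemma~\ref{lem:rev}, the two requirements in Lemma~\ref{lem:blue3}---that both $(s_{m^*,2i-1},t_{m^*,2i-1})$ and $(\rev(s_{m^*,2i}),\rev(t_{m^*,2i}))$ are Blue-Yellow pairs---translate via $\rev(s_{m^*,2i}) = \rev(s)_{m^*,2^{n-m^*}+1-2i}$ into $2i-1 \in \tilde{A}$ and $2^{n-m^*}+1-2i \in \tilde{B}$, where $\tilde{A},\tilde{B}$ are the ``good'' sets of sizes $\ge 0.78\cdot 2^{n-m^*}$. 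A union bound over the $2^{n-m^*-1}$ odd indices yields $|Z| \ge 0.06 \cdot 2^{n-m^*} > 2^{n-m^*-1}/10$, and Lemma~\ref{lem:blue3} applies. The main bookkeeping obstacle is coordinating the forward and reversed Blue-Yellow conditions via parity and Lemma~\ref{lem:rev}, while keeping enough ``simultaneously good'' indices across all four strings at a single scale $m^*$; since $\varepsilon^6/150$ is the smallest of the three strategy surplus constants, the stated bound follows.
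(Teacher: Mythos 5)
Your overall plan tracks the paper's proof closely, and most of the pieces are sound: the use of $\mathcal{I}^{n_0}(s)=\mathcal{I}^{n_0}(t)$ as a pigeonhole to invoke Lemma~\ref{lem:main-0}, deriving the Blue-flag count bound for Lemma~\ref{lem:entropy} via the greedy imbalanced-interval cover, the per-scale application of Lemma~\ref{lem:entropy} to all four of $s,t,\rev(s),\rev(t)$ simultaneously, and the treatment of reversal indices via $\rev(s_{m^*,2i}) = \rev(s)_{m^*,2^{n-m^*}+1-2i}$ for the Blue-Yellow case. The Green case is also handled correctly.

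The gap is in your bound on the Imbalanced-type fraction at scale $m^*$, and it is consequential. You claim fewer than $\varepsilon^5\cdot 2^{n-m^*}/10$ of the $s_{m^*,i}$ are Imbalanced type, reasoning that each Imbalanced dyadic substring ``has size $2^{m^*}$ and would contribute to $\mathcal{I}^{n_0}$.'' But a substring $s_{m^*,i}$ being of Imbalanced \emph{type} only means it contains an imbalanced subinterval $J_i$ of size $\ge \varepsilon^5\cdot 2^{m^*}$; it does not mean $s_{m^*,i}$ is imbalanced in its entirety. So each Imbalanced-type index contributes only $\varepsilon^5 \cdot 2^{m^*}$, not $2^{m^*}$, to a disjoint imbalanced collection. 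From $\sum_{I\in\mathcal{I}^{n_0}(s)}|I| < \varepsilon^5 L/10$ you can therefore only conclude that the number of Imbalanced-type indices is less than $(\varepsilon^5 L/10)/(\varepsilon^5\cdot 2^{m^*}) = 2^{n-m^*}/10$, a factor $\varepsilon^{-5}$ larger than your claim. With the corrected bound, your case split no longer closes: if the Green fraction is below your threshold of $1/5$, the Blue-Yellow fraction is only guaranteed to be at least $1 - 1/10 - 1/5 = 0.7$, so the ``good'' sets $\tilde A, \tilde B$ have size $\ge (0.7-6\gamma)\cdot 2^{n-m^*}$, and the union bound over the $2^{n-m^*-1}$ parities gives $|Z| \ge (0.5 - 0.3 - 0.3 - O(\gamma))\cdot 2^{n-m^*} < 0$, which is vacuous.

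The fix is to lower the Green threshold from $1/5$ to $1/10$ (as the paper does in Case~2a): then Green $<1/10$ and Imbalanced $< 1/10$ together force the Blue-Yellow fraction above $4/5$, and the union bound gives $|Z| \ge (0.5 - 0.2 - 0.2 - O(\gamma))\cdot 2^{n-m^*} > 2^{n-m^*-1}/10$, which satisfies Lemma~\ref{lem:blue3}. (Alternatively, you could add an explicit ``Imbalanced at scale $m^*$'' case as the paper does; your derived bound of $1/10$ on the Imbalanced fraction is exactly the complement of the threshold used there.) Note that the paper avoids this pitfall by performing the $\gamma^2 L$ check on $b_s(i)$ directly for the entropy lemma hypothesis and then handling the scale-$m^*$ Imbalanced case as a separate Case~1a rather than trying to infer it is rare; your idea of deducing the latter from a bound on $\sum|\mathcal{I}^{n_0}(s)|$ is a valid (and slightly tighter) reorganization, but only once the quantitative bound is corrected.
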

\begin{proof}
  First, suppose that at least $\gamma^2 L$ values of $i\in[L]$ satisfy $b_s(i)\ge 2^{n-200\gamma^{-3}\log n}$. 
  Define $1\le i_1<\dots<i_d$ such that $i_k$ is the smallest index such that $b_s(i_k)\ge 2^{n-200\gamma^{-3}\log n}$ and $i_k\ge i_{k-1}+b_s(i_{k-1})$ if $k>1$, and $d$ is the largest index such that $i_d$ is well-defined.
  By the definition of $b_s(i_k)$, index $i_k$ is a Blue $b_s(i_k)$-flag in $s$, so for the interval $I_k\defeq [i_k,\min(i_k+b_s(i_k)-1,L)]$, substring $s_{I_k}$ is imbalanced by Lemma~\ref{lem:blue-0}.
  Furthermore, since $i_k+b_s(i_k)\le i_{k+1}$ for $k=1,\dots,d-1$, we have $I_1,\dots,I_{d}$ are pairwise disjoint.
  Lastly, by minimality of each $i_k$, each index $i\in[L]$ with $b_s(i)\ge 2^{n-200\gamma^{-3}\log n}$ is in some interval $I_k$. 
  Thus, $\sum_{}^{} |I_k|\ge \gamma^2 L > \frac{\varepsilon^5}{10}$.
  Thus, we may apply Lemma~\ref{lem:main-0} to strings $s$ and $t$, giving $\LCS(s,t)\ge (0.5+\delta)|s|$.
  Hence, we may assume for the rest of the argument that
  \begin{itemize}
  \item At most $\gamma^2 L$ values of $i\in [L]$ satisfy $b_s(i)\ge 2^{n-200\gamma^{-3}\log n}$.
  \end{itemize}
  Similarly, we may assume 
  \begin{itemize}
  \item At most $\gamma^2 L$ values of $i\in [L]$ satisfy $b_{\rev(s)}(i) \ge 2^{n-200\gamma^{-3}\log n}$ (applying Lemma~\ref{lem:main-0} to $\rev(s)$ and $\rev(t)$),
  \item At most $\gamma^2 L$ values of $i\in [L]$ satisfy $b_{t}(i) \ge 2^{n-200\gamma^{-3}\log n}$ (applying Lemma~\ref{lem:main-0} to $t$ and $s$), and
  \item At most $\gamma^2 L$ values of $i\in [L]$ satisfy $b_{\rev(t)}(i) \ge 2^{n-200\gamma^{-3}\log n}$ (applying Lemma~\ref{lem:main-0} to $\rev(t)$ and $\rev(s)$). 
  \end{itemize}

  Hence, we may apply the Substring Blue-flag-balance Lemma, Lemma~\ref{lem:entropy}, to each of the strings $s$, $\rev(s)$, $t$, and $\rev(t)$ with $\beta = \gamma$.
  This shows the existence of some scale $m^*$ with $n-10-\log\delta^{-1}\ge m^*\ge n-150\gamma^{-3}\log n$ such that the following hold:
  \begin{itemize}
  \item For at least $(1-3\gamma) \cdot 2^{n-m^*}$ values of $i\in[2^{n-m^*}]$, the substring $s_{m^*,i}$ is $6\gamma$-Blue-flag-balanced,
  \item For at least $(1-3\gamma) \cdot 2^{n-m^*}$ values of $i\in[2^{n-m^*}]$, the substring $\rev(s_{m^*,i})$ is $6\gamma$-Blue-flag-balanced,
  \item For at least $(1-3\gamma) \cdot 2^{n-m^*}$ values of $i\in[2^{n-m^*}]$, the substring $t_{m^*,i}$ is $6\gamma$-Blue-flag-balanced, and
  \item For at least $(1-3\gamma) \cdot 2^{n-m^*}$ values of $i\in[2^{n-m^*}]$, the substring $\rev(t_{m^*,i})$ is $6\gamma$-Blue-flag-balanced.
  \end{itemize}
  Indeed, for each bullet, Lemma~\ref{lem:entropy} implies there are at most $32\gamma^{-3}\log n$ values of $m^*$ for which the condition does not hold, so among $150\gamma^{-3}\log n - 10-\log\delta^{-1} > 128\gamma^{-3}\log n$ values of $m^*$, some $m^*$ allows all four conditions to hold.
  Fix this $m^*$ and let
  \[ 
  L^*\defeq 2^{m^*} \ . 
  \]
    Since the number of ones in $s$ and $t$ are a power of two, $s$ and $t$ agree on statistics, and $m^* > n_0$, where $n_0$ is from Definition~\ref{def:statistics}, we have $s=s_{m^*,1}\cdots s_{m^*,2^{n-m^*}}$ and $t=t_{m^*,1},\dots,t_{m^*,2^{n-m^*}}$, where, for all $i=1,\dots,2^{n-m^*}$, substrings $s_{m^*,i}$ and $t_{m^*,i}$ have the same number $L^*$ of ones and the also the same number of zeros, and thus have the same length.
  Similarly, we may write $\rev(s)=\rev(s_{m^*,2^{n-m^*}})\cdots \rev(s_{m^*,1})$ and $\rev(t)=\rev(t_{m^*,2^{n-m^*}})\cdots \rev(t_{m^*,1})$.
  
  Since $n$ is sufficiently large, for all $i\in[2^{n-m^*}]$, substrings $s_{m^*,i}$ and $\rev(s_{m^*,i})$ have types by Lemma~\ref{lem:type}.
  Let $Z_1$ (resp. $\bar Z_1$) be the set of $i\in[2^{n-m^*}]$ such that substring $s_{m^*,i}$ (resp. $\rev(s_{m^*,i})$) is Imbalanced.
  Let $Z_2$ (resp. $\bar Z_2$) be the set of $i\in[2^{n-m^*}]$ such that substring $s_{m^*,i}$ (resp. $\rev(s_{m^*,i})$) is $\ell$-Green for some $\ell$.
  Let $Z_3$ (resp. $\bar Z_3$) be the set of $i\in[2^{n-m^*}]$ such that substring $s_{m^*,i}$ (resp. $\rev(s_{m^*,i})$) is $m$-Blue-Yellow for some $m$.
  Since $|Z_1|+|Z_2|+|Z_3|=|\bar Z_1|+|\bar Z_2|+|\bar Z_3|= 2^{n-m^*}$, we have the following cases covering all possibilities.

\smallskip
  \textbf{Case 1a.} $|Z_1|\ge 2^{n-m^*}/10$. 
  In this case, for each $i\in Z_1$, because $s_{m^*,i}$ is type Imbalanced, there exists some interval $J_i\subset I_{m^*,i}$ with $|J_i|\ge \varepsilon^{5} L^* > 2^{n-200\gamma^{-3}\log n}$ such that $s_{J_i}$ is imbalanced. 
  Since the intervals $I_{m^*,i}$ are pairwise disjoint, the intervals $J_i$ are pairwise disjoint.
  Then setting $\mathcal{I}' = \{J_i:i\in Z_1\}$, we have that $\sum_{J\in\mathcal{I}'}^{} |J|\ge \varepsilon^{5} L^* \cdot |Z_1| = \frac{\varepsilon^{5}}{10}\cdot L$. 
  Hence, by Lemma~\ref{lem:main-0}, we have $\LCS(s,t)\ge (0.5+\frac{\varepsilon^6}{150})|s| = (0.5+\delta)|s|$.

\smallskip
  \textbf{Case 1b.} $|\bar Z_1|\ge 2^{n-m^*}/10$.
  By an identical argument to Case 1a, we can show $\LCS(\rev(s),\rev(t))\ge (0.5+\delta)|s|$, which implies that $\LCS(s,t)\ge (0.5+\delta)|s|$.

\medskip
  \textbf{Case 2a.} $|Z_2|\ge 2^{n-m^*}/10$.
  Since $m^*\le n-10-\log\varepsilon^{-5}$ by definition of $m^*$, we may apply Lemma~\ref{lem:green2} to strings $s$ and $t$ with subset $Z\subset[2^{n-m^*}]$.
  By Lemma~\ref{lem:green2}, we have $\LCS(s,t)\ge (0.5+\frac{\varepsilon^5}{5000})|s| > (0.5+\delta)|s|$. 

\smallskip
  \textbf{Case 2b.} $|\bar Z_2|\ge 2^{n-m^*}/10$.
  By an identical argument to Case 2a, we can show $\LCS(\rev(s),\rev(t))\ge (0.5+\delta)|s|$, which implies that $\LCS(s,t)\ge (0.5+\delta)|s|$.

\medskip
  \textbf{Case 3.} $|Z_3|\ge \frac{4}{5}\cdot 2^{n-m^*}$ and $|\bar Z_3|\ge \frac{4}{5}\cdot 2^{n-m^*}$.
  Let $Z_3'$ be the set of $i\in[2^{n-m^*-1}]$ such that the following hold:
  \begin{itemize}
  \item Substring $s_{m^*,2i-1}$ is $6\gamma$-Blue-flag-balanced.
  \item Substring $t_{m^*,2i-1}$ is $6\gamma$-Blue-flag-balanced.
  \item Substring $\rev(s_{m^*,2i})$ is $6\gamma$-Blue-flag-balanced.
  \item Substring $\rev(t_{m^*,2i})$ is $6\gamma$-Blue-flag-balanced.
  \item We have $2i-1\in Z_3$.
  \item We have $2i\in \bar Z_3$.
  \end{itemize}
  By choice of $m^*$, the first four conditions above each fail for at most $3\gamma \cdot 2^{n-m^*}$ values of $i\in [2^{n-m^*-1}]$.
  Since $|Z_3|\ge \frac{4}{5}\cdot 2^{n-m^*}$, the fifth condition fails for at most $\frac{1}{5}\cdot 2^{n-m^*}$ values of $i$, and similarly the last condition fails for at most $\frac{1}{5}\cdot 2^{n-m^*}$ values of $i$.
  Since there are $\frac{1}{2}\cdot 2^{n-m^*}$ values of $i\in[2^{n-m^*-1}]$, we have that $Z_3'$ has size at least $(\frac{1}{2}-12\gamma - \frac{2}{5}) \cdot 2^{n-m^*} > \frac{1}{10}\cdot 2^{n-m^*-1}$ (recall $\gamma=10^{-15}$ is very small).

  Fix $i\in Z_3'$.  We claim that strings $s$ and $t$ with parameter $m^*$ and set $Z$ satisfy the setup of Lemma~\ref{lem:blue3}.
  The bound on $m^*\le n-10-\log\varepsilon^{-1}$ follows from the definition of $m^*$.
  We thus need to show that, for all $i\in Z_3'$, the pairs $(s_{m^*,2i-1},t_{m^*,2i-1})$ and $(\rev(s_{m^*,2i}),\rev(t_{m^*,2i}))$ are a Blue-Yellow pairs.
  Since $2i-1\in Z_3$, there exists some integer $m$ such that substring $s_{m^*,2i-1}$ is type $m$-Blue-Yellow, and since $s$ and $t$ agree on statistics, substring $t_{m^*,2i-1}$ is also type $m$-Blue-Yellow.
  Since $i\in Z_3'$, we have that strings $s_{m^*,2i-1}$ and $t_{m^*,2i-1}$ are both $6\gamma$-Blue-flag-balanced, so $(s_{m^*,2i-1},t_{m^*,2i-1})$ form a Blue-Yellow pair.
  Similarly, since $2i\in \bar Z_3$ and $\rev(s)$ and $\rev(t)$ agree on statistics, there exists some integer $m'$ such that substrings $\rev(s_{m^*,2i}) = \rev(s)_{m^*,2^{n-m^*}-(2i-1)}$ and $\rev(t_{m^*,2i})$ have type $m'$-Blue-Yellow.
  Since $i\in Z_3'$, we have that strings $\rev(s_{m^*,2i})$ and $\rev(t_{m^*,2i})$ are both $6\gamma$-Blue-flag-balanced, so $(\rev(s_{m^*,2i}),\rev(t_{m^*,2i}))$ form a Blue-Yellow pair, as desired.
  Thus, by Lemma~\ref{lem:blue3}, we have
  \[
    \LCS(s,t)
    \ge \left( 0.5 + 0.0001\varepsilon \right)|s|
    > \left( 0.5+\delta \right)|s|.
  \]
  In all cases we have shown that $\LCS(s,t)\ge (0.5+\delta)|s|$, proving the lemma.
\end{proof}

\subsection{Finishing the proof}

To prove the main theorem, we need to find two strings that agree on statistics, and remove the assumption that the number of ones is a power of two.
\begin{theorem}
  \label{thm:main}
  There exists absolute constants $A>0$ and $\delta=\frac{\varepsilon^6}{900}$ such that the following holds for $N$ sufficiently large.
  Let $C\subset\{0,1\}^N$ be a code with at least $2^{\log^A N}$ strings. 
  Then $C$ contains two strings $s$ and $t$ such that $\LCS(s,t)\ge (0.5+\delta)N$. 
\end{theorem}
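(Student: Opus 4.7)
The plan is to derive Theorem~\ref{thm:main} from Lemma~\ref{lem:main-1} by (i) extracting from each codeword a canonical substring meeting the hypotheses of Lemma~\ref{lem:main-1}; (ii) pigeonholing on the number of ones, the starting location of this substring, and both forward and reverse statistics (Definition~\ref{def:statistics}) to find two codewords whose canonical substrings are compatible; and (iii) stitching the $\LCS$ advantage of the canonical substrings back to the full codewords using trivial matchings on the prefix and suffix.

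First, pigeonhole on the number of ones $M\in\{0,\dots,N\}$ in each codeword; if $M\notin[N/3,2N/3]$, then any two codewords already have a common subsequence of length at least $2N/3>(1/2+\delta)N$ via the majority bit, so we are done. Otherwise, set $n\defeq\floor{\log_2 M}$ and $L\defeq 2^n$, so $L\in(M/2,M]$ and $L\ge N/6$. For each surviving codeword $w$, let $p_1(w),p_L(w)$ be the positions in $w$ of the first and $L$th one-bits, and let $w^*$ be the substring of $w$ from position $p_1(w)$ through position $p_L(w)$ inclusive. Then $w^*$ begins and ends with a one and has exactly $L=2^n$ ones, so it satisfies the hypotheses of Lemma~\ref{lem:main-1} (note that $\rev(w^*)$ is defined and also begins with a one). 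Next, pigeonhole on the tuple $(p_1(w),\,\text{statistics of }w^*,\,\text{statistics of }\rev(w^*))$; by Lemma~\ref{lem:stats} this tuple takes at most $N\cdot 2^{\poly\log N}=2^{\poly\log N}$ values, so choosing $A$ larger than the exponent of this polylog guarantees two codewords $s,t$ with identical signature. Since the statistics determine $|w^*|$, we additionally have $p_L(s)=p_L(t)$; moreover $s^*,t^*$ agree on statistics and $\rev(s^*),\rev(t^*)$ agree on statistics. By Lemma~\ref{lem:main-1} applied to $s^*,t^*$ (valid because $n$ is large for $N$ large),
\[
\LCS(s^*,t^*)\ge\left(\tfrac12+\tfrac{\varepsilon^6}{150}\right)|s^*|.
\]

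Finally, I combine three non-crossing matchings between $s$ and $t$. The prefixes $s[1,p_1-1]$ and $t[1,p_1-1]$ are zero-only strings of the same length $a\defeq p_1-1$ (since $w^*$ starts at the first one-bit of $w$) and match completely, contributing $a$. The middles $s^*,t^*$ contribute at least $(1/2+\varepsilon^6/150)|s^*|$ by the bound above. The suffixes $s[p_L+1,N]$ and $t[p_L+1,N]$ have the same length $c\defeq N-p_L$ and both contain exactly $M-L$ ones, so by majority-bit matching in the suffix we get $\LCS(\text{suffix}_s,\text{suffix}_t)\ge\max(M-L,\,c-(M-L))\ge c/2$. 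Summing the three contributions and using $a+|s^*|+c=N$ together with $|s^*|\ge L\ge N/6$,
\[
\LCS(s,t)\ge a+\left(\tfrac12+\tfrac{\varepsilon^6}{150}\right)|s^*|+\tfrac{c}{2}=\tfrac{N}{2}+\tfrac{a}{2}+\tfrac{\varepsilon^6}{150}|s^*|\ge\tfrac{N}{2}+\tfrac{\varepsilon^6}{900}N=\left(\tfrac12+\delta\right)N,
\]
matching the claimed bound with $\delta=\varepsilon^6/900$. The only real subtlety is ensuring the pigeonhole signature space has size $2^{\poly\log N}$ so that $2^{\log^A N}$ codewords beat it; this is exactly the content of Lemma~\ref{lem:stats}, and no further technical input beyond that and Lemma~\ref{lem:main-1} is needed. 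The factor-of-$6$ loss between $\delta$ and the constant $\varepsilon^6/150$ of Lemma~\ref{lem:main-1} is the price of passing from $s^*$ (which may be as short as $N/6$) back up to the full string $s$.
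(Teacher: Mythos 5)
Your proof is correct and follows essentially the same route as the paper: pigeonhole on the number of ones and on the (forward and reverse) statistics tables to produce two compatible codewords, apply Lemma~\ref{lem:main-1} to a canonical substring with a power-of-two number of ones, and stitch the resulting LCS advantage back to the full codewords with trivial prefix/suffix matchings, losing the factor of $6$ that turns $\varepsilon^6/150$ into $\varepsilon^6/900$. The only cosmetic deviation is that the paper first reduces to codewords that all start with a one (so that $s_{n,1}$ is a prefix and a two-part decomposition suffices), whereas you pigeonhole additionally on $p_1(w)$ and use a three-part decomposition; both handle the leading-zeros issue equally well.
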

\begin{proof}
  By Lemma~\ref{lem:stats}, there exists a constant $A'$ such that the number of possible tables of statistics for a string of length $N$ is at most $2^{O(\log^{A'}N)}$.
  We pick $A=A'+1$.
  By removing at most half of the elements of $C$, we may assume that every string in $C$ starts with the same bit, and without loss of generality we may assume that every string starts with a one.
  Let $2^n$ be the largest power of two less than $N/3$, so that $2^n\ge N/6$.
  If there exist two strings $s$ and $t$ in $C$ with less than $2^n$ ones, then we have $\LCS(s,t)\ge 2N/3$, so we may assume that all but at most one string in $C$ has at least $2^n$ ones.

  By the pigeonhole principle, there exist two strings $s,t \in C$ such that
  \begin{itemize}
  \item $s$ and $t$ have the same number $L\ge 2^n$ of ones,
  \item $s_{n,1}$ and $t_{n,1}$ agree on statistics, and
  \item $\rev(s_{n,1})$ and $\rev(t_{n,1})$ agree on statistics.
  \end{itemize}
  Indeed, the total number of tables of statistics for each of substrings $s_{n,1}$ and $\rev(s_{n,1})$ is at most $2^{O(\log^{A'}N)}$, so the total number of pigeonholes here is at most $2^{O(\log^{A'}N)} < 2^{\log^A N} = |C|$ if $N$ is sufficiently large.

  Let $s=s_{n,1}s'$ and $t=t_{n,1}t'$.
  Since strings $s$ and $t$ have the same length and same number of ones, and substrings $s_{n,1}$ and $t_{n,1}$ agree on statistics, we have that the suffixes $s'$ and $t'$ have the same length and the same number of ones as well.
  Thus, $\LCS(s',t')\ge |s'|/2$.
  By applying Lemma~\ref{lem:main-1} to strings $s_{n,1}$ and $t_{n,1}$, we have
  \[
  \LCS(s_{n,1},t_{n,1})\ge \left(0.5+\frac{\varepsilon^6}{150}\right)|s_{n,1}|.
  \]
  Hence, we have
  \[
    \LCS(s,t) \ge \LCS(s_{n,1},t_{n,1}) + \LCS(s',t') \ge \left(0.5+\frac{\varepsilon^6}{150}\right)|s_{n,1}| + 0.5|s'| = 0.5N + \frac{\varepsilon^6}{150} |s_{n,1}| \ge (0.5+\delta)N,
  \]
  as desired. In the last inequality, we used that $|s_{n,1}|\ge 2^n \ge N/6$.
\end{proof}

\section{Conclusion and open questions}\label{sec:concluding}

We gave the first non-trivial upper bound the zero-rate threshold for bit-deletions, showing that it is strictly less than $\sfrac12$. We achieved this via a structural lemma that classifies the oscillation patterns of $0$'s and $1$'s in a balanced string, and then exploiting it to carefully orchestrate a noticeable advantage over just matching $1$'s by switching to matching $0$'s at judicious points.

Now that we finally have a resolution to the first order question of whether $p^{\thr}_{\del}=\sfrac12$, it opens up the opportunity to address several related questions. We list some salient directions for future work below.

\begin{enumerate}
    \item An obvious and major challenge is to determine the exact value of the zero-rate threshold for bit-deletions which remains unknown. Our current state of knowledge is $\sqrt{2}-1 \le p^{\thr}_{\del} \le 1/2-\delta_0$, for $\delta_0 > 0$ given by Theorem~\ref{thm:main-intro}. The true value of $\sfrac12-p^{\thr}_{\del}$ is presumably much bigger than the minuscule $\delta_0$ our proof yields.
We made no attempt to optimize $\delta_0$, but it is likely to be very small regardless. 
We hazard a guess that the true value might be closer to the lower bound. More audaciously, one might even postulate that $p^{\thr}_{\del}=\sqrt{2}-1$ since this threshold appears to be the limit of the techniques in the spirit of \cite{BGH17}.

\item As an interesting and necessary step toward improving the upper bound on $p^{\thr}_{\del}$, can one obtain better upper bounds on the \emph{span} of a large enough code (which are implied by, but possibly easier to establish than, a lower bound on LCS)? Here, we define the \emph{span} of two strings $s$ and $t$ as the minimum ratio $(|s'| + |t'|)/\LCS(s',t')$ over all pairs of (contiguous) substrings $s'\subseteq s$ and $t'\subseteq t$ which are of lengths $|s'|\ge \Omega(|s|)$, $|t'|\ge \Omega(|t|)$, and the span of a code is the minimum span between any two distinct strings in the code.
 Note that if $C$ has span $\alpha$, then the LCS of any two distinct strings in $C$ is at most $2N/\alpha$. The codes of \cite{BGH17} are based on the construction of codes (a variant of the Bukh-Ma code) of growing size with  span at least $2+\sqrt{2}$.
Prior to our work, no non-trivial upper bound (bounded away from the trivial limit of $4$) was known on the span of positive-rate codes.

We point out that using our techniques, proving that there exist two codewords with span at most $4-\delta_0$ is easier than proving there exist two codewords with LCS at least $(1/2+\delta)N$.
To show span at most $4-\delta_0$, matched flags do not have to be at similar locations in the two strings, so we have more flexibility with our random shifting argument. In particular, we can apply the structure lemma to entire codewords rather than dyadic substrings as we do in Lemma~\ref{lem:main-1}, so we do not need to  combine LCS in prefixes/suffixes with Lemma~\ref{lem:shift}. Furthermore, in the Blue-Yellow case, it is okay to use an imbalanced matching where say, we match Blue flags only in $s$, consuming more ones in $t$, and hence we do not need the string regularity and string reversal $\rev(\cdot)$ arguments which were used to ensure a balanced matching. 

\item Our quasi-polynomial in $N$ upper bound on the size of codes $C \subset \{0,1\}^N$ with $\LCS(C) \le (\sfrac12+\delta_0)N$ can likely be improved with some more care in the argument, though we settled for it for sake of simplicity. 
We conjecture that in fact $|C| \le O(\log N)$.
This would be tight, as the Bukh-Ma code has size $\Omega(\log N)$ and $\LCS(C)\le (\sfrac12+\delta_0)N$.
As evidence towards this conjecture, our techniques can be used to show that $|C|\le O(\log N)$ when any two codewords have span at most $4-\delta_0$, by following the sketch in the previous item.

\item  For the $q$-ary alphabet, the $q$-ary codes in \cite{BGH17} correct a fraction of deletions approaching $1 - \tfrac{2}{q+\sqrt{q}}$, and thus we have 
\begin{equation}
\label{eq:q-ary-bounds}
1- \tfrac{2}{q+\sqrt{q}} \le p^{\thr}_{\del}(q) \le 1-\tfrac{1+2\delta_0}{q} \ .
\end{equation}
An interesting question is to determine the infimum of constants $c$ such that $p^{\thr}_{\del}(q) \ge 1-c/q$ for large enough $q$. The bounds in \eqref{eq:q-ary-bounds} imply that $c \in [1+2\delta_0,2]$.

\item In the list-decoding model, there are codes of rate $\poly(\eps)$ which can be list-decoded from a fraction $1/2-\eps$ of deletions with list-size $\poly(1/\eps)$. Can one prove a lower bound $L(\eps)$ on the list-size for list decoding from $(1/2-\eps)$ fraction of deletions such that $L(\eps) \to \infty$ as $\eps \to 0$? What is the optimal growth rate of the list-size as a function of $\eps$? (For correcting bit-flips, the optimal list-size is known to be $\Theta(1/\eps^2)$~\cite{blinovsky,GV10}.)

\item A {\it pair of twins} in a string $s$ is a pair of two disjoint subsequences (recall that in our language {\it subsequences} of strings are not necessarily contiguous, unlike substrings) of $s$ which are identical. A natural question to consider is: what is the length $t(N)$ of the longest pair of twins guaranteed to exist in any $s\in \{0,1\}^N$? This question is relevant for two reasons. First, the question of finding twins within a single string is closely related to the problem of finding longest common subsequences between distinct strings. Second, the twins problem was solved asymptotically by~\cite{APP13} using the regularity technique, which is also one of the ingredients in our work. It is now known that
\[
\frac{N}{2} - O\Big( \frac{N (\log \log N)^{1/4}}{(\log N)^{1/4}}\Big) \le t(N) \le \frac{N}{2} - \Omega(\log N),
\]
and it would be interesting to determine the exact growth rate of the lower-order term.
\end{enumerate}

\noindent {\bf Acknowledgments.} The authors are grateful to Jacob Fox for helpful conversations and for bringing \cite{APP13} to our attention.

\bibliographystyle{alpha}
\bibliography{deletions}

\newcommand{\etalchar}[1]{$^{#1}$}
\begin{thebibliography}{GMR{\etalchar{+}}21}

\bibitem[ACK{\etalchar{+}}19]{ACKPS19}
Sergey Avgustinovich, Julien Cassaigne, Juhani Karhum{\"a}ki, Svetlana
  Puzynina, and Aleksi Saarela.
\newblock On abelian saturated infinite words.
\newblock {\em Theoretical Computer Science}, 792:154--160, 2019.

\bibitem[APP13]{APP13}
Maria Axenovich, Yury Person, and Svetlana Puzynina.
\newblock A regularity lemma and twins in words.
\newblock {\em Journal of Combinatorial Theory, Series A}, 120(4):733--743,
  2013.

\bibitem[BGH17]{BGH17}
Boris Bukh, Venkatesan Guruswami, and Johan H{\aa}stad.
\newblock An improved bound on the fraction of correctable deletions.
\newblock {\em {IEEE} Trans. Inf. Theory}, 63(1):93--103, 2017.

\bibitem[BGZ18]{BGZ18}
Joshua Brakensiek, Venkatesan Guruswami, and Samuel Zbarsky.
\newblock Efficient low-redundancy codes for correcting multiple deletions.
\newblock {\em {IEEE} Trans. Inf. Theory}, 64(5):3403--3410, 2018.

\bibitem[Bli86]{blinovsky}
Vladimir~M. Blinovsky.
\newblock Bounds for codes in the case of list decoding of finite volume.
\newblock {\em Problems of Information Transmission}, 22(1):7--19, 1986.

\bibitem[BM14]{bukh_ma}
Boris Bukh and Jie Ma.
\newblock Longest common subsequences in sets of words.
\newblock {\em SIAM J. Discrete Math.}, 28(4):2042--2049, 2014.

\bibitem[BZ16]{BZ16}
Boris Bukh and Lidong Zhou.
\newblock Twins in words and long common subsequences in permutations.
\newblock {\em Israel Journal of Mathematics}, 213(1):183--209, 2016.

\bibitem[CJLW18]{CJLW18}
Kuan Cheng, Zhengzhong Jin, Xin Li, and Ke~Wu.
\newblock Deterministic document exchange protocols, and almost optimal binary
  codes for edit errors.
\newblock In {\em Proceedings of the 59th {IEEE} Annual Symposium on
  Foundations of Computer Science}, pages 200--211, 2018.

\bibitem[CR21]{CR-survey}
Mahdi Cheraghchi and Jo{\~{a}}o~L. Ribeiro.
\newblock An overview of capacity results for synchronization channels.
\newblock {\em {IEEE} Trans. Inf. Theory}, 67(6):3207--3232, 2021.

\bibitem[CS75]{CS75}
V{\'a}cl{\'a}v Chvatal and David Sankoff.
\newblock Longest common subsequences of two random sequences.
\newblock {\em Journal of Applied Probability}, pages 306--315, 1975.

\bibitem[GHS20]{GHS-stoc20}
Venkatesan Guruswami, Bernhard Haeupler, and Amirbehshad Shahrasbi.
\newblock Optimally resilient codes for list-decoding from insertions and
  deletions.
\newblock In {\em Proccedings of the 52nd Annual {ACM} Symposium on Theory of
  Computing}, pages 524--537, 2020.

\bibitem[GL16]{GL-isit16}
Venkatesan Guruswami and Ray Li.
\newblock Efficiently decodable insertion/deletion codes for high-noise and
  high-rate regimes.
\newblock In {\em {IEEE} International Symposium on Information Theory, {ISIT}
  2016, Barcelona, Spain, July 10-15, 2016}, pages 620--624. {IEEE}, 2016.

\bibitem[GL20]{GL-oblivious}
Venkatesan Guruswami and Ray Li.
\newblock Coding against deletions in oblivious and online models.
\newblock {\em {IEEE} Trans. Inf. Theory}, 66(4):2352--2374, 2020.

\bibitem[GMR{\etalchar{+}}21]{GMRSW21}
Venkatesan Guruswami, Jonathan Mosheiff, Nicolas Resch, Shashwat Silas, and
  Mary Wootters.
\newblock Sharp threshold rates for random codes.
\newblock In {\em 12th Innovations in Theoretical Computer Science Conference},
  volume 185 of {\em LIPIcs}, pages 5:1--5:20. Schloss Dagstuhl -
  Leibniz-Zentrum f{\"{u}}r Informatik, 2021.

\bibitem[GV10]{GV10}
Venkatesan Guruswami and Salil~P. Vadhan.
\newblock A lower bound on list size for list decoding.
\newblock {\em {IEEE} Trans. Inf. Theory}, 56(11):5681--5688, 2010.

\bibitem[GW17]{GW17}
Venkatesan Guruswami and Carol Wang.
\newblock Deletion codes in the high-noise and high-rate regimes.
\newblock {\em {IEEE} Trans. Information Theory}, 63(4):1961--1970, 2017.

\bibitem[Hae19]{Haeupler19}
Bernhard Haeupler.
\newblock Optimal document exchange and new codes for insertions and deletions.
\newblock In {\em Proceedings of the 60th {IEEE} Annual Symposium on
  Foundations of Computer Science}, pages 334--347, 2019.

\bibitem[HKPS21]{HKP21}
Hiep H{\`a}n, Marcos Kiwi, and Mat{\'\i}as Pavez-Sign{\'e}.
\newblock Quasi-random words and limits of word sequences.
\newblock In {\em Latin American Symposium on Theoretical Informatics}, pages
  491--503. Springer, 2021.

\bibitem[HS18]{HS17}
Bernhard Haeupler and Amirbehshad Shahrasbi.
\newblock Synchronization strings: explicit constructions, local decoding, and
  applications.
\newblock In {\em S{TOC}'18---{P}roceedings of the 50th {A}nnual {ACM} {SIGACT}
  {S}ymposium on {T}heory of {C}omputing}, pages 841--854. ACM, New York, 2018.

\bibitem[HS21]{HS-survey}
Bernhard Haeupler and Amirbehshad Shahrasbi.
\newblock Synchronization strings and codes for insertions and deletions - {A}
  survey.
\newblock {\em {IEEE} Trans. Inf. Theory}, 67(6):3190--3206, 2021.

\bibitem[KLM04]{KLM}
Marcos Kiwi, Martin Loebl, and Ji\u{r}\'{i} Matou\u{s}ek.
\newblock Expected length of the longest common subsequence for large
  alphabets.
\newblock {\em Advances in Mathematics}, 197:480--498, November 2004.

\bibitem[KMTU11]{KMTU11}
Ian~A Kash, Michael Mitzenmacher, Justin Thaler, and Jonathan Ullman.
\newblock On the zero-error capacity threshold for deletion channels.
\newblock In {\em 2011 Information Theory and Applications Workshop}, pages
  1--5. IEEE, 2011.

\bibitem[Lev66]{Lev66}
Vladmir~I. Levenshtein.
\newblock Binary codes capable of correcting deletions, insertions and
  reversals.
\newblock {\em Soviet Physics Dokl. (English Translation)}, 10(8):707--710,
  1966.

\bibitem[Lue09]{L09}
George~S Lueker.
\newblock Improved bounds on the average length of longest common subsequences.
\newblock {\em Journal of the ACM (JACM)}, 56(3):1--38, 2009.

\bibitem[Mit09]{Mitzenmacher-survey}
Michael Mitzenmacher.
\newblock A survey of results for deletion channels and related synchronization
  channels.
\newblock {\em Probability Surveys}, 6:1--33, 2009.

\bibitem[SB21]{SimaB21}
Jin Sima and Jehoshua Bruck.
\newblock On optimal $k$-deletion correcting codes.
\newblock {\em {IEEE} Trans. Inf. Theory}, 67(6):3360--3375, 2021.

\bibitem[SGB20]{SimaGB20}
Jin Sima, Ryan Gabrys, and Jehoshua Bruck.
\newblock Syndrome compression for optimal redundancy codes.
\newblock In {\em {IEEE} International Symposium on Information Theory}, pages
  751--756, 2020.

\bibitem[SZ99]{SZ99}
Leonard~J. Schulman and David Zuckerman.
\newblock Asymptotically good codes correcting insertions, deletions, and
  transpositions.
\newblock {\em IEEE Transactions on Information Theory}, 45(7):2552--2557,
  1999.

\bibitem[Tsy09]{pinsker-ref}
Alexandre~B. Tsybakov.
\newblock {\em Introduction to nonparametric estimation}.
\newblock Springer Series in Statistics. Springer, New York, 2009.

\bibitem[Ull67]{U67}
Jeffrey~D. Ullman.
\newblock On the capabilities of codes to correct synchronization errors.
\newblock {\em IEEE Transactions on Information Theory}, 13(1):95--105, 1967.

\bibitem[Wan15]{Wang15}
Carol Wang.
\newblock {\em Beyond unique decoding: topics in error-correcting codes}.
\newblock PhD thesis, Carnegie Mellon University, 2015.

\bibitem[ZBJ20]{ZBJ20}
Yihan Zhang, Amitalok~J Budkuley, and Sidharth Jaggi.
\newblock Generalized list decoding.
\newblock In {\em 2020 Information Theory and Applications Workshop (ITA)},
  pages 51--1. IEEE, 2020.

\end{thebibliography}

\end{document}